\documentclass [letterpaper,10pt]{article}
\usepackage{notation}
\usepackage[left=1in,top=1in,right=1in,bottom=1.35in,letterpaper]{geometry}
\title{Low-rank Matrix Completion in a General Non-orthogonal Basis \thanks{This work is supported in part by NSF DMS--1522645 and an NSF Career Award DMS--1752934.}}
\author{ Abiy Tasissa 
             \thanks{Department of Mathematics, Rensselaer Polytechnic Institute, Troy, NY 12180, U.S.A. ({\tt tasisa@rpi.edu}).}         
              \and Rongjie Lai 
              \thanks{Department of Mathematics, Rensselaer Polytechnic Institute, Troy, NY 12180, 
             U.S.A. ({\tt lair@rpi.edu}).}
             }
              
        \date{}
\begin{document}
\maketitle

\begin{abstract}

This paper considers theoretical analysis of recovering a low rank matrix given a few expansion coefficients with respect to any basis.
The current approach generalizes the existing analysis for the low-rank matrix completion problem with sampling under entry sensing or with respect to a symmetric orthonormal basis. 
The analysis is based on dual certificates using a dual basis approach and does not assume the restricted isometry property (RIP). We introduce
a condition on the basis called the correlation condition. This condition can be computed in time $O(n^3)$ and holds for many cases of deterministic basis
where RIP might not hold or is NP hard to verify. If the correlation condition holds and the underlying low rank matrix obeys the coherence condition with parameter $\nu$,  under additional mild
assumptions, our main result shows that the true matrix can be recovered with very high probability from $O(nr\nu\log^2n)$ uniformly random expansion coefficients.

\end{abstract}

\section{Introduction}

Recovering low-rank matrices from given incomplete linear measurements plays an important role in many problems such as image and video processing~\cite{cai2017fast}, model reduction~\cite{fazel2001rank}, phase retrieval~\cite{candes2013phaselift}, molecular conformation~\cite{glunt1993molecular,trosset1997applications,fang2013using}, localization in sensor networks~\cite{ding2010sensor,biswas2006semidefinite}, dimensionality reduction~\cite{tenenbaum2000global}, recommender systems~\cite{kalofolias2014matrix} as well as solving PDEs on manifold-structured data represented as incomplete distance~\cite{lai2017solve}, just to name a few. 
A natural framework to the low-rank recovery problem is rank minimization under linear constraints.
However, this problem is NP-hard \cite{recht2010guaranteed} and thus motivates alternative solutions. 
A  series of theoretical papers \cite{candes2009exact,candes2010power,gross2011recovering,recht2011simpler,recht2010guaranteed}
showed that the NP-hard rank minimization problem for matrix completion  can be
obtained by solving the following convex nuclear norm minimization problem:
\begin{align}
\underset{\X \in \real^{n\times n}}{\textrm{minimize }} \quad &\|\X\|_{*} \nonumber\\
  \textrm{subject to }\quad &\X_{i,\,j} = \M_{i,\,j}\quad (i,j)\in \label{eq:nnm_generic}\Omega
\end{align}
where $\|\X\|_{*}$ denotes the nuclear norm defined as the sum of the singular values of $\X$, 
and $\Omega\subset\{(i,j)|i,j =1,...,n\}$, $|\Omega|=m$, denotes a random set 
that consists of the sampled indices. The remarkable fact is that, under certain conditions, the underlying low-rank matrix can be reconstructed exactly with high probability
from only $O(nr\log^{2}(n))$ uniformly sampled measurements. The idea to use the nuclear norm 
as an approximation of the rank function was first discussed in \cite{fazel2001rank}. 
Loosely, minimizing the sum of singular values will likely lead to a solution with many zero singular values resulting a low rank matrix.
One generalization of the matrix completion problem in \cite{gross2011recovering} considers measurements with respect to
a symmetric orthonormal basis and gives comparable theoretical guarantees based on the elegant dual certificate analysis. In particular, it shows that the true low rank matrix can be recovered with high probability from  
$O(nr\log^{2}(n))$ uniformly sampled measurements.

The starting point and inspiration for this work was our recent work in \cite{tasissa2018exact} which studies a matrix completion problem with respect 
to a specific non-orthogonal basis. In this paper, we consider the matrix completion problem with respect to any non-orthogonal basis. 
Given a general unit-norm basis $\{\w_{\alpha}\}_{\alpha=1}^{L}$
which spans an $L$ dimensional subspace $\S$ of $\real^{n\times n}$, 
the nuclear norm minimization program for this general matrix completion problem is provided by
\begin{align}
\underset{\X \in \S}{\textrm{minimize }} \quad &\|\X\|_{*} \nonumber\\
  \textrm{subject to }\quad &\langle \X\,,\w_{\alpha}\rangle  = \langle \M\,,\w_{\alpha}\rangle \quad \alpha \in \Omega
  \label{eq:nnm_minimization}
\end{align}
where $\Omega$ denotes a random set sampling the basis indices.  We are interested in the following two problems.
\begin{enumerate}
\item Could we obtain comparable recovery guarantees for the general matrix completion problem? 
\item If the answer to $(1)$ is affirmative, what conditions are needed on the basis $\{\w_{\alpha}\}$ and $\Omega$? 
\end{enumerate}
The main goal of this paper is a theoretical analysis of these two problems. We start by discussing few examples which show how the problem
naturally arises in several applications.

\paragraph{Euclidean Distance Geometry Problem.}

Given partial information on pairwise distances, the Euclidean distance geometry problem
is concerned with constructing the configuration of points. The problem has applications in diverse
areas \cite{glunt1993molecular,trosset1997applications,tenenbaum2000global,ding2010sensor,lai2017solve}. 
Formally, consider a set of $n$ points $\Pb=\{\p_{1},\p_{2},...,\p_{n}\}^{T} \in \real^{r}$.
Let $\D =[d_{i,j}^{2}]$ denote the Euclidean distance matrix. The inner product matrix,
also known as the Gram matrix and defined as $\X_{i,j} = \langle \p_{i}\,,\p_{j}\rangle$,
 is a positive semidefinite matrix of rank $r$.
A minor analysis reveals that $\D$ and $\X$ can be related in the following way: $\D_{i,j} = \X_{i,i}+\X_{j,j}-2\X_{i,j}$.
For $r\ll n$, consider the following nuclear norm minimization program to recover $\X$.
\begin{align}
\underset{\X \in \real^{n\times n}}{\textrm{minimize }} \quad &\|\X\|_{*} \nonumber\\
  \textrm{subject to }\quad &\X_{i,i}+\X_{j,j}-2\X_{i,j} = \D_{i,j}\quad (i,j)\in \Omega  \quad   
\label{eq:nnm_EDG_constraint}\\
   &\X\cdot \bm{1} = \bm{0} \nonumber \,;\,\X =\X^{T}\,;\, \X\succeq 0 \nonumber
\end{align}
The constraint $\X \cdot\bm{1} =\bm{0}$ fixes the
translation ambiguity.  The above minimization problem can be equivalently
interpreted as a general matrix completion problem with 
respect to some operator basis $\w_{\alpha}$.
\begin{align}
\underset{\X \in \Sp}{\textrm{minimize }} \quad &\|\X\|_{*} \nonumber\\
  \textrm{subject to }\quad &\langle \X\,,\w_{\alphab} \rangle = \langle \M\,,\w_{\alphab} \rangle \quad 
  \forall \alpha \in \Omega \label{eq:nnm_EDG_basis} 
\end{align}
where $\w_{\alphab} = \frac{1}{2}(\e_{\alpha_{1},\,\alpha_{1}}+\e_{\alpha_{2},\,\alpha_{2}}-\e_{\alpha_{1},\,\alpha_{2}}-\e_{\alpha_{2},\,\alpha_{1}})$
and $\Sp = \{\X\in \real^{n\times n} ~|~ \X=\X^{T}\,\&\,\X\cdot \bm{1}=\bm{0} \} \cap \{\X\in \real^{n\times n} ~|~ \X\succeq \bm{0}\}$. 
The constant $\frac{1}{2}$ is a normalization constant and $\e_{\alpha_1,\,\alpha_2}$ is a matrix whose entries are all zero except a $1$ at the $(\alpha_1,\,\alpha_2)$-th entry.
It can be verified that $\{\w_{\alphab}\}_{\alpha=1}^{L}$, $L = \frac{n(n-1)}{2}$, is a non-orthogonal basis for the linear space 
$\{\X\in \real^{n\times n} ~|~ \X=\X^{T}\,\&\,\X\cdot \bm{1}=\bm{0} \}$.  Theoretical analysis of this problem was recently conducted by the authors of this paper in~\cite{tasissa2018exact} 
and in fact inspires this work.

\paragraph{Spectrally Sparse Signal Reconstruction.}

The problem of signal reconstruction has many important practical applications.
When the underlying signal is assumed to be sparse, the theory of compressive sensing
states that the signal can be recovered by solving the convex $l_1$ minimization problem~\cite{candes2006robust}.
In \cite{cai2017fast},
the authors consider the recovery of spectrally sparse signal $\x\in \real^{n}$ of known order $r$
where $r\ll n$. Let $\mathcal{H}:\mathbb{C}^{n}\rightarrow \mathbb{C}^{n_1\times n_2}$
be the linear operator that maps a  vector $\bm{z} \in \mathbb{C}^{n}$ to a Hankel matrix $\mathcal{H}\bm{z} \in \mathbb{C}^{n_1\times n_2}$ with
$n+1 = n_1+n_2$. Denote the orthonormal basis of $n_1\times n_2$ Hankel matrices by $\{\H_{\alpha}\}_{\alpha=1}^{n_1\times n_2}$.
After some analysis, the reconstruction problem is formulated as low-rank Hankel matrix
completion problem \cite{cai2017fast}.
\[
\textrm{find} \quad \mathcal{H}\bm{z} \quad \textrm{ subject to } \quad \textrm{rank}(\mathcal{H}\bm{z}) = r \quad \P_{\Omega}(\mathcal{H}\bm{z}) =
\P_{\Omega}(\mathcal{H}\bm{x})
\]
$\P_{\Omega}$ is the sampling operator defined as: $\P_{\Omega}(\bm{Z}) = \sum_{\alpha \in \Omega} \langle \Z\,,\bm{H}_{\alpha}\rangle \bm{H}_{\alpha}$
where  $\Omega$ is the random set that consists of the sampled Hankel basis. 
For the case where $r$ is not specified, one can consider the following nuclear norm minimization problem.
\begin{align}
\underset{\X \in \real^{n_1\times n_2}}{\textrm{minimize }} \quad &\|\X\|_{*} \nonumber\\
  \textrm{subject to }\quad &\langle \X\,,\H_{\alpha} \rangle = \langle \M\,,\H_{\alpha}\rangle \quad 
  \forall \alpha \in \Omega \label{eq:nnm_spectrally_sparse}
\end{align}
where $\X = \mathcal{H}\bm{z}$ and $\M = \mathcal{H}\x$. \eqref{eq:nnm_spectrally_sparse}
is now in the form of a general matrix completion problem.

\paragraph{Signal Recovery Under Quadratic Measurements.}

Given a signal $\x \in \real^{n}$, consider quadratic measurements of the form $\bm{a}_{i} = \langle \x\,, \bm{w}_{i}\rangle^{}$
with some vector $\bm{w}_{i}\in \real^{n}$. Given that $m$ random measurements are available, it is of interest to
determine if the underlying signal can be recovered.
Using the lifting idea in \cite{candes2015phase}, the signal recovery problem can be written
as the following nuclear norm minimization problem.
\begin{align}
\underset{\X \in \real^{n\times n}}{\textrm{minimize} } \quad &\|\X\|_{*} \nonumber\\
  \textrm{subject to }\quad &\langle \X\,,\W_{\alpha} \rangle = \langle \M\,,\W_{\alpha} \rangle \quad 
  \forall \alpha \in \Omega\\
     &\,\X =\X^{T}\,;\, \X\succeq 0 \nonumber
\end{align}
Above $\W_{\alpha} = \bm{w}_{\alpha}\bm{w}_{\alpha}^{*}$ and $\Omega$ is the random set that consists
of the indices of the sampled vectors. In the case that $\bm{a}_{i}$ are sampled independently and uniformly at random
on the unit sphere, the above minimization problem is the well known PhaseLift problem~\cite{candes2013phaselift}.
One can consider a general case where the assumption is simply that the $\bm{w}_{i}$'s are structured and
form a basis. The framework introduced
in this paper allows, under certain conditions, to state results about
the uniqueness of this general recovery problem.

\paragraph{Weighted Nuclear Norm minimization.}

The usual assumption in matrix completion is uniform random measurements.
In the case of general sampling models, it has been argued that the nuclear norm minimization
is not a suitable approach \cite{srebro2010collaborative}.
An alternative which has been argued to promote more accurate low rank solutions \cite{candes2015phase,fazel2003log} is the weighted nuclear norm minimization \cite{foygel2011learning,srebro2010collaborative}. With $\D$ as a weight matrix, the weighted nuclear norm minimization problem is given by
\begin{align}
\textrm{minimize } \quad &\|\D\X\|_{*} \nonumber\\
\textrm{subject to }\quad &\langle \X\,,\w_{\alpha} \rangle = \langle \M\,,\w_{\alpha} \rangle \quad 
 \forall \alpha \in \Omega 
\end{align}
For simplicity, let $\D$ be a diagonal matrix. In this case, $\|\D\X\|$ can be interpreted as weighting certain rows of $\X$ more than others. 
Introducing the matrix $\Y = \D\X$, the above minimization
problem can equivalently be rewritten as follows.
\begin{align}
\textrm{minimize } \quad &\|\Y\|_{*} \nonumber\\
 \textrm{subject to }\quad &\langle \Y\,,\D^{-1}\w_{\alpha} \rangle = \langle \overline{\M}\,,\D^{-1}\w_{\alpha} \rangle \quad 
\forall \alpha \in \Omega 
\end{align}
$\overline{\M}$ is the weighted ground matrix $\M$ defined as $\overline{\M}= \D \M$. This is a general matrix completion problem with respect to the basis
$\D^{-1} \w_{\alpha}$. Since $\D$ is a diagonal matrix, the set $\{\D^{-1} \w_{\alpha}\}$ is a basis. Of interest
is the following question: What kind of choices for $\D$ lead to successful recovery algorithms?

\paragraph{Challenges}

The minimization problem in \eqref{eq:nnm_minimization} has random linear constraints which can be expressed as $\mathcal{L}(\X)$
where $\mathcal{L}$ is the appropriate linear operator. Using the work in \cite{recht2010guaranteed}, one approach
to show uniqueness of the general matrix completion problem is to check if $\mathcal{L}$ obeys
the restricted isometry property (RIP) condition. Our basis are structured and deterministic
and so in general the RIP condition does not hold. As an example, consider the nuclear
norm minimization program in \eqref{eq:nnm_EDG_basis} for the Euclidean distance geometry
problem. We choose any $(i,j)\notin \Omega$ and construct a matrix $\X$ with $\X_{i,j} = \X_{j,i} = \X_{i,i}=\X_{j,j}=1$
and zero everywhere else. One can easily check that $\mathcal{L}(\X) = \bm{0}$ which shows that the RIP
condition does not hold.  The general matrix completion problem resembles the matrix completion problem with respect to a symmetric orthonormal basis
first considered in \cite{gross2011recovering}. However, the basis $\{\w_{\alpha}\}_{\alpha=1}^{L}$ in the general
matrix completion problem are not necessarily orthogonal. This has the implication that the measurements $\langle \w_{\alpha},\X \rangle$ are not
compatible with the expansion coefficients of $\M$. One solution is to employ any of the basis orthogonalization
algorithms and consider the minimization problem in the new orthonormal basis. This solution
however is not useful since the measurements can not be treated as independent
in the new basis. As such, the lack of orthogonality mandates an alternative analysis to
show that the general matrix completion problem admits a unique solution. In this paper, the analysis is based on the dual certificate approach \cite{candes2009exact}.
Motivated by the work of David Gross \cite{gross2011recovering},
where the author generalizes the matrix completion problem to any symmetric orthonormal basis, 
our recent work in \cite{tasissa2018exact} considered the Euclidean distance geometry problem \eqref{eq:nnm_EDG_basis}.
The main technical difference from the work of \cite{gross2011recovering} is that the basis in the Euclidean distance geometry problem is
non-orthogonal. More precisely, in terms of analysis, the difference is mainly due to the sampling operator, 
which is central in the analysis of the matrix completion problem. For the orthonormal basis case, the sampling operator
is self-adjoint. For the Euclidean distance geometry problem, the sampling operator is not self-adjoint
and requires alternative analysis.  Much inspired by our previous work, we are interested in generalizing the result to any non-orthogonal basis. 
The current paper is a culmination of this effort. The work in \cite{kueng2014ripless} develops RIPless recovery analysis for the compressed sensing problem from anisotropic
measurements. The current work could be interpreted as analogue of this work to the general matrix completion problem.  In particular,
the notion of anistropic measurements for compressive sensing corresponds to non-orthogonal
matrix basis for matrix completion.  There are however differences as a direct analogue of some of the technical estimates in \cite{kueng2014ripless}
is not directly applicable to the general matrix completion problem.

\paragraph{Contributions}
In this paper, under suitable sampling conditions, a dual basis approach is used to show that the general matrix completion problem
admits a unique solution. Introducing
a dual basis to
$\w_{\alpha}$, denoted by $\z_{\alpha}$, ensures that the measurements $\langle \X\,,\w_{\alpha}\rangle$
in \eqref{eq:nnm_minimization} are compatible with expansion coefficients
of $\M$. 
Based on the framework of the dual basis approach, we show that the minimization problem
recovers the underlying matrix under suitable conditions. Two main contributions of this paper are as follows.
\begin{enumerate}
  \item A dual basis approach is used to prove a uniqueness result for the general matrix completion problem. 
  The main result shows that if the number of random measurements $m$ is of order $O(nr\log^{2}n)$, 
  under certain assumptions, the nuclear norm minimization program recovers the underlying low-rank solution
  with very high probability. A key part of our proof uses the operator Chernoff bound. This part of the proof
based on the Chernoff bound is simple and might find use in other problems.

\item An important condition, named the correlation condition, is introduced. This condition determines whether the nuclear norm minimization program
  succeeds for a given general matrix completion problem. The well-known RIP condition might not hold for the case of deterministic measurements. However, 
  the correlation condition could hold for deterministic measurements  and more importantly can be checked in polynomial time.  

 \end{enumerate}

\paragraph{Outline}
The outline of the paper is as follows. Section \ref{sec:CorrelationCond} introduces the correlation condition
and discusses the dual basis approach. The proof of the main result is presented in
section \ref{sec:mainresult}. The key components of the proof can be described as follows. 
The general matrix completion problem is a convex minimization problem for which
a sufficient condition to optimality is the KKT condition. Namely, if one
can show that there exists a dual certificate $\Y$ that satisfies certain conditions,
it follows that the general matrix completion problem has a unique solution. 
The construction of $\Y$ follows the elegant golfing scheme proposed in \cite{gross2011recovering}. 
The bulk of the theoretical work then focuses on using this scheme and 
proving that the conditions hold with very high probability. The implication
of the proof is that there is a unique solution to the general matrix completion
problem with very high probability. Section \ref{sec:conclusion} concludes the work.

\paragraph{Notation}

The notations used in the paper are summarized in Table \ref{tab:notation}.
\begin{table}[h!]
\centering
\begin{tabular}{|cc||cc|}
\hline
$\x$ & Vector & $\|\X\|_{F}$  &  Frobenius norm\\
$\X$ & Matrix & $\|\X\|_{\infty}$ & $ \sup_{\|v\|_\infty=1}\|\X v\|_\infty$\\
$\mathcal{X}$ & Operator & $\|\X\|$ &  $ \sup_{\|v\|_2=1}\|\X v\|_2$\\ 
$\X^T$ & Transpose & $\|\X\|_{*}$ & Nuclear norm\\
$\textrm{Tr}(\X)$ & Trace & $\|\mathcal{A}\|$ & $\sup_{\|\X\|_{F}=1}\|\mathcal{A}\X\|_{F}$.\\
$\langle \X\,,\Y\rangle$ & $\textrm{Trace}(\X^T\Y)$ & $\lambda_{\max},\lambda_{\min}$ & Maximum, Minimum eigenvalue\\
$\bm{1}$ & A vector or matrix of ones & $\textrm{Sgn }\X$ & $\U\textrm{sign }(\Sigma) \V^T$; here $[\U,\Sigma,\V]= \text{svd}(\X)$\\
$\bm{0}$ & A vector or matrix of zeros & $\Omega$, $\mathbb{I}$ & Random sampled set, Universal set\\\hline
\end{tabular}
\caption{Notations}
\label{tab:notation}
\end{table}

\section{Matrix Completion Problem under a Non-orthogonal Basis}
\label{sec:CorrelationCond}
In this section, we introduce a new condition referred as a {\it correlation condition} for low-rank matrix completion in a non-orthogonal basis. We will also discuss a dual basis formulation which plays an important role in our main result. 

\subsection{Correlation Parameter}
We consider an $L$ dimensional subspace $\S\subset\real^{n\times n}$ as the feasible set of the general matrix completion. 
We allow $L \leq n^2$ such as the case
where the feasible solutions naturally satisfy linear constraints. 
Given a complete set of unit-norm basis $\{\w_{\alpha}\}_{\alpha=1}^L$ of the subspace $\S$,
any $\X\in \S$ can be determined if one specifies all the measurements $\{\langle \X\,,\w_{\alpha}\rangle\}_{\alpha=1}^{L}$. 
The problem studied in this paper considers the case where we have random access to a few of these measurements.  
Leaving the precise notion of ``a few'' for later, we consider the following question: Are there non-orthogonal basis for which the matrix completion framework, learning from few measurements, still works?
In this paper, we note that as long as a certain condition, named correlation condition, on the basis matrices is satisfied, the sample
complexity of low rank matrix completion with respect to non-orthogonal basis is of the same order as sample
complexity of low rank matrix completion with respect to an orthogonal basis.  
Intuitively, there is decoupling in orthogonality which means that every additional measurement is informative. 
However, with non-orthogonal basis, the basis might be correlated and every additional measurement might not necessarily be informative.
In the specific case of the general matrix completion
problem, the goal is to rigorously show that, under certain conditions, a low rank matrix can be recovered from few
random non-orthogonal measurements. The intuitive arguments above motivate the following correlation condition
which loosely informs how far the nonorthogonal basis is from an orthonormal basis. 

\begin{definition}
The unit-norm basis $\{\w_\alpha\in\real^{n\times n}\}_{n=1}^{L}$ has correlation parameter $\mu$ if there is a constant $\mu\ge 0$
such that the following two equations hold
\begin{equation}\label{eq:correlation}
\left\|\frac{1}{n}\sum_{\alpha} \w_{\alpha}^{T} \w_{\alpha} -\I \right\|\le \mu \quad \& \quad 
\left\|\frac{1}{n}\sum_{\alpha} \w_{\alpha} \w_{\alpha}^{T}-\I \right\|\le \mu
  \end{equation}
\end{definition}
Intuitively, the above definition describes that the operators $\frac{1}{n}\sum_{\alpha} \w_{\alpha}^{T} \w_{\alpha}$ and
$\frac{1}{n}\sum_{\alpha} \w_{\alpha} \w_{\alpha}^{T}$ are nearly isometric to the identity operator.  
To understand the correlation condition, we first establish certain properties and follow by computing the correlation condition of certain basis
matrices. 
\begin{lemma} \label{corr_props} [Properties and examples of correlation condition]
Given a unit-norm basis $\{\w_\alpha\in\real^{n\times n}\}_{n=1}^{L}$ with correlation parameter $\mu$, the following statements hold:

\noindent
\begin{enumerate}[label=\alph*.]

\item  The correlation parameter is bounded above by $n$, that is, $\mu\le n$.

\item If the correlation condition holds, it follows that 
  \[
\lambda_{\max}\left(\sum_{\alpha=1}^{L} \w_{\alpha} ^{T}\w_{\alpha}\right) \le  (\mu+1) n \quad\& \quad
 \lambda_{\max}\left(\sum_{\alpha=1}^{L} \w_{\alpha} \w_{\alpha}^{T}\right) \le (\mu+1) n 
\]

\item If $\{\w_\alpha\}_{n=1}^{L}$ is an orthonormal basis $L = n^2$, the correlation condition holds with $\mu=0$. 

\item For the basis matrices in the Euclidean distance geometry problem, the
correlation condition holds with $\mu = 1$. 

\end{enumerate}
\end{lemma}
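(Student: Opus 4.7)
The plan is to dispatch the four items in order, leveraging throughout the normalization $\|\w_\alpha\|_F=1$.

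For item (a), set $S := \sum_\alpha \w_\alpha^T \w_\alpha$. Then $S\succeq 0$ with trace $\textrm{Tr}(S) = \sum_\alpha \|\w_\alpha\|_F^2 = L \le n^2$, so $\lambda_{\max}(S) \le n^2$. Hence the eigenvalues of $\tfrac{1}{n}S - \I$ lie in $[-1, n-1]$, which gives $\|\tfrac{1}{n}S - \I\| \le n$; the same argument applies to $\sum_\alpha \w_\alpha \w_\alpha^T$. For item (b), the correlation condition $\|\tfrac{1}{n}S - \I\| \le \mu$ together with $S \succeq 0$ forces every eigenvalue of $\tfrac{1}{n}S$ into $[1-\mu, 1+\mu]$, and in particular $\lambda_{\max}(\tfrac{1}{n}S) \le 1+\mu$; multiplying by $n$ yields $\lambda_{\max}(S) \le (\mu+1)n$, and the $\w_\alpha \w_\alpha^T$ bound follows identically.

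For item (c), I would show that $S=\sum_\alpha \w_\alpha^T \w_\alpha$ is invariant under change of orthonormal basis. Given two orthonormal bases $\{\w_\alpha\}_{\alpha=1}^{n^2}$ and $\{\tilde\w_\beta\}_{\beta=1}^{n^2}$ of $\real^{n\times n}$, one may write $\w_\alpha = \sum_\beta c_{\alpha\beta}\tilde\w_\beta$ with $C=(c_{\alpha\beta})$ orthogonal. Substituting this expansion and using $\sum_\alpha c_{\alpha\beta}c_{\alpha\gamma} = \delta_{\beta\gamma}$ collapses the double sum to $\sum_\alpha \w_\alpha^T \w_\alpha = \sum_\beta \tilde\w_\beta^T \tilde\w_\beta$. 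A direct calculation for the standard basis gives $\sum_{i,j}\e_{i,j}^T \e_{i,j} = n\I$, so $S = n\I$ for every orthonormal basis, whence the correlation condition holds with $\mu = 0$; the companion sum $\sum_\alpha \w_\alpha \w_\alpha^T$ is handled identically.

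For item (d), the key observation is that each EDG basis element $\w_{\alphab}$ is symmetric and idempotent: restricted to the $2\times 2$ block indexed by $(\alpha_1,\alpha_2)$, it is $\tfrac{1}{2}$ times the matrix with $+1$ on the diagonal and $-1$ off the diagonal, which squares to itself. Hence $\w_{\alphab}^T \w_{\alphab} = \w_{\alphab}$, and the correlation operator reduces to $\sum_\alpha \w_{\alphab}$. Counting how often each elementary matrix appears among the $L = n(n-1)/2$ unordered pairs—each diagonal $\e_{k,k}$ occurs in exactly $n-1$ pairs, and each off-diagonal $\e_{i,j}$ with $i\ne j$ in exactly one—gives $\sum_\alpha \w_{\alphab} = \tfrac{n}{2}\I - \tfrac{1}{2}\bm{1}\bm{1}^T$. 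Therefore $\tfrac{1}{n}\sum_\alpha \w_{\alphab} - \I = -\tfrac{1}{2}\I - \tfrac{1}{2n}\bm{1}\bm{1}^T$, which has eigenvalue $-1$ on $\textrm{span}(\bm{1})$ and $-\tfrac{1}{2}$ on its orthogonal complement, giving operator norm exactly $1$ and $\mu = 1$. The main obstacle is the bookkeeping in (d); once the idempotence observation is in hand, the remaining counting and eigenvalue computation on $\bm{1}\bm{1}^T$ are routine.
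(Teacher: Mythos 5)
Your proof is correct, and items (b) and (d) follow essentially the paper's own route: (b) is the same triangle-inequality/eigenvalue-interval observation, and (d) carries out explicitly the ``short calculation'' the paper only sketches (idempotence of each $\w_{\alphab}$, the counting giving $\sum_\alpha \w_{\alphab}=\tfrac{n}{2}\I-\tfrac12\bm{1}\bm{1}^T$, and the eigenvalues $-1$ and $-\tfrac12$ of $-\tfrac12\I-\tfrac{1}{2n}\bm{1}\bm{1}^T$). The two places where you genuinely deviate are (a) and (c). For (a), the paper bounds $\|\tfrac1n\sum_\alpha\w_\alpha^T\w_\alpha\|\le n$ by the triangle inequality over the $L\le n^2$ unit-norm terms and then invokes the inequality $\|\bm{A}-\bm{B}\|\le\max(\|\bm{A}\|,\|\bm{B}\|)$ for positive semidefinite $\bm{A},\bm{B}$; you instead use the trace bound $\lambda_{\max}(S)\le\mathrm{Tr}(S)=L\le n^2$ together with $S\succeq 0$ to locate the spectrum of $\tfrac1n S-\I$ in $[-1,n-1]$, which avoids the PSD norm fact entirely and even yields the marginally sharper bound $n-1$. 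For (c), the paper proves the completeness relation $\sum_\alpha\w_\alpha^T\w_\alpha=\sum_\alpha\w_\alpha\w_\alpha^T=n\I$ by an entrywise computation from the vectorized identity $\sum_\alpha\c_\alpha\c_\alpha^T=\I_{n^2}$ (its Lemma \ref{completeness}); you prove the same relation by observing that the sum is invariant under an orthogonal change of orthonormal basis and then evaluating it for the standard basis $\{\e_{i,j}\}$, where $\sum_{i,j}\e_{i,j}^T\e_{i,j}=n\I$ is immediate. Both routes are valid; yours trades the index bookkeeping of the entrywise argument for a clean invariance principle, while the paper's version has the advantage of producing the pointwise identity $\sum_\alpha\C_\alpha(i,j)\C_\alpha(s,t)=\delta^{i,j}_{s,t}$, which is occasionally useful on its own.
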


\begin{proof}

\noindent
\begin{enumerate}
\item For positive semidefinite matrices $\bm{A}$ and $\bm{B}$, the norm inequality $\|\bm{A}-\bm{B}\|\le \max(\|\bm{A}\|,\|\bm{B}\|)$ holds.
Let $\bm{A} = \frac{1}{n}\sum_{\alpha} \w_{\alpha}^{T} \w_{\alpha}$ and let $\bm{B} = \I$. Applying the
triangle inequality, it follows that
$\left\| \frac{1}{n}\sum_{\alpha} \w_{\alpha}^{T} \w_{\alpha} \right\|\le \frac{1}{n}\sum_{\alpha} ||\w_{\alpha}^{T}||\, ||\w_{\alpha}|| \leq n$.
Therefore, $\left\|\frac{1}{n}\sum_{\alpha} \w_{\alpha}^{T} \w_{\alpha} -\I \right\|\le \max(n,1) = n$.
An analogous argument obtains $\left\|\frac{1}{n}\sum_{\alpha} \w_{\alpha} \w_{\alpha}^{T}-\I \right\|\le n$. 
Hence, $\mu \le n$ as desired. It should be remarked that, although $\mu$ is bounded above by $n$, the theoretical analysis requires
$\mu = O(1)$. 

\item Note that 
$\lambda_{\max}\left(\sum_{\alpha=1}^{L} \w_{\alpha} ^{T}\w_{\alpha}\right)  =  \|\sum_{\alpha=1}^{L} \w_{\alpha} ^{T}\w_{\alpha}-n\I+n\I  \| \le \|\sum_{\alpha=1}^{L} \w_{\alpha} ^{T}\w_{\alpha}-n\I \|+ \| n\I \| =(\mu+1)n$. A similar argument results $ \lambda_{\max}\left(\sum_{\alpha=1}^{L} \w_{\alpha} \w_{\alpha}^{T}\right) \le (\mu+1) n $.

\item For an orthonormal basis in $\S = \real^{n\times n}$, the completeness relation states that  $\sum_{\alpha=1}^{n^2} \w_{\alpha} \w_{\alpha}^{T} = \sum_{\alpha=1}^{n^2} \w_{\alpha}^{T} \w_{\alpha}  = n\I$.
The proof of this fact is standard. For ease of reference, a proof is included in Lemma \ref{completeness}.
Using the completeness relation, it follows that the correlation condition holds with $\mu = 0$. 

\item For the Euclidean distance geometry problem, the basis are symmetric so it suffices to consider $\frac{1}{n} \sum_{\alpha} \w_{\alpha}^{2}$.
A short calculation results  $\frac{1}{n} \sum_{\alpha} \w_{\alpha}^{2} = \frac{1}{2n}(n\I-\bm{1}\bm{1}^{T})$. The correlation condition bound
amounts to finding the operator norm of $\frac{1}{2} \|\I+\frac{1}{n}\bm{1}\bm{1}^{T}\|$. Since the maximum eigenvalue of $\frac{1}{n}\bm{1}\bm{1}^{T}$
is $1$,  $\frac{1}{2} \| \I+\frac{1}{n}\bm{1}\bm{1}^{T} \| = 1$. Hence, for the Euclidean distance geometry problem, the
correlation condition holds with $\mu = 1$. 

\end{enumerate}

\end{proof}

\paragraph{Comparison with RIP condition}

The restricted isometry property, RIP in short, was introduced first in \cite{candes2005decoding} for the compressive sensing problem.
If the measurement matrix in the compressive sensing problem, denoted by $\bm{A}$, obeys the RIP condition, it implies that the underlying sparse vector can be recovered
by solving a convex $l_1$ minimization problem. The notion of the restricted isometry property can be extended to the matrix
completion problem and its definition, which first appeared in \cite{recht2010guaranteed}, is restated below. 
\begin{definition}

  For every integer $r$ with $1\le r \le n$,  a linear map $\A: \real^{n\times n}\rightarrow \real^{m}$ obeys the RIP condition
  with isometry constant $\delta_{r}$ if
  \[
  (1-\delta_r)\|\X\|_{F}^{2}\le \|\A\X\|_{2}^{2} \le (1+\delta_r)\|\X\|_{F}^{2}
  \]
  holds for all matrices $\X$ of rank at most $r$. 
\end{definition}

In particular, the analysis in \cite{recht2010guaranteed} shows that if the RIP condition holds for the measurement operator
of the general matrix completion problem, the underlying low rank matrix can be recovered by solving the convex
nuclear norm minimization problem. With this, one could only consider certifying that the measurement operator
satisfies RIP.  For example,  the RIP condition is satisfied with high probability for random measurement operators  \cite{recht2010guaranteed}. 
RIP condition could fail if one considers deterministic measurement operators. The intuition is that, for this case,
one could in some fashion construct matrices which belong to the null space of $\A$ implying that RIP does not hold.
As noted earlier, the RIP condition fails to hold for the EDG problem where the proof of this fact relied on constructing a counterexample, a sparse matrix $\X$, which violates the RIP condition.
On the other hand,  Lemma \ref{corr_props} $(4)$ shows that the correlation condition holds for the EDG problem
with correlation parameter $\mu =1$.
Additionally, one could construct counter examples that imply that the RIP condition does not hold for the standard matrix completion problem and
the phase retrieval problem. Therefore, in certain deterministic settings where the RIP condition does not hold,
an RIPless analysis based on correlation parameter can be employed. Another advantage of the correlation condition
is its computational complexity. Given a measurement operator, checking whether RIP condition holds or not is hard. For example,
for the compressive sensing problem, it has been shown that \cite{bandeira2013certifying} checking whether RIP holds
or not is NP-hard.  To the best of our knowledge, there is no analogous work for the matrix RIP condition. 
However, using the relations of vector RIP to matrix RIP as detailed in \cite{oymak2011simplified},
it can be anticipated that checking the matrix RIP is also NP-hard.  
On the other hand,  the computational complexity of checking the correlation condition is at most $O(n^3)$. 
As such, given a measurement operator, it can easily be checked whether it satisfies the correlation condition or not. If the
result is positive, an RIPless analysis can be carried out as will be detailed in this paper. If the correlation condition does
not hold, no conclusion can be made. 
The above comparison is meant to illustrate that, for the case of deterministic measurements, the correlation condition
could be used and an RIPless analysis can be carried out for the general matrix completion problem.

\subsection{Dual basis formulation}
For the general matrix completion problem in \eqref{eq:nnm_minimization}, the measurements are in the form $\langle \X\,,\w_{\alpha}\rangle $. 
If $\w_{\alpha}$ is orthonormal, $\X$ can be expanded as $\X = \sum_{\alpha} \langle \X\,,\w_{\alpha}\rangle \w_{\alpha}$
where $\langle \X\,,\w_{\alpha}\rangle$ are the expansion coefficients. However, for non-orthogonal basis, this expansion does not hold. 
Since the measurements are inherent to the problem, the ideal expansion will be of the form $\X = \sum_{\alpha}  \langle \X\,,\w_{\alpha}\rangle \z_{\alpha}$.
The idea of the dual basis approach is to realize this form with the implication that the expansion coefficients match the random measurements.
This approach is briefly summarized below and we refer the interested reader to \cite{tasissa2018exact} where the dual
basis approach is first considered in the context of the Euclidean distance geometry problem. Given 
the basis $\{\w_{\alpha}\}_{\alpha=1}^{L}$ of $\S$,  we define the matrix $\H$ as 
$\H_{\alpha,\,\beta} = \langle \w_{\alpha}\,,\w_{\beta}\rangle$ and write $\H^{-1}_{\alpha,\,\beta}$ as $\H^{\alpha,\,\beta}$. 
After minor analysis, it is straightforward to check $\z_{\alpha}= \sum_{\beta} \H^{\alpha,\,\beta} \w_{\beta}$ is a dual basis satisfying
$\langle \z_{\alpha}\,,\w_{\beta} \rangle = \delta_{\alpha,\,\beta}$. 
We note the following relations which will be used in later analysis, $\H = \W^T\W$, $\H^{-1} = \Z^T\Z$ and $\Z = \W\H^{-1}$
where $\W =[\w_1,\w_2,...,\w_L]$ and $\Z = [\z_1,\z_2,...,\z_{L}]$ denote the matrix of vectorized basis matrices 
and vectorized dual basis matrices respectively. The sampling operator, a central operator in the
analysis of the general matrix completion problem, is defined as follows.
\begin{equation} \label{eq:dual_sampling}
\Po:\X \in \S\longrightarrow \frac{L}{m}\sum_{\alpha\in\Omega}\langle \X\,,\w_{\alpha} \rangle \z_{\alpha}
\end{equation}
The sampling model, for the set $\Omega$, is uniform random with replacement from $\Us = \{1,\cdots,L\}$. The size of $\Omega$ is denoted by $m$  and the scaling factor $\displaystyle \frac{L}{m}$ is simply for convenience of analysis.
The adjoint operator of the sampling operator also
appears in the analysis and has the following form.
\begin{equation} 
\Po^{*}:\X \in \S\longrightarrow \frac{L}{m}\sum_{\alpha\in\Omega} \langle \X\,,\z_{\alpha} \rangle \w_{\alpha}
\end{equation}
Using the sampling operator $\Po$, we can write \eqref{eq:nnm_minimization} as follows. 
\begin{align}
&\underset{\X\in \S}{\textrm{minimize }} \quad \|\X\|_{*} \nonumber\\
&\textrm{subject to }\quad \Po(\X) = \Po(\M)  \label{eq:m_completion}
\end{align}
Another operator which appears in the analysis is the restricted frame operator defined as follows.
\begin{equation} \label{eq:frame_operator}
\F:\X \in \S \longrightarrow \frac{L}{m}\sum_{\alphab\in\Omega}\langle \X\,,\w_{\alphab} \rangle \w_{\alphab}
\end{equation}
It can be readily verified that the restricted frame operator is self-adjoint and positive semidefinite.

\paragraph{Coherence.}
One can not expect to have successful reconstruction for an arbitrary matrix $\M$, in particular, when $\M$ has very few non-zero expansion coefficients. 
Thus,  the notion of coherence is introduced in~\cite{candes2009exact} to guarantee successful completion. Consider the singular value decomposition of $\M = \sum_{k=1}^{r} \lambda_{k} \u_{k} \v_{k}^{T}$.
Let's write $\U = \textrm{span } \{\u_1,...,\u_r\}$,
$\U^{\perp} = \textrm{span} \{\u_{r+1} , . . . , \u_n \}$ as the orthogonal complement of $\U$,
$\V = \textrm{span } \{\v_1,...,\v_r\}$ and $\V^{\perp} = \textrm{span} \{\v_{r+1} , . . . , \v_n \}$ as the orthogonal complement of $\V$.
Projections onto these spaces appear frequently in the analysis and can be summarized
as follows. Let $\P_{\U}$ and $\P_{\V}$ denote the orthogonal projections
onto $\U$ and $\V$ respectively. $\P_{\U^{\perp}}$ and $\P_{\V^{\perp}}$ are defined analogously. 
Define $\T=\{\U\Theta^T+\Theta\V^{T}:\Theta\in \real^{n\times r} \}$ to be the tangent space of the rank $r$ matrix in $\real^{n\times n}$ at $\M$.
The orthogonal projection onto $\T$ is given by
\begin{equation} \label{eq:pt_defn}
\P_{\T}\X = \P_{\U}\X + \X \P_{\V} - \P_{\U} \X \P_{\V}
\end{equation}
It then follows that $\P_{\T^{\perp}}\X = \X - \P_{\T}\X = \P_{\U^{\perp}}\X\P_{\V^{\perp}}$.
We define a coherence condition as follows. 
\begin{definition}
The aforementioned rank r matrix $M\in \real^{n\times n}$ has coherence $\nu$ with respect to unit norm basis $\{\w_{\alpha}\}_{\alpha\in \Us}$ if
the following estimates hold
\begin{align}
\underset{\alpha\in \Us}{\max} \,\, &\sum_{\beta\in \Us}\langle \P_{\T}\, \w_{\alpha}\,,\w_{\beta} \rangle^{2} \le \nu \frac{r}{n}
\label{eq:coherencew1}\\
\underset{\alpha\in \Us}{\max} \,\, &\sum_{\beta\in \Us}\langle \P_{\T}\, \z_{\alpha}\,,\w_{\beta} \rangle^{2} \le c_{v}\nu \frac{r}{n}
\label{eq:coherencev1}\\
\underset{\alpha\in \Us}{\max} \,\, &\langle \w_{\alpha}\,,\U\V^{T}\rangle^{2} \le \nu \frac{r}{n^{2}} \label{eq:jointcoherence1}
\end{align}
where $\{\z_{\alpha}\}_{\alpha\in \Us}$ is the dual basis of $\{\w_{\alpha}\}_{\alpha\in \Us}$
and $c_v$ is a constant satisfying $c_{v}\ge \lambda_{\max}(\H^{-1})\|\H^{-1}\|_{\infty}$.
\end{definition}

\noindent
\begin{remark}

Since the above coherence conditions are a central part of the analysis, we consider equivalent simplified forms.
We start with a bound on  $\|\P_{\T}\, \w_{\alpha}\|_{F}^{2}$ using \eqref{eq:coherencew1} and 
Lemma \ref{norm_equivalent} which results 
\begin{equation*} 
\lambda_{\min}(\H)\,\|\P_{\T}\, \w_{\alpha}\|_{F}^{2}\le 
\underset{\alpha\in \Us}{\max} \,\, \sum_{\beta\in \Us}\langle \P_{\T}\, \w_{\alpha}\,,\w_{\beta} \rangle^{2} \le \nu \frac{r}{n}\quad 
\Longrightarrow \quad \|\P_{\T}\, \w_{\alpha}\|_{F}^{2}\le \lambda_{\max}(\H^{-1})\,\nu \frac{r}{n}
\end{equation*}
Next,  using the above inequality, the following bound for $\|\P_{\T}\, \z_{\alpha}\|_{F}$ follows.
\begin{equation*}
\|\P_{\T}\, \z_{\alpha}\|_{F} \le \sum_{\beta\in \Us} \|\H^{\alpha,\,\beta}\,\P_{\T}\, \w_{\beta}\|_{F} = 
\sum_{\beta\in \Us} |\H^{\alpha,\,\beta}|\,\,\|\P_{\T}\, \w_{\beta}\|_{F} \le \|\H^{-1}\|_{\infty}\,\sqrt{\lambda_{\max}(\H^{-1})} \sqrt{\frac{\nu r}{n}}                   
  \end{equation*}
It should be noted that the analysis presented in this paper requires that $\|\H^{-1}\|_{\infty}$ is at most $O(1)$.
Finally, we use the previous inequality and \eqref{eq:jointcoherence1} to derive a bound for $\langle \z_{\alpha}\,,\U\V^{T}\rangle^{2} $.
\begin{align*}
 \langle \z_{\alpha}\,,\U\V^{T}\rangle^{2}	= \langle \sum_{\beta\in \Us} \H^{\alpha,\,\beta}\w_{\beta}\,,\U\V^{T}\rangle^{2}	&\le
\underset{\beta\in \Us}{\max} \,\, \langle \w_{\beta}\,,\U\V^{T}\rangle^{2}\,\left(\sum_{\beta\in \Us} |\H^{\alpha,\,\beta}|\right)^{2} \\
&\le \|\H^{-1}\|_{\infty}^{2}\, \frac{\nu r}{n^2} 
\end{align*}
The coherence conditions can now be summarized as follows.
\begin{align}
\underset{\alpha\in \Us}{\max} \,\, &\|\P_{\T}\, \w_{\alpha}\|_{F}^{2} \le \lambda_{\max}(\H^{-1})\,\nu \frac{r}{n}\label{eq:coherencew}\\
\underset{\alpha\in \Us}{\max} \,\, &\|\P_{\T} \, \z_{\alpha}\|_{F}^{2} \le \|\H^{-1}\|_{\infty}^{2}\,\lambda_{\max}(\H^{-1})\,\nu \frac{r}{n}\label{eq:coherencev}\\
\underset{\alpha\in \Us}{\max} \,\, &\langle \z_{\alpha}\,,\U\V^{T}\rangle^{2} \le \|\H^{-1}\|_{\infty}^{2}\, \frac{\nu r}{n^2} \label{eq:jointcoherence}
\end{align}
\end{remark}

\noindent
The coherence parameter is indicative of concentration of information
in the ground truth matrix. If the underlying matrix has low coherence, each measurement is equally
informative as the other. On the other hand, if a matrix has high coherence, it means that the information
is concentrated on few measurements.

\paragraph{Sampling Model.} For the general matrix completion problem, the basis matrices
are sampled uniformly at random with replacement. The advantage of this model is that the
sampling process is independent. This property is crucial since our analysis 
uses concentration inequalities for i.i.d matrix valued random variables.
A disadvantage of this sampling process is that the same measurement could be repeated
and the analysis needs to account for the number of duplicates.

\begin{remark}

Although we choose uniform sampling with replacement model, the analysis in this paper
also works for sampling with out replacement. The latter model has the 
advantage that there are no duplicate measurements but the choice also means that
the sampling is no longer independent. This in turn has the implication that
concentration inequalities for i.i.d matrix valued random
variables can not be used freely.  However, in the work of Hoeffding \cite{hoeffding1963probability}, for Hoefdding inequality,
it is argued that the results derived for the case of the sampling with replacement
also hold true for the case of sampling without replacement.  In \cite{gross2010note},
it is shown that matrix concentration inequalities resulting from  the operator Chernoff bound technique
\cite{ahlswede2002strong} also hold true for uniform sampling without replacement. 
With this, the main analysis in this paper holds with or without replacement. 
The use of uniform sampling with out replacement model in the analysis leads to a  gain in terms of the number of measurements $m$. However, this
gain is rather minimal and for sake of streamlined
presentation, the uniform sampling with replacement is adopted in this paper.

\end{remark}

\section{Main Result and proof}
\label{sec:mainresult}

The main result of this paper shows that the nuclear norm minimization program for the general matrix completion problem 
in \eqref{eq:m_completion} recovers the underlying matrix with very high probability. A precise statement is stated in the theorem below. 
With out loss of generality and for ease of analysis, the theorem considers square matrices. 
The proof for rectangular matrices follows with minor modifications. 
\begin{theorem}
Let $\M\ \in \real^{n\times n}$ be a matrix of rank $r$ that obeys the coherence conditions
\eqref{eq:coherencew1}, \eqref{eq:coherencev1} and \eqref{eq:jointcoherence1} with coherence 
$\nu$ and satisfies the correlation condition \eqref{eq:correlation} with correlation parameter $\mu$.
Define $C$ as follows: 
$\displaystyle C = \max\left(\lambda_{\max}(\H^{-1})^{3}, c_{v}, \frac{(\mu+1) \|\H^{-1}\|_{\infty} }{\min(\,(\mu+1)\|\H^{-1}\|_{\infty},\frac{1}{4})^{2}}\right)$
with parameter $c_{v}$ from \eqref{eq:coherencev1}.
Assume $m$ measurements, $\{\langle \M\,,\w_{\alpha}\rangle\}_{\alpha\in\Omega}$, are sampled uniformly at random with replacement.
For $\beta>1$, if
\begin{equation} \label{eq:m_general_basis}
m \ge \log_{2}\left(4\sqrt{2L}\frac{\lambda_{\max}(\H)}{\lambda_{\min}(\H)}\sqrt{r}\right)nr
\left(48\big[C\nu+\frac{n}{Lr}\big]\big[\,\beta\log(n)+\log\left(4\log_{2}\left(4\sqrt{2L}\frac{\lambda_{\max}(\H)}{\lambda_{\min}(\H)}\sqrt{r}\right)\right)\big]\right)
\end{equation}
the solution to \eqref{eq:m_completion} is unique and equal to $\M$ with probability at least $1-n^{-\beta} $. 
\label{thm:thm1}
\end{theorem}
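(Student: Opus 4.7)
The plan is to establish the theorem via a dual certificate argument in the spirit of Gross's golfing scheme, adapted to the dual basis formulation. First I would record the first-order optimality condition: it suffices to exhibit a matrix $\Y$ in the range of $\Po^*$, together with invertibility of the restricted frame operator $\F$ on $\T$, such that $\P_\T(\Y)$ is very close to $\U\V^T$ in Frobenius norm and $\|\P_{\T^\perp}(\Y)\|$ is strictly less than $1/2$. Because the measurements $\langle\M,\w_\alpha\rangle$ are precisely the expansion coefficients of $\M$ in the dual basis $\{\z_\alpha\}$, any feasible $\X\in\S$ differs from $\M$ only by an element of the null space of $\Po$, and such a $\Y$ certifies that the nuclear norm cannot decrease along any such perturbation.

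The second step is to show that the restricted frame operator concentrates around a positive multiple of $\P_\T$. Writing $\F$ as a sum of $m$ i.i.d. rank-one terms $\frac{L}{m}\langle\,\cdot\,,\w_\alpha\rangle\w_\alpha$ and sandwiching by $\P_\T$, I would apply the operator Chernoff bound. The uniform spectral bound on each summand comes from Lemma \ref{corr_props}(b), namely $\lambda_{\max}(\sum_\alpha \w_\alpha^T\w_\alpha)\le (\mu+1)n$, together with the coherence bound $\|\P_\T \w_\alpha\|_F^2 \le \lambda_{\max}(\H^{-1})\nu r/n$ from \eqref{eq:coherencew}; the variance estimate is analogous. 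This yields, with high probability,
\begin{equation*}
\bigl\|\P_\T \F \P_\T - \tfrac{L}{n}\P_\T\bigr\| \;\le\; \tfrac{1}{2}\cdot\tfrac{L}{n},
\end{equation*}
provided $m$ is of order $(\mu+1)\|\H^{-1}\|_\infty\,\nu\, nr\log n$, which is the origin of the $(\mu+1)\|\H^{-1}\|_\infty$ factor inside the constant $C$.

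The third step is the golfing construction. Partition $\Omega$ into $k=\lceil \log_2(4\sqrt{2L}\,\lambda_{\max}(\H)/\lambda_{\min}(\H)\,\sqrt{r})\rceil$ independent batches $\Omega_1,\dots,\Omega_k$ and let $\Po_j$ denote the sampling operator restricted to $\Omega_j$. Set $\Y_0=0$ and
\begin{equation*}
\Y_j \;=\; \Y_{j-1} \;+\; \Po_j^{*}\bigl(\U\V^T-\P_\T\Y_{j-1}\bigr).
\end{equation*}
By construction $\Y_k$ lies in the range of $\Po^*$. Since $\E[\Po_j^{*}(\X)]=\X$ on $\S$ by the duality $\langle\z_\alpha,\w_\beta\rangle=\delta_{\alpha\beta}$, the per-batch concentration of the previous step shows that $\|\P_\T\Y_j-\U\V^T\|_F$ contracts by roughly $1/2$ at each step, so after $k$ iterations the residual is at most $\sqrt{r}/2^k\le 1/(4\sqrt{2L}\,\lambda_{\max}(\H)/\lambda_{\min}(\H))$, small enough to absorb into the final certificate bound. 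A parallel pointwise control on $\langle\w_\alpha,\U\V^T-\P_\T\Y_j\rangle$ must be propagated, because $\|\P_{\T^\perp}\Y_k\|$ telescopes as $\sum_j \|\P_{\T^\perp}\Po_j^{*}(\U\V^T-\P_\T\Y_{j-1})\|$; each term is bounded via a matrix Bernstein inequality using the joint coherence \eqref{eq:jointcoherence} and its dual form.

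Finally, a union bound over the $O(\log n)$ batches converts the per-step concentration events into a total failure probability of at most $n^{-\beta}$, and the sample complexity \eqref{eq:m_general_basis} is the per-batch requirement multiplied by the number of batches. The principal obstacle I anticipate is keeping track of the non-self-adjointness of $\Po$: because $\Po$ and $\Po^*$ involve different basis systems ($\w$ versus $\z$), every concentration estimate splits into a primal and a dual version with different appearances of $\lambda_{\max}(\H^{-1})$ and $\|\H^{-1}\|_\infty$. Showing that the $\ell_\infty$-type coherence survives each golfing iteration despite this basis mixing is precisely where the correlation condition and the conditioning of $\H$ reappear in the final constant, and it is where the bulk of the technical bookkeeping will lie.
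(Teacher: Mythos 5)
Your overall architecture---a dual certificate in the range of $\Po^{*}$ built by golfing over independent batches, together with control of the restricted frame operator on $\T$---is the same as the paper's, but the quantitative step your plan leans on is not available in the general non-orthogonal setting. The expectation of $\F$ is $\sum_{\alpha\in\Us}\langle\cdot\,,\w_{\alpha}\rangle\w_{\alpha}$, which is the identity on $\S$ when the basis is orthonormal (not $\frac{L}{n}$ times it) and, for a genuinely non-orthogonal basis such as the EDG basis, is not proportional to the identity at all: by Lemma \ref{norm_equivalent} its spectrum is only known to lie in $[\lambda_{\min}(\H),\lambda_{\max}(\H)]$, and its compression to $\T$ need not be a multiple of $\P_{\T}$. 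Consequently a two-sided bound of the form $\|\P_{\T}\F\P_{\T}-\frac{L}{n}\P_{\T}\|\le\frac{1}{2}\cdot\frac{L}{n}$ cannot be proved (it is false in general, no matter how large $m$ is), and the paper deliberately avoids any such statement: it proves only the one-sided bound $\lambda_{\min}(\P_{\T}\F\P_{\T})>\frac{1}{2}\lambda_{\min}(\H)$ via the operator Chernoff bound (Lemma \ref{min_eig}), which is exactly what Theorem \ref{uniqueness_conditions}(a) requires. Relatedly, your attribution of the $(\mu+1)\|\H^{-1}\|_{\infty}$ factor in $C$ to this step is misplaced; in the paper the correlation parameter enters only through Lemma \ref{corr_props}(b) inside the spectral-norm estimate of $\P_{\T^{\perp}}\R^{*}_{j}\Q_{j-1}$ (Lemma \ref{golfing_size}), while the Chernoff step contributes no $\mu$-dependence.

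The second gap is that the golfing contraction cannot be deduced from concentration of $\F$ at all: $\F$ involves only the $\w_{\alpha}$, whereas the operator governing the decay of $\|\P_{\T}\Y_{j}-\textrm{Sgn}\,\M\|_{F}$ is $\P_{\T}\R^{*}_{j}\P_{\T}$, which mixes the dual basis $\z_{\alpha}$ with $\w_{\alpha}$. The paper handles this with a separate fixed-matrix estimate (the vector Bernstein bound of Lemma \ref{injective}) applied to $\Q_{j-1}$, legitimate because $\Q_{j-1}$ is independent of batch $j$, and with constants $\lambda_{\max}(\H^{-1})\|\H^{-1}\|_{\infty}\nu$ rather than $(\mu+1)$. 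Likewise, before matrix Bernstein can be applied to $\P_{\T^{\perp}}\R^{*}_{j}\Q_{j-1}$ one must recursively propagate an $\ell_{\infty}$-type bound on $\max_{\beta}|\langle \Q_{j}\,,\z_{\beta}\rangle|$ (Lemma \ref{joint_coherence_size_golfing}); the pointwise quantity that must be controlled is against the dual basis $\z_{\beta}$, not against $\w_{\alpha}$ as you wrote. You correctly flag the basis-mixing as the main bookkeeping burden, but as written your plan needs these substitutions---one-sided Chernoff for $\F$, a per-vector Bernstein estimate for the contraction, and the dual-basis $\ell_{\infty}$ recursion---before it can deliver the stated sample complexity and failure probability.
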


Since the general matrix completion problem in \eqref{eq:m_completion} is convex, the optimizer can be characterized using the KKT
conditions.  A compact and simple form of these conditions is derived in \cite{candes2009exact}.
With this, a brief outline of the proof is as follows. The proof is divided into two main parts. 
In the former part, we show that if the aforementioned optimality  conditions hold, 
then $\M$ is a unique solution to the minimization problem. The latter and main part of the proof is concerned with showing that, under certain assumptions, these conditions do hold with very high probability.
The implication of this is that, for a suitable choice of $m$, $\M$ is a unique solution for the general matrix completion problem.

Our proof adapts arguments from \cite{gross2011recovering,tasissa2018exact}. Few remarks on the difference of
our proof to the matrix completion proofs in \cite{gross2011recovering,recht2011simpler} and our previous work \cite{tasissa2018exact} are in order.
\begin{enumerate}
\item The operator $\Po$ is not self-adjoint. The main implication of this is that the operator $\P_{\T}\Po\P_{\T}$, an important
operator in matrix completion analysis, is no longer isometric to $\P_{\T}$. It turns out the appropriate operator to consider is
$\P_{\T}\Po^{*}\Po\P_{\T}$ where the goal is to show that this operator is nearly isometric to $\P_{\T}$.
However, this approach is not amenable to simple analysis. In this work, the main argument is
based on showing that the minimum eigenvalue of the operator $\P_{\T}\F\P_{\T}$ is bounded away from zero
with very high probability. To prove this fact, the operator Chernoff bound is employed.  
The interpretation of this bound is that, restricted to the space $\T$, the operator $\P_{\T}\F\P_{\T}$
is full rank. If the measurement basis is orthogonal, $\F = \Po$, the implication is that
the operator $\P_{\T}\Po\P_{\T}$ on $\T$ is invertible.  With this, $\P_{\T}\F\P_{\T}$ can be understood 
as the operator analogue of $\P_{\T}\Po\P_{\T}$ for non-orthogonal measurements.

\item The measurement basis is non-orthogonal. Since we use the dual basis approach, the spectrum of the matrices $\H$ and $\H^{-1}$
become important. However, since we do not work with a fixed basis, all the constants are unknown. This is particularly relevant
and presents some challenge in the use of concentration inequalities and will be apparent in later analysis. 
\end{enumerate}
For the matrix completion problem, with measurement basis
$\bm{e}_{ij}$, the theoretical lower bound of $O(nr\nu\log\,n)$ was established in \cite{candes2010power}.
Note that, if $C=O(1)$ and $\mu = O(1)$, Theorem \ref{thm:thm1} requires on the order of $nr\nu\log^{2}\,n$ measurements
which is only $\log(n)$ factor away from the optimal lower bound.  The order of theorem \ref{thm:thm1} is also the same order 
as those used in \cite{gross2011recovering,recht2011simpler}. These works consider the low rank recovery problem with any orthogonal basis and the matrix completion problem respectively. 
Before the proof of main result, we illustrate Theorem~\ref{thm:thm1} on some of the examples discussed in the introduction.  \\*

\noindent
\textbf{Euclidean Distance Geometry Problem}:  A matrix completion formulation and theoretical analysis of the Euclidean distance geometry
problem appears in \cite{tasissa2018exact}. Using the existing analysis in \cite{tasissa2018exact}, $L = \frac{n(n-1)}{2}$, $\lambda_{\max}(\H^{-1}) \le 4$, 
$\lambda_{\min}(\H^{-1}) = \frac{1}{8n}$ and $\|\H^{-1}\|_{\infty}= 8$. The constant $c_v$, which satisfies $c_v \ge \lambda_{\max}(\H^{-1})\|\H^{-1}\|_{\infty}$,
is set to $32$. Using Lemma \ref{corr_props}$(d)$, the correlation condition for the Euclidean Distance Geometry Problem holds with $\mu = 1$.  
Using Theorem \ref{thm:thm1}, it can be seen that the number of samples needed to recover the underlying low rank Gram matrix for the Euclidean distance geometry problem
is $O(nr\nu\log^2 n)$. \\*

\noindent
\textbf{Spectrally Sparse Signal Reconstruction}:  For simplicity, consider the case $n_1=n_2=n$. 
For this problem, the orthonormality of the Hankel basis  implies that $\H = \I$. Therefore, $\lambda_{\max}(\H^{-1}) = \lambda_{\min}(\H^{-1}) = \|\H^{-1}\|_{\infty} = 1$ and $L = n^2$.
The correlation condition holds trivially with $\mu = 0$. The number of samples needed to cover the underlying low rank matrix is $O(nr\nu\log^2 n)$.\\*

\noindent
\textbf{Weighted Nuclear Norm Minimization}:  As discussed earlier, the weight matrix $\D$ is diagonal.
For simplicity, assume that $\sum_{i=1}^{n} \D_{i,i}= 1$. In this case,
the size of a diagonal entry informs the proportion of weight assigned to the corresponding row of the true matrix. For the main result in Theorem \ref{thm:thm1}
to hold, certain assumptions are necessary. First, the basis $\{\D^{-1}\w_{\alpha}\}_{\alpha=1}^{L}$ needs to satisfy the correlation condition.
Second, the size of the maximum and minimum eigenvalues of the matrix $\bar{\H} = (\D^{-1}\W)^{T} (\D^{-1}\W) = \W^T(\D^{-1})^2\W$
is important. After minor analysis, with $\H = \W^T\W$,  $\lambda_{\max}(\bar{\H})\le \lambda_{\max}(\H)+\lambda_{\max}\left(\W^T[(\D^{-1})^2-\I]\W\right)$.
A similar analysis leads to $\lambda_{\min}(\bar{\H}) \ge \lambda_{\min}(\H) + \lambda_{\min} \left(\W^T[(\D^{-1})^2-\I]\W\right)$. 
Assume that, with no weighting, the original basis $\w_{\alpha}$ satisfies the necessary conditions for Theorem \ref{thm:thm1} to hold. 
For the weighted nuclear norm minimization to hold, one choice of sufficient conditions is that $\lambda_{\max}\left(\W^T[(\D^{-1})^2-\I]\W\right)= c_1 \lambda_{\max}(\H)$
and $\lambda_{\min}\left(\W^T[(\D^{-1})^2-\I]\W\right) = c_2\lambda_{\min}(\H)$ where $c_1,c_2$ are dimension-free constants. A given choice of $\D$
can be checked if it verifies these criterion. If the result is positive, the complexity of  $O(nr\nu\log^2 n)$ from Theorem \ref{thm:thm1} can be attained. Note that
the conditions above are sufficient but not necessary. Sharper and more explicit condition on $\D$ requires further analysis and is not within the scope of this paper.
It can be surmised that, in practice, one is working with a fixed basis matrix $\W$ and controlling the spectrum of $\bar{\H}$ in terms of $\D$ is more amenable
to analysis.

\begin{remark}

It should be remarked that the minimum number of samples noted in Theorem \ref{thm:thm1} can be lowered,
the constants could be improved, if one is working with explicit basis. The analysis presented here is generic and does not
assume specific structure of the basis. Where the latter is readily available, most inequalities appearing in the technical
details can be tightened lowering the sample complexity. For instance, if the basis is orthonormal as in the problem of
spectrally sparse signal construction, the analysis in \cite{gross2011recovering} gives tight results. In general, for
explicit basis with some structure, one can adopt the analysis in this paper and improve certain bounds.

\end{remark}

Now we return to the main proof. For ease, the proof is structured into several intermediate results.
The starting result is Theorem \ref{uniqueness_conditions} which 
shows that if certain conditions hold,  $\M$ is a unique  solution to \eqref{eq:m_completion}.

\begin{theorem}\label{uniqueness_conditions}
Given $\X\in \S$, let $\bDelta = \X - \M$ denote the deviation from the true low rank matrix $\M$. 
$\bDelta_{\T}$ and $\bDelta_{\T^{\perp}}$ denote the orthogonal projection of
 $\bDelta$ to $\T$ and $\T^{\perp}$ respectively. For any given $\Omega$ with $|\Omega|=m$, the following two statements hold.
\begin{enumerate}
\item[(a).]
If $\displaystyle \|\bDelta_{\T}\|_{F} \ge \sqrt{2L}\frac{\lambda_{\max}(\H^{-1})}{\lambda_{\min}(\H^{-1})}   
\|\bDelta_{\T^{\perp}}\|_{F}$ and $ \lambda_{\min}\left(\P_{\T}\,\F\,\P_{\T}\right) >\frac{1}{2}\lambda_{\min}(\H)$, then $\Po \bDelta \neq 0$.
\item [(b).] If $ \displaystyle \|\bDelta_{\T}\|_{F} < \sqrt{2L}\frac{\lambda_{\max}(\H^{-1})}{\lambda_{\min}(\H^{-1})}
\|\bDelta_{\T^{\perp}}\|_{F}$ for $\bDelta \in \ker \Po$, and there exists a  $\Y \in \textrm{range}\,\, \Po^{*}$ satisfying,
\begin{equation}\label{eq:yconds}
\|\P_{\T}\Y-\textrm{Sgn}\,\M\|_{F} \le \frac{1}{4}\sqrt{\frac{1}{2L}} \frac{\lambda_{\min}(\H^{-1})}{\lambda_{\max}(\H^{-1})}      \quad \textrm {and} \quad \|\P_{\T^{\perp}}\Y\|\le \frac{1}{2}
\end{equation}
then $\|\X\|_{*}= \|\M + \bDelta\|_{*}> \|\M\|_{*}$.

\end{enumerate}
\end{theorem}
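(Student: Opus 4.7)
The plan for (a) is to argue by contradiction: assume $\Po\bDelta=0$ and derive an upper bound on $\|\bDelta_\T\|_F$ that conflicts with the hypothesis. Using the biorthogonality $\langle\z_\alpha,\w_\beta\rangle=\delta_{\alpha,\beta}$, taking the inner product of $\Po\bDelta=\frac{L}{m}\sum_{\alpha\in\Omega}\langle\bDelta,\w_\alpha\rangle\z_\alpha=0$ against each $\w_\beta$, $\beta\in\Omega$, forces $\langle\bDelta,\w_\alpha\rangle=0$ for every $\alpha\in\Omega$, and hence $\F\bDelta=0$. Splitting $\bDelta=\bDelta_\T+\bDelta_{\T^\perp}$ and expanding $\langle\F\bDelta,\bDelta\rangle=0$ gives
\[
\langle\F\bDelta_\T,\bDelta_\T\rangle+2\langle\F\bDelta_\T,\bDelta_{\T^\perp}\rangle+\langle\F\bDelta_{\T^\perp},\bDelta_{\T^\perp}\rangle=0.
\]
Since $\F\succeq 0$, the Cauchy--Schwarz inequality $\langle\F\bDelta_\T,\bDelta_{\T^\perp}\rangle^2\le\langle\F\bDelta_\T,\bDelta_\T\rangle\langle\F\bDelta_{\T^\perp},\bDelta_{\T^\perp}\rangle$ combined with the display forces $(\langle\F\bDelta_\T,\bDelta_\T\rangle-\langle\F\bDelta_{\T^\perp},\bDelta_{\T^\perp}\rangle)^2\le 0$, i.e., these two quadratic forms are equal. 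Using the hypothesis $\lambda_{\min}(\P_\T\F\P_\T)>\tfrac12\lambda_{\min}(\H)$ to bound the left-hand side from below and the identity $\textrm{Tr}(\F)=\tfrac{L}{m}\sum_{\alpha\in\Omega}\|\w_\alpha\|_F^2=L$ to obtain $\lambda_{\max}(\F)\le L$ on the right gives $\|\bDelta_\T\|_F<\sqrt{2L/\lambda_{\min}(\H)}\,\|\bDelta_{\T^\perp}\|_F$. Because $\textrm{Tr}(\H)=L$ for the $L\times L$ positive-definite Gram $\H$, one has $\lambda_{\max}(\H)\ge 1\ge\lambda_{\min}(\H)$, so $\lambda_{\max}(\H)^2\ge\lambda_{\min}(\H)$ and consequently $\sqrt{2L/\lambda_{\min}(\H)}\le\sqrt{2L}\,\lambda_{\max}(\H)/\lambda_{\min}(\H)=\sqrt{2L}\,\lambda_{\max}(\H^{-1})/\lambda_{\min}(\H^{-1})$, contradicting the hypothesized lower bound on $\|\bDelta_\T\|_F$.

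For (b), I would run the standard dual-certificate argument adapted to the dual-basis formulation. Since $\textrm{Sgn}\,\M\in\T$ and any $\W_0\in\T^\perp$ with $\|\W_0\|\le 1$ yields a subgradient $\textrm{Sgn}\,\M+\W_0$ of $\|\cdot\|_*$ at $\M$, choosing $\W_0$ so that $\langle\W_0,\bDelta_{\T^\perp}\rangle=\|\bDelta_{\T^\perp}\|_*$ gives
\[
\|\M+\bDelta\|_*\ge\|\M\|_*+\langle\textrm{Sgn}\,\M,\bDelta_\T\rangle+\|\bDelta_{\T^\perp}\|_*.
\]
The key identity $\langle\Y,\bDelta\rangle=0$ follows from $\Y\in\textrm{range}\,\Po^*$ and $\bDelta\in\ker\Po$. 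Writing $0=\langle\P_\T\Y,\bDelta_\T\rangle+\langle\P_{\T^\perp}\Y,\bDelta_{\T^\perp}\rangle$ and inserting $\pm\P_\T\Y$ inside the inner product yields
\[
\langle\textrm{Sgn}\,\M,\bDelta_\T\rangle=\langle\textrm{Sgn}\,\M-\P_\T\Y,\bDelta_\T\rangle-\langle\P_{\T^\perp}\Y,\bDelta_{\T^\perp}\rangle.
\]
Cauchy--Schwarz, the two bounds in \eqref{eq:yconds}, the estimate $\|\bDelta_{\T^\perp}\|_*\ge\|\bDelta_{\T^\perp}\|_F$, and the strict ratio hypothesis on $\|\bDelta_\T\|_F$ combine to give $|\langle\textrm{Sgn}\,\M-\P_\T\Y,\bDelta_\T\rangle|<\tfrac14\|\bDelta_{\T^\perp}\|_F$ and $|\langle\P_{\T^\perp}\Y,\bDelta_{\T^\perp}\rangle|\le\tfrac12\|\bDelta_{\T^\perp}\|_*$, so that $\|\M+\bDelta\|_*>\|\M\|_*+\tfrac14\|\bDelta_{\T^\perp}\|_*$; this is strict whenever $\bDelta\neq 0$, since the ratio hypothesis forbids $\bDelta_{\T^\perp}=0$ with $\bDelta_\T\neq 0$.

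The main technical point I anticipate is in (a): the chain of inequalities is essentially tight, so I expect to need the ``trivial'' trace bound $\lambda_{\max}(\F)\le L$ and then match it against the factor $\sqrt{2L}\,\lambda_{\max}(\H)/\lambda_{\min}(\H)$ in the hypothesis by invoking $\lambda_{\max}(\H)\ge 1\ge\lambda_{\min}(\H)$, which comes directly from the unit-norm normalization of the basis. The algebraic collapse $(a-c)^2\le 0$ extracted from the PSD Cauchy--Schwarz is the clean device for avoiding a separate estimate on the cross term $\langle\F\bDelta_\T,\bDelta_{\T^\perp}\rangle$; a Young-type split at this step would introduce a constant-factor loss that the tight hypothesis cannot absorb.
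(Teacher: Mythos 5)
Your proposal is correct, and the two halves compare differently with the paper. Part (b) is essentially the paper's own argument (the Gross-style dual certificate step): same subgradient $\textrm{Sgn}\,\M+\bm{\Lambda}$ with $\bm{\Lambda}\in\T^{\perp}$ attaining $\|\bDelta_{\T^{\perp}}\|_{*}$, same use of $\langle\Y,\bDelta\rangle=0$ for $\Y\in\textrm{range}\,\Po^{*}$, $\bDelta\in\ker\Po$, and the same Cauchy--Schwarz/duality bookkeeping, so there is nothing to add there. Part (a), however, is a genuinely different route. The paper bounds $\|\Po\bDelta_{\T}\|_{F}$ from below and $\|\Po\bDelta_{\T^{\perp}}\|_{F}$ from above via the min--max theorem applied to $\H^{-1}$ (which is where the factors $\lambda_{\min}(\H^{-1})$, $\lambda_{\max}(\H^{-1})$ enter), bounds $\sum_{\alpha\in\Omega}\langle\bDelta_{\T^{\perp}},\w_{\alpha}\rangle^{2}$ by the crude repetition factor $m$ together with $\lambda_{\max}(\H)$, and finishes with the triangle inequality; you instead assume $\Po\bDelta=0$, deduce $\langle\bDelta,\w_{\alpha}\rangle=0$ for all $\alpha\in\Omega$ (hence $\F\bDelta=0$) from biorthogonality, and use the PSD Cauchy--Schwarz collapse $(a-c)^{2}\le 0$ to force $\langle\F\bDelta_{\T},\bDelta_{\T}\rangle=\langle\F\bDelta_{\T^{\perp}},\bDelta_{\T^{\perp}}\rangle$, which you then squeeze between $\tfrac{1}{2}\lambda_{\min}(\H)\|\bDelta_{\T}\|_{F}^{2}$ and $\lambda_{\max}(\F)\|\bDelta_{\T^{\perp}}\|_{F}^{2}\le L\|\bDelta_{\T^{\perp}}\|_{F}^{2}$ via $\textrm{Tr}(\F)=L$; the final matching step $\lambda_{\max}(\H)\ge 1\ge\lambda_{\min}(\H)$ from $\textrm{Tr}(\H)=L$ (unit-norm basis) is valid and shows your derived threshold $\sqrt{2L/\lambda_{\min}(\H)}$ is in fact no larger than the theorem's $\sqrt{2L}\,\lambda_{\max}(\H^{-1})/\lambda_{\min}(\H^{-1})$, i.e.\ your argument proves the claim under a slightly weaker separation hypothesis. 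What each approach buys: yours avoids both the norm of $\Po$ and the admittedly suboptimal ``duplicates bounded by $m$'' estimate that the paper itself flags in its remark, at the price of being purely qualitative (it only rules out $\Po\bDelta=0$, with no quantitative lower bound on $\|\Po\bDelta\|_{F}$, which is all the theorem needs); the paper's chain keeps the explicit $m$- and $\H^{-1}$-dependent constants that motivate the exact threshold appearing in the statement and in the golfing-scheme bookkeeping. One shared, harmless caveat: both your proof and the paper's implicitly assume $\bDelta\neq 0$ (for $\bDelta=0$ the hypothesis of (a) holds vacuously but $\Po\bDelta=0$), which is immaterial since the theorem is only invoked for $\X\neq\M$.
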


\noindent
Theorem $\ref{uniqueness_conditions}$(a) states that, for ``large" $\bDelta_{\T}$, any deviation from $\M$ is not in the null space of the operator. 
Theorem $\ref{uniqueness_conditions}$(b) states that, for ``small" $\bDelta_{\T}$,  deviations from $\M$ increase the nuclear norm. 
The theorem at hand is deterministic and at this stage no assumptions are made on the construction of the set $\Omega$. As long as the assumptions
of the theorem are satisfied, the theorem will hold true. After proving the theorem, we proceed to argue that the conditions in the theorem hold
with very high probability. This will require certain sampling conditions and a suitable choice of $m = |\Omega|$.

\subsection{Proof of Theorem \ref{uniqueness_conditions}}
\begin{proof} [Proof of Theorem \ref{uniqueness_conditions}(a)] First, observe that $\|\Po\,\bDelta\|_{F} =\|\Po\,\bDelta_{\T} + \Po\,\bDelta_{\T^{\perp}}\|_{F} \ge \| \Po\, \bDelta_{\T}\|_{F}-
\| \Po\,\bDelta_{\T^{\perp}}\|_{F}$. Since we want to show that  $\Po \bDelta \neq 0$, the observation 
leads to considering a lower bound for $\|\Po\, \bDelta_{\T}\|_{F}$ and an upper bound for
$\|\Po\,\bDelta_{\T^{\perp}}\|_{F}$. For any $\X$, $\|\Po\, \X\|_{F}^{2}$
can be bounded as follows.
\begin{equation*}
\|\Po\,\X\|_{F}^{2} = \langle \X\,,\Po^{*}\,\Po\,\X\rangle = 
\frac{L^2}{m^2}\sum_{\beta \in \Omega} \sum_{\alpha \in \Omega}  \langle \X\,,\w_{\alpha}\rangle  \langle \X\,,\w_{\beta}\rangle \langle \z_{\alpha}\,,\z_{\beta}\rangle=  
\frac{L^2}{m^2}\sum_{\beta \in \Omega} \sum_{\alpha \in \Omega}  \langle \X\,,\w_{\alpha}\rangle  \langle \X\,,\w_{\beta}\rangle \H^{\alpha,\,\beta} 
\end{equation*}
where the last inequality uses the fact that $\langle \z_{\alpha}\,,\z_{\beta}\rangle = \H^{\alpha,\,\beta}$. The min-max theorem applied to the above equation results 
\begin{equation}\label{eq:minmax_bound}
\frac{L^2}{m^2}\lambda_{\min}(\H^{-1})\sum_{\alpha \in \Omega} \langle \X\,,\w_{\alpha}\rangle ^{2}\le 
\|\Po\,\X\|_{F}^{2} \le \frac{L^2}{m^2}\lambda_{\max}(\H^{-1})\sum_{\alpha \in \Omega} \langle \X\,,\w_{\alpha}\rangle ^{2}
\end{equation}
Setting $\X = \bDelta_{\T^{\perp}}$ and using the right inequality above, we obtain 
\begin{equation} \label{eq:pdelta_perp_upper_bound}
\|\Po\,\bDelta_{\T^{\perp}}\|_{F}^{2} \le \frac{L^2}{m^2}\lambda_{\max}(\H^{-1})\sum_{\alpha \in \Omega} \langle \bDelta_{\T^{\perp}} \,,\w_{\alpha}\rangle ^{2}\le
m\frac{L^2}{m^2} \frac{\lambda_{\max}(\H^{-1})}{\lambda_{\min}(\H^{-1})}  \|\bDelta_{\T^{\perp}}\|_{F}^{2}
\end{equation}
where the last inequality uses the fact that $\sum_{\alpha \in \Us} \langle \X\,,\w_{\alpha}\rangle^{2} \le \lambda_{\max}(\H)\|\X\|_{F}^{2}$
(Lemma  \ref{norm_equivalent}) and  the constant $m$  bounds the maximum number of repetitions for any given measurement.  Analogously, setting $\X =\bDelta_{\T}$ and using the left inequality 
in \eqref{eq:minmax_bound}, we obtain 
\begin{equation} \label{eq:pdelta_lower_bound_init}
\|\Po\,\bDelta_{\T}\|_{F}^{2} \ge 
  \frac{L^2}{m^2}\lambda_{\min}(\H^{-1}) \sum_{\alpha \in \Omega} \, \langle \bDelta_{\T}\,,\w_{\alpha}\rangle ^{2} 
= \frac{L}{m} \lambda_{\min}(\H^{-1})  \langle \bDelta_{\T}\,,\F\,\bDelta_{\T}\rangle 
  \end{equation}
  The next step considers the projection onto $\T$ of the restricted frame operator and applies the min-max theorem
  resulting the following inequality. 
\begin{align} 
\|\Po\,\bDelta_{\T}\|_{F}^{2} \ge 
    \frac{L}{m}\lambda_{\min}(\H^{-1}) \langle \bDelta_{\T}\,,\F\,\bDelta_{\T}\rangle   
  &=    \frac{L}{m}\lambda_{\min}(\H^{-1})  \langle \bDelta_{\T}\,,\P_{\T}\,\F\,\P_{\T}\,\bDelta_{\T}\rangle  \nonumber  \\
  & \ge 
  \frac{L}{m} \lambda_{\min}(\H^{-1})  \lambda_{\min}(\P_{\T}\,\F\,\P_{\T})\,\|\bDelta_{T}\|_{F}^{2}    \label{eq:pdelta_lower_bound}
  \end{align}
Above, the first equality follows since $\P_{\T}$ is self adjoint and evidently $\bDelta_{\T} \in \T$.
The inequality in \eqref{eq:pdelta_lower_bound} can be reduced further using the assumption $ \lambda_{\min}\left(\P_{\T}\,\F\,\P_{\T}  \right)>\frac{1}{2}\lambda_{\min}(\H)$
in the theorem resulting
\begin{align} 
\|\Po\,\bDelta_{\T}\|_{F}^{2} > \frac{L}{2m} \lambda_{\min}(\H^{-1})\lambda_{\min}(\H) \,\|\bDelta_{T}\|_{F}^{2}    \label{eq:pdelta_lower_bound_simplified}
  \end{align}
Finally, use the inequalities in \eqref{eq:pdelta_perp_upper_bound} and \eqref{eq:pdelta_lower_bound_simplified} and the assumption in the theorem to
show that $\|\Po\,\bDelta\|_{F}>0$ as follows. 
\begin{align*}
\|\Po\,\bDelta\|_{F}  & >  \sqrt{\frac{L}{2m}\lambda_{\min}(\H^{-1})\lambda_{\min}(\H)} \,\|\bDelta_{T}\|_{F}-
\frac{L}{\sqrt{m}} \sqrt{\frac{\lambda_{\max}(\H^{-1})}{\lambda_{\min}(\H^{-1})}}  \|\bDelta_{\T^{\perp}}\|_{F} \\
& \ge  \sqrt{\frac{L}{2m}\lambda_{\min}(\H^{-1})\lambda_{\min}(\H)}              \left(  \sqrt{2L}\frac{\lambda_{\max}(\H^{-1})}{\lambda_{\min}(\H^{-1})} \right)\|\bDelta_{\T^{\perp}}\|_{F}
-\frac{L}{\sqrt{m}} \sqrt{\frac{\lambda_{\max}(\H^{-1})}{\lambda_{\min}(\H^{-1})}}  \|\bDelta_{\T^{\perp}}\|_{F} = 0 
\end{align*}  
This concludes proof of Theorem \ref{uniqueness_conditions}(a).  
  
\end{proof}

 \begin{remark}
 (a) The upper bound estimate for $||\Po\,\bDelta_{\T^{\perp}}||_{F}^{2}$ is not optimal. This is so since 
 the number of times a given measurement can be duplicated is set to $m$ which is a worst case estimate. 
 One approach to improve the estimate is to make use of standard concentration inequalities and argue that
 the expected number of duplicates for a given measurement is much smaller than $m$. A second option,
 as noted in the remark on the sampling model section, is to use a uniform sampling with out replacement model.
 With this, the estimate can be improved since the factor $m$ is no longer necessary. While these alternatives
 lead to a better estimate of the upper bound, the final gain in terms of number of measurements is minor as the improved estimates
 are inside of a $\log_2$. For these reasons and ease of presentation, the current estimate is used in the forthcoming
 analysis.   
(b)Lower bounding the term $ \sum_{\alpha \in \Omega} \, \langle \bDelta_{\T}\,,\w_{\alpha}\rangle ^{2}$ does not
lend itself to simpler analysis. For example, the use of standard concentration inequalities results probability of failures that scale with $n$. 
Since $\|\Po\,\bDelta_{\T}\|_{F}^{2} = \langle \bDelta_{\T}\,,\P_{\T}\Po^{*}\Po\P_{\T}\bDelta_{\T}-\P_{\T}\bDelta_{\T}\rangle+
\|\P_{\T}\bDelta_{\T}\|_{F}^{2}$, equivalently, we can consider an upper bound of $\|\P_{\T}\Po^{*}\Po\P_{\T}\bDelta_{\T}-\P_{\T}\|$ to lower bound $\|\Po\,\bDelta_{\T}\|_{F}^{2}$. 
The upper bound calculations can be carried out but the calculations are involved and assume certain structure of the basis
$\w_{\alpha}$.  The simple alternative approach shown above is general since it does not make restrictive assumptions on the measurement basis. 
 
\end{remark} 

\begin{proof} [Proof of Theorem \ref{uniqueness_conditions}(b)]
Let $\X = \M+\bDelta$ be a feasible solution to \eqref{eq:m_completion} with the condition that 
 $\|\bDelta_{\T}\|_{F} <\sqrt{2L}\frac{\lambda_{\max}(\H^{-1})}{\lambda_{\min}(\H^{-1})}\\ \|\bDelta_{\T^{\perp}}\|_{F}$
and $\bDelta \in \ker \Po$. The goal now is to show that, for any $\X$ that satisfies these assumptions,  the nuclear norm minimization is violated meaning
that $\|\X\|_{*} =\|\M + \bDelta\|_{*}> \|\M\|_{*}$. The proof of this fact makes use of the dual certificate approach
in \cite{gross2011recovering}. The idea is to endow a certain object, named a dual certificate $\Y$, with certain conditions
so as to ensure that any $\X$ satisfying the earlier made assumptions is not a solution to \eqref{eq:m_completion}. 
It then becomes a task to construct the certificate which satisfies the preset conditions. For ease of later reference, we start with the former 
task reproducing a proof, with minor changes, in section $2$E of \cite{gross2011recovering}.  First, using the duality of
the spectral norm and the nuclear norm, note that there exists a $\bm{\Lambda} \in \T^{\perp}$ with $||\bm{\Lambda}||=1$ such
that $\langle \bm{\Lambda}\,, \P_{\T^{\perp}}\,(\bDelta) \rangle=||\P_{\T^{\perp}}\,(\bDelta) ||_{*}$. 
Second, using the characterization of the subgradient of the nuclear norm \cite{watson1992characterization},
$\partial \|\M\|_{*} = \{ \textrm{Sgn} \,\M + \bm{\Gamma}~|~ \bm{\Gamma}\in \T^{\perp} ~\&~\|\P_{\T^{\perp}}\,\bm{\Gamma}\| \le 1\}$. 
With this, it can be readily verified
that $\textrm{Sgn}\, \M + \bm{\Lambda}$ is a subgradient of $||\M||_{*}$ at $\M$. Mathematically, we have 
$ \|\M+\bDelta\|_{*} \ge ||\M||_{*} + \langle  \textrm{Sgn} \,\M + \bm{\Lambda}\,, \bDelta\rangle$. 
Using the condition $\Y \in \textrm{range}\,\,\Po^{*}$ which implies $\langle \Y\,,\bDelta \rangle = 0$ in the previous inequality, we obtain
\begin{align*}
 \|\M+\bDelta\|_{*} & \ge ||\M||_{*} + \langle  \textrm{Sgn} \,\M +  \bm{\Lambda}\,, \bDelta\rangle \\
                                & = ||\M||_{*} + \langle  \textrm{Sgn} \,\M + \bm{\Lambda}-\Y\,, \bDelta\rangle \\          
                                & = ||\M||_{*} + \langle \textrm{Sgn}\,\M - \P_{\T}\,\Y\,,\bDelta_{\T} \rangle + 
||\bDelta_{\T^{\perp}}||_{*} -\langle \P_{\T^{\perp}}\Y\,,\bDelta_{\T^{\perp}} \rangle                
\end{align*}
The third equality follows using the earlier choice of $\bm{\Lambda}$ and the fact that $\textrm{Sgn}\,\M \in \T$ (see Lemma \ref{signx_prop}).
Finally, we apply the assumptions of the theorem to the last equation above to obtain 
\begin{align*}
   \|\M+\bDelta\|_{*}   &\ge ||\M||_{*} + \langle \textrm{Sgn}\,\M - \P_{\T}\,\Y\,,\bDelta_{\T} \rangle + 
||\bDelta_{\T^{\perp}}||_{*} -\langle \P_{\T^{\perp}}\Y\,,\bDelta_{\T^{\perp}} \rangle\\
&\ge  ||\M||_{*}-\frac{1}{4}\sqrt{\frac{1}{2L}} \frac{\lambda_{\min}(\H^{-1})}{\lambda_{\max}(\H^{-1})}  \|\bDelta_{\T}\|_{F}
+\|\bDelta_{\T^{\perp}}\|_{*}-\frac{1}{2}\|\bDelta_{\T^{\perp}}\|_{*} \\
&\ge  ||\M||_{*} -\frac{1}{4}\sqrt{\frac{1}{2L}} \frac{\lambda_{\min}(\H^{-1})}{\lambda_{\max}(\H^{-1})} \|\bDelta_{\T}\|_{F}
+\frac{1}{2}\|\bDelta_{\T^{\perp}}\|_{F}\\
&> ||\M||_{*}  -\frac{1}{4}\sqrt{\frac{1}{2L}} \frac{\lambda_{\min}(\H^{-1})}{\lambda_{\max}(\H^{-1})}
\left(\sqrt{2L}\frac{\lambda_{\max}(\H^{-1})}{\lambda_{\min}(\H^{-1})}   \|\bDelta_{\T^{\perp}}\|_{F}  \right)
+\frac{1}{2}\|\bDelta_{\T^{\perp}}\|_{F}\\
& =||\M||_{*} + \frac{1}{4}\|\bDelta_{\T^{\perp}}\|_{F}
\end{align*}
It can be concluded that $\|\M+\bDelta\|_{*} > \|\M\|_{*}$ as desired. 
\end{proof}

Next, we state and prove a corollary which shows that $\M$ is a unique solution to \eqref{eq:m_completion} if the deterministic assumptions of 
Theorem \ref{uniqueness_conditions} hold. 

\begin{corollary}
If the conditions of Theorem \ref{uniqueness_conditions} hold, $\M$ is a unique solution to \eqref{eq:m_completion}.
\end{corollary}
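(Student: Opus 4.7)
The plan is to combine parts (a) and (b) of Theorem \ref{uniqueness_conditions} through a clean case analysis on the deviation $\bDelta = \X - \M$, where $\X$ is any optimal solution to \eqref{eq:m_completion}. Since $\M$ is itself feasible (trivially $\Po \M = \Po \M$), to establish uniqueness it suffices to show that every feasible $\X$ with $\|\X\|_* \le \|\M\|_*$ must satisfy $\bDelta = 0$.

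First I would record that feasibility forces $\Po \bDelta = 0$, i.e.\ $\bDelta \in \ker \Po$. Then I would split on the magnitudes of $\|\bDelta_{\T}\|_F$ and $\|\bDelta_{\T^{\perp}}\|_F$ using the threshold $\sqrt{2L}\,\lambda_{\max}(\H^{-1})/\lambda_{\min}(\H^{-1})$ that appears in both parts of the theorem. In the regime $\|\bDelta_{\T}\|_F \ge \sqrt{2L}\,\frac{\lambda_{\max}(\H^{-1})}{\lambda_{\min}(\H^{-1})}\|\bDelta_{\T^{\perp}}\|_F$, the hypothesis $\lambda_{\min}(\P_{\T}\F\P_{\T}) > \tfrac{1}{2}\lambda_{\min}(\H)$ carried over from the theorem, combined with Theorem \ref{uniqueness_conditions}(a), yields $\Po \bDelta \ne 0$ as soon as $\bDelta \ne 0$; this contradicts $\bDelta \in \ker \Po$, so in this regime $\bDelta = 0$. (Note that the boundary case $\bDelta_{\T}=0$ is automatic, since the inequality then forces $\bDelta_{\T^{\perp}}=0$ as well.)

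In the complementary regime $\|\bDelta_{\T}\|_F < \sqrt{2L}\,\frac{\lambda_{\max}(\H^{-1})}{\lambda_{\min}(\H^{-1})}\|\bDelta_{\T^{\perp}}\|_F$, I would invoke Theorem \ref{uniqueness_conditions}(b), whose remaining hypothesis (existence of a dual certificate $\Y \in \mathrm{range}\,\Po^{*}$ satisfying \eqref{eq:yconds}) is supplied by the corollary's assumption. This yields $\|\M + \bDelta\|_* > \|\M\|_*$, directly contradicting the optimality $\|\X\|_* \le \|\M\|_*$. Hence this regime cannot occur with $\bDelta \ne 0$.

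No real obstacle is anticipated: the two halves of Theorem \ref{uniqueness_conditions} were designed precisely to partition the space of deviations through this dichotomy, and the corollary is essentially a bookkeeping step that stitches them together. The only nuance is making sure that the equality boundary between the two regimes and the degenerate case $\bDelta=0$ are handled coherently, which is already accomplished by the strict inequalities built into the conclusions of parts (a) and (b).
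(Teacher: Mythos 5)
Your proposal is correct and follows essentially the same argument as the paper: a dichotomy on $\|\bDelta_{\T}\|_{F}$ versus $\sqrt{2L}\,\frac{\lambda_{\max}(\H^{-1})}{\lambda_{\min}(\H^{-1})}\|\bDelta_{\T^{\perp}}\|_{F}$, applying Theorem \ref{uniqueness_conditions}(a) to rule out feasibility in the large-$\bDelta_{\T}$ regime and Theorem \ref{uniqueness_conditions}(b) to rule out optimality in the complementary regime. Your added care with the degenerate case $\bDelta=\bm{0}$ is a minor refinement of the same bookkeeping the paper performs.
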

\begin{proof}
Define $\bDelta = \X-\M$ for any $\X \in \S$. Using Theorem \ref{uniqueness_conditions}(a), $\Po\,\bDelta\neq \bm{0}$ 
if $\displaystyle \|\bDelta_{\T}\|_{F}^{2} \ge \left(\sqrt{2L}\frac{\lambda_{\max}(\H^{-1})}{\lambda_{\min}(\H^{-1})}   
\|\bDelta_{\T^{\perp}}\|_{F}\right)^{2}$. It then suffices to consider the case
$\displaystyle \|\bDelta_{\T}\|_{F}^{2} < \left(\sqrt{2L}\frac{\lambda_{\max}(\H^{-1})}{\lambda_{\min}(\H^{-1})}   
\|\bDelta_{\T^{\perp}}\|_{F}\right)^{2}$ for  $\bDelta \in \ker \Po$. For this case, using the proof of Theorem \ref{uniqueness_conditions}(b),
$\|\X\|_{*}> \|\M\|_{*}$. Therefore, $\M$ is the unique solution to \eqref{eq:m_completion}.
\end{proof}

\subsection{Proof of Theorem \ref{thm:thm1}}

Using the corollary above, if the two conditions in Theorem \ref{uniqueness_conditions} hold, it follows that $\M$
is a unique solution to \eqref{eq:m_completion}. The first condition in Theorem \ref{uniqueness_conditions}(a)
is the assumption that $ \lambda_{\min}\left(\P_{\T}\,\F\,\P_{\T}\right)>\frac{1}{2}\lambda_{\min}(\H)$.
This will ensure that the minimum eigenvalue of the operator $\P_{\T}\,\F\,\P_{\T}$ is bounded away from zero.
Using the operator Chernoff bound in \cite{tropp2012user} restated below, Lemma \ref{min_eig} addresses
the assumption.

\begin{theorem}[Chernoff bound in \cite{tropp2012user}]
Consider a finite sequence $\{\Lc_{k}\}$ of independent, random, self-adjoint operators, acting on matrices in $\real^{n\times n}$, that satisfy
\[
\Lc_{k}\succeq \bm{0} \quad \textrm{ and } \quad \|\Lc_{k}\| \le R \quad \textrm{almost surely}
\] 
Compute the minimum eigenvalue of the sum of the expectations,
\[
\mu_{\min}:=\lambda_{\min} \left(\sum_{k} E[\Lc_{k}]\right) 
\]
Then, we have
\[
 Pr\,\bigg[ \lambda_{\min}\left(\sum_{k}\Lc_{k}\right) \le (1-\delta)\,\mu_{\min}\bigg] \le n\,\bigg[ \frac{\exp(-\delta)}{(1-\delta)^{1-\delta}}  \bigg]^{\frac{\mu_{\min}}{R}}   
\quad \textrm{for } \delta\in [0,1]
\]
\end{theorem}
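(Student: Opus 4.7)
The plan is to run the standard matrix Laplace transform argument (Tropp's master technique) specialized to the lower tail. Let $\bm{S}=\sum_{k}\Lc_k$. For any $\theta>0$, the monotonicity of $x\mapsto e^{-\theta x}$, together with the spectral identity $e^{-\theta \lambda_{\min}(\bm{S})}=\lambda_{\max}(e^{-\theta \bm{S}})$ and Markov's inequality, gives
\[
\Pr\!\big[\lambda_{\min}(\bm{S})\le t\big] \le e^{\theta t}\, E\!\big[\lambda_{\max}(e^{-\theta \bm{S}})\big] \le e^{\theta t}\, E\!\big[\textrm{Tr}(e^{-\theta \bm{S}})\big],
\]
where the last step uses $\lambda_{\max}(\bm{A})\le \textrm{Tr}(\bm{A})$ for positive-semidefinite $\bm{A}$. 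The problem is thus reduced to controlling the matrix moment generating function $E[\textrm{Tr}(e^{-\theta \bm{S}})]$.

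The second step is subadditivity of matrix cumulants via Lieb's concavity theorem, which asserts that $\bm{A}\mapsto \textrm{Tr}\exp(\bm{H}+\log \bm{A})$ is concave on positive-definite $\bm{A}$ for every self-adjoint $\bm{H}$. Conditioning successively on $\Lc_1,\dots,\Lc_{K-1}$ and applying Jensen's inequality at each stage yields
\[
E\!\big[\textrm{Tr}(\exp(-\theta \bm{S}))\big] \le \textrm{Tr}\exp\!\bigg(\sum_{k}\log E[e^{-\theta \Lc_k}]\bigg).
\]
I would quote Lieb's theorem rather than reprove it; this is the deep analytic ingredient of the whole argument and the main obstacle in making the proof self-contained.

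Next I bound each matrix cumulant. Since $0\preceq \Lc_k\preceq R\bm{I}$ almost surely, and since on $[0,R]$ the convex scalar function $x\mapsto e^{-\theta x}$ lies below its chord $x\mapsto 1+\frac{e^{-\theta R}-1}{R}x$, the spectral theorem lifts this to the operator order, giving $e^{-\theta \Lc_k}\preceq \bm{I}+\frac{e^{-\theta R}-1}{R}\Lc_k$. Taking expectations and applying operator monotonicity of $\log$ (combined with the scalar bound $\log(1+u)\le u$ promoted via the spectral theorem) yields
\[
\log E[e^{-\theta \Lc_k}]\preceq g(\theta)\,E[\Lc_k],\qquad g(\theta):=\frac{e^{-\theta R}-1}{R}<0.
\]

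To conclude, I would sum over $k$ and use the trace-exponential monotonicity $\bm{A}\preceq \bm{B}\Rightarrow \textrm{Tr}(e^{\bm{A}})\le \textrm{Tr}(e^{\bm{B}})$ to get $\textrm{Tr}\exp(\sum_k \log E[e^{-\theta \Lc_k}])\le \textrm{Tr}\exp(g(\theta)\sum_k E[\Lc_k])$. Because $g(\theta)<0$ and $\sum_k E[\Lc_k]\succeq \bm{0}$ has smallest eigenvalue $\mu_{\min}$, the matrix $g(\theta)\sum_k E[\Lc_k]$ is negative-semidefinite with largest eigenvalue $g(\theta)\mu_{\min}$, so its trace exponential is at most $n\,e^{g(\theta)\mu_{\min}}$. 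Combining all estimates,
\[
\Pr\!\big[\lambda_{\min}(\bm{S})\le t\big]\le n\,\exp\!\big(\theta t+g(\theta)\mu_{\min}\big).
\]
Setting $t=(1-\delta)\mu_{\min}$ and minimizing in $\theta>0$ (the first-order condition gives $\theta^\ast=-\log(1-\delta)/R$) produces exponent $\tfrac{\mu_{\min}}{R}\big[-(1-\delta)\log(1-\delta)-\delta\big]$, which rearranges exactly to the stated bound $n\big[e^{-\delta}/(1-\delta)^{1-\delta}\big]^{\mu_{\min}/R}$. The nontrivial ingredient is Lieb's theorem; everything else is scalar convexity, Markov, and the spectral calculus.
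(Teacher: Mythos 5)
The paper does not actually prove this statement: it is imported verbatim from the cited reference \cite{tropp2012user} and used as a black box, the only added content being the elementary simplification of the exponent via $(1-\delta)\log(1-\delta)\ge -\delta+\tfrac{\delta^2}{2}$. Your sketch is a correct reconstruction of the standard proof from that reference — the matrix Laplace transform bound $\Pr[\lambda_{\min}(\bm{S})\le t]\le e^{\theta t}E[\textrm{Tr}(e^{-\theta\bm{S}})]$, cumulant subadditivity via Lieb's concavity theorem, the chord bound $e^{-\theta x}\le 1+\frac{e^{-\theta R}-1}{R}x$ on $[0,R]$ lifted by the spectral theorem, and optimization at $\theta^{\ast}=-\log(1-\delta)/R$ — so it matches the intended (cited) argument and there is nothing within the paper itself to compare it against.
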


\noindent
For $\delta \in [0,1]$, using Taylor series of $\log(1-\delta)$, note that $(1-\delta)\log(1-\delta) \ge -\delta+\frac{\delta^2}{2}$. 
This results the following simplified estimate.
\[
Pr\,\bigg[ \lambda_{\min}\left(\sum_{k}\Lc_{k}\right) \le (1-\delta)\,\mu_{\min}\bigg] \le n\,\exp\left(-\delta^2\,\frac{\mu_{\min}}{2R}\right) 
\quad \textrm{for } \delta\in [0,1]
\]

\begin{lemma} \label{min_eig}
Consider the operator $\P_{\T}\,\F\,\P_{\T}: \T \rightarrow \T$. With $\kappa = \frac{mn}{Lr}$, the following estimate holds. 
\[
\textrm{Pr}\,\left( \lambda_{\min}\left(\P_{\T}\,\F\,\P_{\T}\right)\le \frac{1}{2}\lambda_{\min}(\H)\right)
\le  n\exp\left(-\frac{\lambda_{\min}(\H)^{2}\kappa}{8\nu}   \right)
\] 
\end{lemma}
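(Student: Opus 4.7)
The plan is to express $\P_{\T}\F\P_{\T}$ as a sum of $m$ independent rank-one positive semidefinite operators on $\T$ and then apply the operator Chernoff bound verbatim. For each draw $\alpha_k \in \{1,\dots,L\}$ (chosen uniformly with replacement), define
\[
\Lc_k:\, \X \in \T \;\longmapsto\; \frac{L}{m}\,\langle \X,\,\P_{\T}\w_{\alpha_k}\rangle\, \P_{\T}\w_{\alpha_k}.
\]
Then $\Lc_k \succeq 0$ trivially, $\sum_{k=1}^{m} \Lc_k = \P_{\T}\F\P_{\T}$, and the $\Lc_k$ are i.i.d., which puts us squarely in the setting of the stated Chernoff bound.

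Next I would compute the two quantities that enter the bound. For the norm, since each $\Lc_k$ is rank-one,
\[
\|\Lc_k\| \;=\; \frac{L}{m}\,\|\P_{\T}\w_{\alpha_k}\|_F^{\,2} \;\le\; \frac{L}{m}\,\lambda_{\max}(\H^{-1})\,\nu\,\frac{r}{n}
\]
by the coherence estimate~\eqref{eq:coherencew}; so we may take $R = \frac{L}{m}\lambda_{\max}(\H^{-1})\nu r/n$. For the expectation, averaging over a uniform $\alpha\in\{1,\dots,L\}$ gives
\[
\sum_{k=1}^m \E[\Lc_k] \;=\; \sum_{\alpha=1}^{L} \langle \,\cdot\,,\,\P_{\T}\w_{\alpha}\rangle\, \P_{\T}\w_{\alpha},
\]
viewed as an operator on $\T$. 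For any $\X\in\T\subseteq\S$ with $\|\X\|_F=1$ this quadratic form equals $\sum_{\alpha}\langle \X,\w_{\alpha}\rangle^2$, and by Lemma~\ref{norm_equivalent} (the basis-Gram inequality already invoked in~\eqref{eq:pdelta_perp_upper_bound}) this is at least $\lambda_{\min}(\H)\|\X\|_F^{\,2}$. Hence $\mu_{\min} \ge \lambda_{\min}(\H)$.

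Finally I would apply the simplified Chernoff tail with $\delta=\tfrac12$. Noting that $\lambda_{\max}(\H^{-1}) = 1/\lambda_{\min}(\H)$, we get
\[
\frac{\mu_{\min}}{R} \;\ge\; \frac{\lambda_{\min}(\H)\cdot mn}{L\,\lambda_{\max}(\H^{-1})\,\nu r} \;=\; \frac{\lambda_{\min}(\H)^{2}\,\kappa}{\nu},
\]
with $\kappa = mn/(Lr)$, and the tail estimate
\[
\Pr\!\left[\lambda_{\min}(\P_{\T}\F\P_{\T}) \le \tfrac12\lambda_{\min}(\H)\right] \;\le\; n\exp\!\left(-\tfrac{1}{8}\,\tfrac{\mu_{\min}}{R}\right) \;\le\; n\exp\!\left(-\tfrac{\lambda_{\min}(\H)^{2}\kappa}{8\nu}\right)
\]
follows immediately. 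The only mildly subtle point is the second step: one must remember that $\T\subseteq\S$ so that the spectrum of $\sum_\alpha \w_\alpha\w_\alpha^{T}$ relevant to us is that of $\H$ (not all of $\real^{n\times n}$), which is where the dual-basis framework pays off. Everything else is bookkeeping against the coherence hypothesis~\eqref{eq:coherencew}.
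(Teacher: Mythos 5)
Your proposal is correct and follows essentially the same route as the paper: decompose $\P_{\T}\F\P_{\T}$ into the i.i.d.\ rank-one positive semidefinite summands $\frac{L}{m}\langle\cdot,\P_{\T}\w_{\alpha}\rangle\P_{\T}\w_{\alpha}$, bound $R$ via the coherence estimate \eqref{eq:coherencew}, lower bound $\mu_{\min}$ by $\lambda_{\min}(\H)$ through Lemma \ref{norm_equivalent}, and apply the simplified operator Chernoff bound with $\delta=\tfrac12$. Your explicit use of $\lambda_{\max}(\H^{-1})=1/\lambda_{\min}(\H)$ to reach the stated exponent is exactly the step the paper leaves implicit, so nothing is missing.
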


\begin{proof}
Recall the restricted frame operator $\F = \sum_{\alpha \in \Omega} \frac{L}{m}\langle \X\,,\w_{\alpha}\rangle \w_{\alpha}$. 
For $\X \in \T$, $\P_{\T}\,\F\,\P_{\T}\,\X$ can be equivalently represented as follows.
\[
\P_{\T}\,\F\,\P_{\T}\,\X = \sum_{\alphab \in \Omega} \frac{L}{m}\langle \X\,,\P_{\T}\,\w_{\alphab}\rangle\, \P_{\T}\,\w_{\alphab}
\]
Let $\Lc_{\alphab} =\frac{L}{m}\langle \cdot\,,\P_{\T}\,\w_{\alphab}\rangle\, \P_{\T}\,\w_{\alphab}$ denote the operator
in the summand. Since $\Lc_{\alphab}$ is positive semidefinite, the operator Chernoff bound can be used. The bound requires estimate of an upper bound $R$ of the spectrum norm of $\Lc_{\alphab}$ and $\mu_{\min} = \lambda_{\min}(\sum_{\betab\in \Omega} E[\Lc_{\betab}])$. 
First, we estimate $R$ as follows. 
\begin{align*}
\left\|\frac{L}{m}\langle \cdot\,,\P_{\T}\,\w_{\alphab}\rangle\, \P_{\T}\,\w_{\alphab}\right\| = 
 \frac{L}{m}\|\P_{\T}\,\w_{\alphab}\|_{F}^{2} \le \frac{L}{m}\lambda_{\max}(\H^{-1})\frac{\nu r}{n}
\end{align*}
The last inequality follows from the coherence estimate in \eqref{eq:coherencew}. 
With this, set $R = \displaystyle \frac{L}{m}\lambda_{\max}(\H^{-1})\frac{\nu r}{n}$. Next, we consider the estimate of $\mu_{\min}$ by first evaluating $\sum_{\betab\in \Omega} E[\Lc_{\betab}]$.
\[
\sum_{\betab\in \Omega} E[\Lc_{\betab}] = \sum_{\betab \in \Omega} \bigg[ \sum_{\alphab \in\Us} \frac{1}{m}\langle \cdot\,,\P_{\T}\,\w_{\alphab}\rangle\, \P_{\T}\,\w_{\alphab}\bigg]
= \sum_{\alphab\in\Us } \langle \cdot\,,\P_{\T}\,\w_{\alphab}\rangle\, \P_{\T}\,\w_{\alphab}
\]
For any $\X \in \T$,  $\displaystyle \langle \X\,, \sum_{\betab\in \Omega} E[\Lc_{\betab}](\X) \rangle$ can be lower bounded as follows.
\[
\langle \X\,, \sum_{\betab\in \Omega} E[\Lc_{\betab}](\X) \rangle = \sum_{\betab \in\Us} \langle \X,\,\w_{\betab}\rangle ^{2} \ge \lambda_{\min}(\H)\|\X\|_{F}^{2}\]
with the last inequality following from Lemma \ref{norm_equivalent}. The variational characterization of the minimum eigenvalue,
along with the fact that $\sum_{\alphab\in \Omega} E[\Lc_{\alphab}]$ is a self-adjoint operator, implies that the minimum eigenvalue of $ \sum_{\alphab\in \Omega} E[\Lc_{\alphab}] $ is at least 
$\lambda_{\min}(\H)$. 
With this, set $\mu_{\min} = \lambda_{\min}(\H)$. The final step is to apply the operator Chernoff bound 
with $R =\frac{L}{m}\lambda_{\max}(\H^{-1})\frac{\nu r}{n}$ and $\mu_{\min} = \lambda_{\min}(\H)$. Setting $\delta = \frac{1}{2}$, 
$ \lambda_{\min}\left(\P_{\T}\,\F\,\P_{\T}\right)>\frac{1}{2}\lambda_{\min}(\H)$ 
with probability of failure at most $p_1$ given by
\[
p_1 = n\exp\left(-\frac{\lambda_{\min}(\H)^{2}\kappa}{8\nu}   \right)
\]
This concludes the proof. 
\end{proof}

Lemma \ref{min_eig} shows that $\lambda_{\min}\left(\P_{\T}\,\F\,\P_{\T}\right)>\frac{1}{2}\lambda_{\min}(\H)$ holds with
probability at least $1-p_1$ where the probability of failure is at most  $\displaystyle p_1 = n\exp\left(-\frac{\kappa}{8\nu}\right)$ with 
 $\displaystyle \kappa = \frac{mn}{Lr}$. 

In what follows, the conditions in Theorem \ref{uniqueness_conditions}(b) are analyzed. 
The statement there assumes the existence of a certain dual certificate $\Y$ that satisfies
the conditions in \eqref{eq:yconds}. In \cite{gross2011recovering}, David Gross devised a novel scheme, the golfing scheme, to construct the dual certificate $\Y$. 
Before showing the scheme, some notations are in order: 1) The random set $\Omega$ is partitioned into
$l$ batches. The $i$-th batch, denoted $\Omega_{i}$, contains $m_i$ elements with 
$\displaystyle \sum_{i=1}^l m_i = m$.  2) For a given batch, the sampling operator can be defined as follows
$\displaystyle\R_{i} = \frac{L}{m_i} \sum_{\alpha \in \Omega_{i}}{} \langle \X \,, \w_{\alpha} \rangle \z_{\alpha}$.
The inductive scheme is shown below.
\begin{equation} \label{eq:golfing_scheme}
\Q_{0} = \textrm{sgn } \M , \quad \Y_{i} = \sum_{j=1}^{i} \R^{*}_{j}  \Q_{j-1}, \quad \Q_{i} = \textrm{sgn } \M - \P_{\T}\Y_{i}, \quad i = 1,...,l
\end{equation}
The main idea of the remaining analysis is to employ the golfing scheme and certify that the conditions in \eqref{eq:yconds} hold with very high probability.
In the analysis of the golfing scheme, the initial task is to show that the first condition in \eqref{eq:yconds} 
holds. This requires a probabilistic estimate of $\|\P_{\T}\Po^{*}\P_{\T}\X-\P_{\T}\X\|_{F}\ge t$ for a fixed matrix $\X \in \S$
and will be addressed in Lemma \ref{injective} to follow shortly.
The proof of the lemma relies on the vector Bernstein inequality in \cite{gross2011recovering}.
We use a slightly modified version of this inequality which is stated below.

\begin{theorem}[Vector Bernstein inequality] 

Let $\x_1,...,\x_m$ be independent zero-mean vector valued random variables. Assume
that $\underset{i}{\max}\, \|\x_{i}\|_{2}\le R$ and $\sum_{i} E[\|\x_i\|_{2}^{2}]\le \sigma^{2}$.
For any $t\le \frac{\sigma^2}{R}$, the following estimate holds. 
\[
\textrm{Pr}\bigg[\left\|\sum_{i=1}^{m}  \x_i\right\|_{2}\ge t\bigg] \le \exp\left(-\frac{t^2}{8\sigma^2}+\frac{1}{4}\right),
\]

\end{theorem}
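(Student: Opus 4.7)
The plan is to derive this Bernstein-type tail bound from two ingredients: a second-moment bound on $Z := \|\sum_{i=1}^m \x_i\|_2$ obtained directly from the hypothesis, and a scalar sub-Gaussian concentration of $Z$ about its expectation via a Hilbert-space martingale argument. The additive $+\frac{1}{4}$ in the exponent will emerge naturally as the price paid for translating a concentration inequality centered at $E[Z]$ into one centered at $0$.

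First I would bound $E[Z]$. Expanding $E[Z^2] = \sum_{i,j} E\langle \x_i, \x_j\rangle$, independence and zero mean kill the cross terms, leaving $E[Z^2] = \sum_i E[\|\x_i\|_2^2] \le \sigma^2$, and Jensen's inequality gives $E[Z] \le \sigma$. Next I would establish a centered sub-Gaussian tail for $Z$. The partial sums $S_k = \sum_{i \le k} \x_i$ form a Hilbert-space-valued martingale whose increments $\x_k$ are bounded by $R$ and whose predictable quadratic variation is at most $\sigma^2$. By the triangle inequality, the scalar process $\|S_k\|_2$ changes by at most $\|\x_k\|_2$ at step $k$, so a Pinelis-style martingale concentration inequality (or equivalently an Azuma/Bennett-type MGF computation exploiting both the nonexpansivity of $\|\cdot\|_2$ and the vanishing conditional mean of $\x_k$) yields
\[
\textrm{Pr}\!\left[Z - E[Z] \ge s\right] \le \exp\!\left(-\frac{s^2}{4\sigma^2}\right),
\]
valid for $s$ in the sub-Gaussian regime $s \le \sigma^2/R$.

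To reach the stated inequality, I would write $\textrm{Pr}[Z \ge t] = \textrm{Pr}[Z - E[Z] \ge t - E[Z]]$ and apply the previous display, giving exponent $-(t - E[Z])^2/(4\sigma^2)$. The elementary inequality $(a - b)^2 \ge a^2/2 - b^2$ applied with $a = t$ and $b = E[Z] \le \sigma$ yields $(t - E[Z])^2 \ge t^2/2 - \sigma^2$, so $-(t - E[Z])^2/(4\sigma^2) \le -t^2/(8\sigma^2) + 1/4$, matching the stated bound exactly. The hypothesis $t \le \sigma^2/R$ is precisely what is needed for the centered tail to lie in its sub-Gaussian regime.

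The main obstacle is the second step: establishing the centered sub-Gaussian tail cleanly without further distributional assumptions. The required conditional MGF estimate is a Bernstein-type bound on $E[\exp(\lambda(\|S_k\|_2 - \|S_{k-1}\|_2)) \mid \x_1,\ldots,\x_{k-1}]$, which must leverage both the nonexpansivity of the norm and the zero conditional mean of $\x_k$ to kill any linear-in-$\lambda$ piece; the restriction $t \le \sigma^2/R$ marks the transition where this quadratic-in-$\lambda$ control breaks down and the tail would switch from sub-Gaussian to subexponential. Everything else in the argument is either algebraic bookkeeping or a clean application of independence.
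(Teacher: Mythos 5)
The paper does not actually prove this statement: it is imported, as a ``slightly modified version,'' from Gross's vector Bernstein inequality in \cite{gross2011recovering}, so there is no in-paper argument to compare against. Your proposal is essentially the standard derivation of the modified form, and it is correct in outline: the bound $E[Z]\le\sigma$ from $E[Z^2]=\sum_i E\|\x_i\|_2^2\le\sigma^2$, the centered sub-Gaussian tail, and the translation step via $(t-E[Z])^2\ge t^2/2-\sigma^2$ are exactly how the factor $8$ and the additive $1/4$ arise from the uncentered statement. Two remarks. First, the step you flag as the main obstacle --- the centered bound $\Pr[Z-E[Z]\ge s]\le\exp(-s^2/(4\sigma^2))$ for $s\le\sigma^2/R$ --- is precisely the content of the cited inequality, so as written your argument invokes rather than proves the core estimate; it can, however, be closed with standard tools: apply Freedman's (scalar martingale Bernstein) inequality to the Doob martingale of $Z=\|\sum_i\x_i\|_2$, whose increments are bounded by $2R$ by nonexpansivity of the norm and whose conditional variances sum to at most $\sigma^2$, giving $\exp\left(-\frac{s^2}{2\sigma^2+\frac{4}{3}Rs}\right)\le\exp\left(-\frac{3s^2}{10\sigma^2}\right)\le\exp\left(-\frac{s^2}{4\sigma^2}\right)$ in the stated regime. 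Second, your final step applies the centered tail with $s=t-E[Z]$, which requires $t\ge E[Z]$; you should note that the complementary case $t<E[Z]\le\sigma$ is vacuous, since then $-t^2/(8\sigma^2)+1/4\ge 1/8>0$ and the claimed bound exceeds $1$. With these two points made explicit, the proof is complete and matches the route the cited source takes.
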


\begin{lemma} \label{injective}
Given an arbitrary fixed $\X \in \S$, for $t\le 1 $ with $\kappa = \frac{mn}{Lr}$, the following estimate holds.
\begin{equation}
\textrm{Pr}\,(\|\P_{\T}\Po^{*}\P_{\T}\X-\P_{\T}\X\|_{F}\ge t\|\X\|_{F}) \le 
\exp\left(-\frac{t^2\kappa}{8\left(\lambda_{\max}(\H^{-1})\|\H^{-1}\|_{\infty}\nu+\frac{n}{Lr}\right)}+\frac{1}{4}\right)
\end{equation}

\end{lemma}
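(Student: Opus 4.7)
The plan is to express $\P_{\T}\Po^{*}\P_{\T}\X-\P_{\T}\X$ as an i.i.d.\ sum of $m$ mean-zero matrix-valued random variables and apply the vector Bernstein inequality stated just before the lemma. With $\Omega=\{\alpha_1,\dots,\alpha_m\}$ and the $\alpha_i$ drawn i.i.d.\ uniformly from $\Us$, I would define
\[
\Y_i \;=\; \frac{L}{m}\,\langle \P_{\T}\X,\z_{\alpha_i}\rangle\,\P_{\T}\w_{\alpha_i}\;-\;\frac{1}{m}\P_{\T}\X.
\]
The definition of $\Po^{*}$ yields $\sum_{i=1}^{m}\Y_i=\P_{\T}\Po^{*}\P_{\T}\X-\P_{\T}\X$, and the dual-basis reproducing identity $\Y=\sum_{\alpha}\langle \Y,\z_\alpha\rangle\,\w_\alpha$ valid on $\S$, together with idempotence of $\P_{\T}$ (and the implicit inclusion $\T\subseteq \S$ built into the setting), yields $E[\Y_i]=0$.

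The next step is to extract the two parameters $R$ and $\sigma^{2}$ demanded by Bernstein. For the deterministic bound, Cauchy--Schwarz gives $|\langle \P_{\T}\X,\z_\alpha\rangle|\le \|\X\|_F\,\|\P_{\T}\z_\alpha\|_F$, and combining this with the coherence estimates \eqref{eq:coherencew} and \eqref{eq:coherencev} together with the identities $Lr/n=m/\kappa$ and $1/m=n/(Lr\kappa)$ produces
\[
\|\Y_i\|_F \;\le\; \frac{\|\X\|_F}{\kappa}\left(\lambda_{\max}(\H^{-1})\,\|\H^{-1}\|_\infty\,\nu+\frac{n}{Lr}\right) \;=:\; R.
\]
For the second moment I would discard the mean correction via $E[\|\Y_i\|_F^{2}]\le E[\|\tfrac{L}{m}\langle \P_{\T}\X,\z_{\alpha_i}\rangle\P_{\T}\w_{\alpha_i}\|_F^{2}]$, bound $\max_\alpha\|\P_{\T}\w_\alpha\|_F^{2}$ by \eqref{eq:coherencew}, and control $\sum_{\alpha}\langle \P_{\T}\X,\z_\alpha\rangle^{2}$ by the norm equivalence $\sum_\alpha\langle \Y,\z_\alpha\rangle^{2}\le \lambda_{\max}(\H^{-1})\|\Y\|_F^{2}$ (a dual form of Lemma~\ref{norm_equivalent} obtained by expanding $\Y$ in the $\{\w_\alpha\}$ basis via $\H$). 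Summed over $m$ samples this produces $\sum_i E[\|\Y_i\|_F^{2}]\le \lambda_{\max}(\H^{-1})^{2}\,\nu\,\|\X\|_F^{2}/\kappa$, and the symmetric-matrix inequality $\lambda_{\max}(\H^{-1})\le \|\H^{-1}\|_\infty$ (which holds because $\|A\|_{2}\le \sqrt{\|A\|_{1}\|A\|_{\infty}}=\|A\|_\infty$ whenever $A$ is symmetric) upgrades this to
\[
\sigma^{2} \;:=\; \frac{\bigl(\lambda_{\max}(\H^{-1})\,\|\H^{-1}\|_\infty\,\nu+n/(Lr)\bigr)\|\X\|_F^{2}}{\kappa}.
\]

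These $R$ and $\sigma^{2}$ are calibrated so that $\sigma^{2}/R=\|\X\|_F$, which converts the Bernstein side condition $t\|\X\|_F\le \sigma^{2}/R$ into the lemma's standing assumption $t\le 1$. Substituting into the inequality produces the exponent
\[
-\frac{(t\|\X\|_F)^{2}}{8\sigma^{2}} \;=\; -\frac{t^{2}\kappa}{8\bigl(\lambda_{\max}(\H^{-1})\,\|\H^{-1}\|_\infty\,\nu+n/(Lr)\bigr)},
\]
which is exactly the stated bound. The main delicacy is balancing the coherence constants so that $R$ and $\sigma^{2}$ share the same basis-dependent combination $\lambda_{\max}(\H^{-1})\|\H^{-1}\|_\infty\nu+n/(Lr)$; without the trade $\lambda_{\max}(\H^{-1})\le \|\H^{-1}\|_\infty$ in the variance step, the ratio $\sigma^{2}/R$ would carry extra basis factors and the clean hypothesis $t\le 1$ would be lost. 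Everything else is bookkeeping around $\kappa=mn/(Lr)$.
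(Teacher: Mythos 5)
Your proposal is correct and follows essentially the same route as the paper: the same decomposition into the i.i.d.\ summands $\frac{L}{m}\langle \P_{\T}\X,\z_{\alpha}\rangle\P_{\T}\w_{\alpha}-\frac{1}{m}\P_{\T}\X$, the same vector Bernstein inequality, and the same coherence-based choices of $R$ and $\sigma^{2}$ calibrated so that $\sigma^{2}/R$ turns the side condition into $t\le 1$. The only cosmetic differences are that you keep $\|\X\|_{F}$ general instead of normalizing it to one, discard the mean correction via $E\|\A-E\A\|_{F}^{2}\le E\|\A\|_{F}^{2}$ rather than expanding the cross term exactly, and explicitly justify $\lambda_{\max}(\H^{-1})\le\|\H^{-1}\|_{\infty}$, which the paper uses silently.
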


\begin{proof}
In what follows, with out loss of generality, we assume $\|\X\|_{F}=1$. Using the dual basis expansion, $\P_{\T} \Po^{*}\P_{\T}\X-\P_{\T}\X$ can be represented as follows.
\begin{equation}\label{eq:ptrpt_pt}
\P_{\T} \Po^{*}  \P_{\T}\X-\P_{\T}\X = \sum_{{\alpha\in\Omega}}\bigg[
\frac{L}{m} \langle \P_{\T}\X\,,\z_{\alpha} \rangle\\P_{\T}\,\w_{\alpha}
-\frac{1}{m}\P_{\T}\X\bigg]
\end{equation}
The summand, denoted $\Y_{\alpha}$, can be written as $\Y_{\alpha} = \X_{\alpha}-E[\X_{\alpha}]$. Since
$E[\Y_{\alpha}]=\bm{0}$, it satisfies the condition for the vector Bernstein inequality and we proceed to consider
appropriate bounds for $\|\Y_{\alpha}\|_{F}$ and $E[\|\Y_{\alpha}\|_{F}^{2}]$.  First, we bound  $\|\Y_{\alpha}\|_{F}$
making use of the coherence conditions \eqref{eq:coherencew} and \eqref{eq:coherencev}. 
\begin{align*}
\|\Y_{\alpha}\|_{F}= \left\|\frac{L}{m}\langle \P_{\T}\X\,,\z_{\alpha} \rangle\, \P_{\T}\,\w_{\alpha}
-\frac{1}{m}\P_{\T}\X\right\|_{F}  &\le
\left\|\frac{L}{m}\langle \P_{\T}\X\,,\z_{\alpha} \rangle \P_{\T}\,\w_{\alpha}\right\|_{F}
+\left\|\frac{1}{m}\P_{\T}\X\right\|_{F} \label{eq:norm_Y_squared}\\
& \le \frac{L}{m}\max_{\alpha\in \Us} \|\P_{\T}\,\w_{\alpha}\|_{F}\max_{\alpha\in \Us} \|\P_{\T}\,\z_{\alpha}\|_{F}
+\frac{1}{m}\nonumber\\
& \le \frac{1}{m}\left(\frac{L}{n}\lambda_{\max}(\H^{-1})\|\H^{-1}\|_{\infty}\nu r+1\right)
\end{align*}
Therefore, set $R = \frac{1}{m}\left(\frac{L}{n}\lambda_{\max}(\H^{-1})\|\H^{-1}\|_{\infty}\nu r+1\right) $. 
To upper bound $E[\|\Y_{\alpha}\|_{F}^{2}]$, we start with the definition of $E[\|\Y_{\alpha}\|_{F}^{2}]$
and proceed as follows. 
\begin{align*}
E[\|\Y_{\alpha}\|_{F}^{2}] & = E\bigg[\frac{L^2}{m^2} \langle \P_{\T}\X\,,\z_{\alpha} \rangle^{2}\|\P_{\T}\,\w_{\alpha}\|_{F}^{2} +\frac{1}{m^2}\|\P_{\T}\X\|_{F}^{2}-
\frac{2L}{m^2}\langle \P_{\T}\X\,,\z_{\alpha}\rangle \langle \P_{\T}\X\,,\w_{\alpha}\rangle  \bigg] \\
& =E\left[\frac{L^2}{m^2} \langle \P_{\T}\X\,,\z_{\alpha} \rangle^{2}\|\P_{\T}\,\w_{\alpha}\|_{F}^{2}\right] +\frac{1}{m^2}\|\P_{\T}\X\|_{F}^{2}-
\frac{2}{m^2}\langle \sum_{\alpha \in \Us} \langle \P_{\T}\X\,,\w_{\alpha}\rangle\z_{\alpha}\,, \P_{\T}\X\rangle   \\
& = E\left[\frac{L^2}{m^2} \langle \P_{\T}\X\,,\z_{\alpha} \rangle^{2}\|\P_{\T}\,\w_{\alpha}\|_{F}^{2}\right] -\frac{1}{m^2}\|\P_{\T}\X\|_{F}^{2}\\
& \le  \frac{L}{m^2} \max_{\alpha\in \Us}\|\P_{\T}\,\w_{\alpha}\|_{F}^{2} \sum_{\alpha\in \Us}\langle \P_{\T}\X\,,\z_{\alpha} \rangle^{2}+\frac{1}{m^2}\\
& \le \frac{L}{m^2}\lambda_{\max}(\H^{-1})^{2}\frac{\nu r}{n} + \frac{1}{m^2} \le  \frac{L}{m^2}\lambda_{\max}(\H^{-1}) \|\H^{-1}\|_{\infty} \frac{\nu r}{n} + \frac{1}{m^2}
\end{align*}
Above, the second inequality results from the coherence conditions \eqref{eq:coherencew} and  \eqref{eq:coherencev} and application of Lemma \ref{norm_equivalent}.
With this, set $\sigma^{2} =  \frac{1}{m}\left( \frac{L}{n}\lambda_{\max}(\H^{-1})\|\H^{-1}\|_{\infty} \nu r+1\right)$.  To conclude the proof, we apply
the vector Bernstein inequality with the specified $R$ and $\sigma$. 
For $t\le \frac{\sigma^{2}}{R} = 1 $, with $\kappa= \frac{mn}{Lr}$, the following estimate holds. 
\begin{equation}
\textrm{Pr}\,(\|\P_{\T}\Po^{*}\P_{\T}\X-\P_{\T}\X\|_{F}\ge t) \le 
\exp\left(-\frac{t^2\kappa}{8\left(\lambda_{\max}(\H^{-1})\|\H^{-1}\|_{\infty}\nu+\frac{n}{Lr}\right)}+\frac{1}{4}\right)
\end{equation}
\end{proof}

Next, it will be argued that the golfing scheme \eqref{eq:golfing_scheme} certifies the conditions
in \eqref{eq:yconds} with very high probability. In particular, we have the following lemma. 

\begin{lemma} \label{construction_of_y}
 $\Y_{l}$ obtained from the golfing scheme \eqref{eq:golfing_scheme} satisfies the conditions
 in \eqref{eq:yconds} with failure probability which is at most 
$\displaystyle p = \sum_{i=1}^{l} p_2(i) + p_3(i) + p_4(i)$
where  
$\displaystyle p_2(i) = \exp\left(-\frac{\kappa_i}{32\left(\lambda_{\max}(\H^{-1})\|\H^{-1}\|_{\infty}\nu+\frac{n}{Lr}\right)}+\frac{1}{4}\right)$, \\
$\displaystyle p_{3}(i) = 2n \exp\left(\frac{-3\min\left((\mu+1) \|\H^{-1}\|_{\infty},\frac{1}{4}\right)^{2}\kappa_i}{8(\mu+1) \|\H^{-1}\|_{\infty}^{2}\nu}\right)$ and
$\displaystyle p_{4}(i) = n^{2} \exp \left(-\frac{3\kappa_i}{32\left(c_v\nu+\frac{n}{Lr}\right)}\right)$ 
with $\displaystyle k_{i} = \frac{m_in}{Lr}$.

\end{lemma}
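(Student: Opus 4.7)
The plan is to construct $\Y_l$ via the golfing scheme \eqref{eq:golfing_scheme} and verify the two conditions of \eqref{eq:yconds} inductively. The key observation is that $\Q_i = \textrm{Sgn}\,\M - \P_\T \Y_i$ always lies in $\T$ (since $\Q_0 = \textrm{Sgn}\,\M \in \T$ by Lemma \ref{signx_prop}, and the induction is immediate), and satisfies the recursion $\Q_i = (\P_\T - \P_\T \R_i^* \P_\T)\Q_{i-1}$. I will maintain three concentration events at each batch $i$: a Frobenius contraction $\|\Q_i\|_F \le \tfrac{1}{2}\|\Q_{i-1}\|_F$; an operator-norm bound $\|\P_{\T^\perp}\R_i^*\Q_{i-1}\| \le c\,\|\Q_{i-1}\|_F$ for a small constant $c$; and a coherence-type invariant controlling $\max_\alpha |\langle \Q_i,\w_\alpha\rangle|$ that decays by a constant factor and is what makes the operator-norm event available at the next step.

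Conditional on the previous batches, $\Q_{i-1}$ is fixed, so the Frobenius contraction is immediate from Lemma \ref{injective} with $t=\tfrac{1}{2}$ applied to $\Q_{i-1}$, which fails with probability at most $p_2(i)$ (the factor $32$ in the denominator is exactly $8/t^2$). For the operator-norm event I use the matrix Bernstein inequality. Because $\Q_{i-1}\in\T$ gives $\P_{\T^\perp}\Q_{i-1}=0$, the quantity to bound is $\|\R_i^*\Q_{i-1}-\Q_{i-1}\|$, a sum of $m_i$ i.i.d.\ mean-zero $n\times n$ matrices $\tfrac{L}{m_i}\langle \Q_{i-1},\z_\alpha\rangle\w_\alpha - \tfrac{1}{m_i}\Q_{i-1}$. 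The almost-sure bound on each summand uses $|\langle \Q_{i-1},\z_\alpha\rangle|\le \|\H^{-1}\|_\infty \max_\beta|\langle \Q_{i-1},\w_\beta\rangle|$ (via $\z_\alpha=\sum_\beta \H^{\alpha,\beta}\w_\beta$) and $\|\w_\alpha\|\le 1$, while the matrix variance is bounded through Lemma \ref{corr_props}(b) via $\sum_\alpha \w_\alpha\w_\alpha^T \preceq (\mu+1)n\,\I$ and its transpose. This yields the $2n\exp(\cdot)$ tail $p_3(i)$, with the $\min((\mu+1)\|\H^{-1}\|_\infty,\tfrac{1}{4})$ arising from the two regimes of Bernstein together with the deviation threshold needed later. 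The coherence invariant comes from writing $\langle \Q_i - \Q_{i-1},\w_\alpha\rangle = -\langle \R_i^*\Q_{i-1}-\Q_{i-1},\P_\T\w_\alpha\rangle$, applying a scalar Bernstein at each fixed $\alpha$ with variance controlled by \eqref{eq:coherencev1} (this is where $c_v$ appears) and uniform bound controlled by \eqref{eq:coherencew1}, and union bounding over $\alpha\in\Us$ with $|\Us|=L\le n^2$ to produce the $n^2$ prefactor of $p_4(i)$.

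Granting all three events on all $l$ batches, a union bound over $i$ and over the three events gives the total failure probability $\sum_{i=1}^l p_2(i)+p_3(i)+p_4(i)$. Under these events, the first condition in \eqref{eq:yconds} reduces to $\|\P_\T\Y_l-\textrm{Sgn}\,\M\|_F = \|\Q_l\|_F \le 2^{-l}\sqrt{r}$, which sits below the required threshold once $l\ge \log_2\!\big(4\sqrt{2L}\,\tfrac{\lambda_{\max}(\H)}{\lambda_{\min}(\H)}\sqrt{r}\big)$. The second condition follows by telescoping and a geometric estimate: $\|\P_{\T^\perp}\Y_l\|\le \sum_{i=1}^l \|\P_{\T^\perp}\R_i^*\Q_{i-1}\|\le c\sum_i 2^{-(i-1)}\sqrt{r}\le 2c\sqrt{r}$, which is driven below $\tfrac{1}{2}$ by the choice of $c$ implicit in the operator-norm event.

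The main obstacle I anticipate is closing the bootstrap between the operator-norm event and the coherence invariant. The matrix-Bernstein control needs a coherence-type control on $\Q_{i-1}$ in the form of $\max_\beta|\langle \Q_{i-1},\w_\beta\rangle|$, which is precisely what the third invariant propagates through the golfing iterations; at the base case $\Q_0=\textrm{Sgn}\,\M$ this is supplied by the joint coherence \eqref{eq:jointcoherence1}. Threading the correlation parameter $\mu$ and the constants $\|\H^{-1}\|_\infty,\lambda_{\max}(\H^{-1}), c_v$ consistently through the variance and uniform bounds so that (i) the per-batch deviation in the operator-norm event is small enough for the geometric sum $\sum_i 2^{-(i-1)}$ to beat $\tfrac{1}{2}$, and (ii) the per-batch failure probabilities sum to something smaller than $n^{-\beta}$ without inflating the sample complexity past $O(nr\log^2 n)$, is the delicate bookkeeping that constitutes the bulk of the proof.
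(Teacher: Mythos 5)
Your proposal is correct and takes essentially the same approach as the paper: the recursion $\Q_i=(\P_{\T}-\P_{\T}\R_i^{*}\P_{\T})\Q_{i-1}$ together with Lemma \ref{injective} at $t=\frac{1}{2}$ for $p_2(i)$, a matrix Bernstein bound whose variance is controlled by the correlation condition (Lemma \ref{corr_props}(b)) for $p_3(i)$, a scalar Bernstein estimate with a union bound over the $L\le n^2$ basis indices for the decaying coherence invariant giving $p_4(i)$, and finally a union bound plus geometric sums to verify both conditions in \eqref{eq:yconds}. The only cosmetic differences are that the paper propagates the invariant directly in the dual basis via $\eta(\Q_i)=\max_{\beta}|\langle \Q_i\,,\z_{\beta}\rangle|$ (Lemma \ref{joint_coherence_size_golfing}) rather than in the primal basis with a $\|\H^{-1}\|_{\infty}$ conversion, and centers the operator-norm sum through $\P_{\T^{\perp}}$ annihilating the mean rather than by subtracting $\Q_{i-1}/m_i$; neither changes the substance of the argument.
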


\begin{proof}
In what follows, we repeatedly make use of the fact that $\Q_{i} \in \T$ since $\textrm{sgn } \M \in \T$ from Lemma \ref{signx_prop}. The main idea for showing that the first condition in \eqref{eq:yconds} holds
relies on a recursive form of $\Q_i$ which can be derived as follows. 
\begin{align}
  \Q_{i} & =  \textrm{sgn } \M - \P_{\T} \left(\sum_{j=1}^{i} \R^{*}_{j} \Q_{j-1}\right)= 
	\textrm{sgn } \M - \P_{\T} \left(\sum_{j=1}^{i-1} \R^{*}_{j} \Q_{j-1} + \R^{*}_{i}\Q_{i-1}\right)\nonumber\\
  & = \textrm{sgn } \M - \P_{\T}  \sum_{j=1}^{i-1} \R^{*}_{j} \Q_{j-1} -\P_{\T}  \R^{*}_{i}\Q_{i-1}
  =  \textrm{sgn } \M - \P_{\T}  \Y_{i-1} -\P_{\T}  \R^{*}_{i}\Q_{i-1}\nonumber\\
   & = \Q_{i-1} -\P_{\T}  \R^{*}_{i}\Q_{i-1}= (\P_{\T} - \P_{\T} \R^{*}_{i} \P_{\T} ) \Q_{i-1} \label{eq:qi_update}
\end{align}
The first condition in \eqref{eq:yconds} is a bound on $\|(\P_{\T} - \P_{\T} \R^{*}_{l} \P_{\T})\Q_{l-1}\|_{F}= \|\Q_{l}\|_{F}$.
Using Lemma \ref{injective} with $t_{2,i} =\frac{1}{2}$,  $\|(\P_{\T} - \P_{\T} \R^{*}_{i} \P_{\T})\Q_{i-1}\|< t_{2,i}\|\Q_{i-1}\|_{F}$
holds with failure probability at most 
\[
p_2(i) = \exp\left(-\frac{\kappa_i}{32\left(\lambda_{\max}(\H^{-1})\|\H^{-1}\|_{\infty}\nu+\frac{n}{Lr}\right)}+\frac{1}{4}\right)
\]
where $\displaystyle \kappa = \frac{m_in}{Lr}$.  Using the recursive formula in \eqref{eq:qi_update} repeatedly, $\|\Q_{i}\|_{F}$  can be upper bounded as follows.
\begin{equation}\label{eq:qi_norm}
  \|\Q_{i}\|_{F} <  \left(\prod_{k=1}^{i} t_{2,k} \right) \|\Q_0\|_F = 2^{-i} \sqrt{r}
\end{equation}
Setting $\displaystyle l = \log_{2}\left(4\sqrt{2L}\frac{\lambda_{\max}(\H)}{\lambda_{\min}(\H)}\sqrt{r}\right)$, 
the first condition in \eqref{eq:yconds} is now satisfied. It can be concluded that, using
a union bound on the failure probabilities $p_2(i)$, $\displaystyle \|\Q_l\|_{F} < \sqrt{r}2^{-l} = \frac{1}{4}\sqrt{\frac{1}{2L}} \frac{\lambda_{\min}(\H^{-1})}{\lambda_{\max}(\H^{-1})}$
holds with failure probability that is at most  $\sum_{i=1}^{l} p_2(i)$. This implies that the first condition in \eqref{eq:yconds} also holds
with the same failure probability.

Next, we consider the second condition in \eqref{eq:yconds} which requires a bound on $\|\P_{\T^{\perp}} \Y_l\|$. 
First, note that $ \|\P_{\T^{\perp}} \Y_{l} \|  
= \| \P_{\T^{\perp}} \sum_{j=1}^{l} \R^{*}_j \Q_{j-1}\|= \|  \sum_{j=1}^{l} 
\P_{\T^{\perp}} \R^{*}_j \Q_{j-1}\| \le   \sum_{j=1}^{l} \|\P_{\T^{\perp}} \R^{*}_j \Q_{j-1}\| $.
As such, in what follows, the focus will be finding a suitable bound for  $\|\P_{\T^{\perp}} \R^{*}_j \Q_{j-1}\|$.
This will be analyzed in Lemma \ref{golfing_size}. A key element in the proof of Lemma \ref{golfing_size} is an assumption 
on the size of  $\max_{\beta} \,|\langle \Q_{i}\,,\z_{\beta} \rangle|$. For ease of notation
in further analysis, let $\eta(\Q_i)$ be defined as: $\eta(\Q_i)=\max_{\beta} \,|\langle \Q_{i}\,,\z_{\beta} \rangle|$. 
The assumption is that, at the $i$-th step of the golfing scheme, $\displaystyle \eta(\Q_{i})^{2}\le \|\H^{-1}\|_{\infty}^{2} \frac{\nu}{n^2} c_{i}^{2}$
where $c_i^2$ is an upper bound for  $\|\Q_{i}\|_{F}^{2}$, $\|\Q_{i}\|_{F}^{2}\le c_{i}^{2}$. The idea is to argue that this assumption holds with very high probability. 
To show this, assume that $\eta(\Q_i)\le t_{4,i}$ with failure probability $p_4(i)$. Setting $t_{4,i} = \frac{1}{2}\eta(\Q_{i-1})$
and applying the inequality $\eta(\Q_i)\le t_{4,i}$ recursively results
\[
\eta(\Q_{i})^{2} \le 2^{-2}\eta(\Q_{i-1})^{2}\le 2^{-2i} \eta(\textrm{sgn}\,\M)^2 \le  2^{-2i} \|\H^{-1}\|_{\infty}^{2}  \frac{\nu r}{n^2} = 
\|\H^{-1}\|_{\infty}^{2} \frac{\nu}{n^2} (2^{-2i} r)
\]
where the last inequality follows from the coherence estimate in \eqref{eq:jointcoherence}.
It can now be concluded that, noting  \eqref{eq:qi_norm}, the inequality above ensures
that $\displaystyle \eta(\Q_{i})^{2}\le \|\H^{-1}\|_{\infty}^{2} \frac{\nu}{n^2} c_{i}^{2}$ with $c_i = 2^{-i}\sqrt{r}$. 
The failure probability $p_{4}(i)$ follows from Lemma \ref{joint_coherence_size_golfing}, noting that $\eta(\Q_{i}) =\eta(\Q_{i-1}-\P_{\T}\R_{i}^{*}\Q_{i-1})$,
and is given by
\[
p_4(i) = n^{2} \exp\left(-\frac{3\kappa_j}{32\left(c_v\nu+\frac{n}{Lr}\right)}\right) \quad \quad \forall i\in[1,l]
\]
Having justified the assumption on the size of $\eta(\Q_i)$, a key part of Lemma \ref{joint_coherence_size_golfing}, 
we now consider certifying the second condition in  \eqref{eq:yconds}.  Assume that 
$\|\P_{\T^{\perp}}\R^{*}_j\Q_{j-1}\|< t_{3,j}c_{j-1}$, with $\|\Q_{j-1}\|_{F}\le c_{j-1}=2^{-(j-1)}$, holds with failure probability $p_3(j)$. 
Fixing $\displaystyle t_{3,j}=\min\left(\frac{(\mu+1) \|\H^{-1}\|_{\infty}}{\sqrt{r}},\frac{1}{4\sqrt{r}}\right)$, Lemma \ref{golfing_size} gives
\[
p_{3}(j) = 2n \exp\left(\frac{-3\min\left((\mu+1) \|\H^{-1}\|_{\infty},\frac{1}{4}\right)^{2}\kappa_j }{8(\mu+1) \|\H^{-1}\|_{\infty}^{2}\nu}\right)
\]
where $\displaystyle \kappa_j = \frac{m_jn}{Lr}$.  
$\|\P_{\T^{\perp}} \Y_{l}\|$ can now be upper bounded as follows.
\begin{align*}
  \|\P_{\T^{\perp}} \Y_{l}\| \le  \sum_{k=1}^{l} \|\P_{\T^{\perp}}\R^{*}_{k}\Q_{k-1}\|
  < \sum_{k=1}^{l} t_{3,k}c_{k-1} \le \frac{1}{4\sqrt{r}} \sum_{k=1}^{l} c_{k-1}
  = \frac{1}{4\sqrt{r}} \sum_{k=1}^{l} \sqrt{r}2^{-(k-1)} < \frac{1}{2}
  \end{align*}
Applying the union bound over the failure probabilities, $\|\P_{\T^{\perp}} \Y_{l}\|<\frac{1}{2}$ holds true with failure probability 
which is at most $\sum_{j=1}^{l}[p_3(j)+p_4(j)]$. With the same failure probability, the second condition in \eqref{eq:yconds} holds true.

\end{proof}

Lemma \ref{golfing_size} will be proved shortly using the Bernstein inequality in  \cite{tropp2015introduction}
which is restated below for convenience.

\begin{theorem}[Bernstein inequality]

  Consider a finite sequence $\{\X_i\}$ of independent, random matrices with dimension $n$.
  Assume that 
  \[
E[\X_i] =0 \quad \textrm{and} \quad \|\X_{i}\|\le R \quad \forall i
\]
Let the matrix variance statistic of the sum $\sigma^{2}$ be defined as
\[
\sigma^{2} = \max\left(\left\|\sum_{i}E[\X_{i}^{T}\X_{i}]\right\|,\left\|\sum_{i}E[\X_{i}\X_{i}^{T}]\right\|\right)
\]
For all $t\ge 0$,
\begin{equation} \label{eq:bern}
  Pr\bigg[\left\|\sum_{i}\X_{i}\right\|>t\bigg] \le
\begin{cases}
\displaystyle 2n \exp\left(-\frac{3t^2}{8\sigma^{2}}\right) & t \le \frac{\sigma^{2}}{R} \vspace{0.2cm}\\
 \displaystyle 2n \exp\left(-\frac{3t}{8R}\right)            & t\ge \frac{\sigma^{2}}{R} 
\end{cases}
\end{equation}
\end{theorem}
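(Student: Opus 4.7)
The stated result is the matrix Bernstein inequality (for possibly non-self-adjoint matrices). My plan is to follow the standard matrix Laplace transform approach due to Ahlswede--Winter and refined by Tropp.

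First, I would reduce to the self-adjoint case. For each (possibly rectangular) summand $\X_i$, replace it by its Hermitian dilation
\[
\Y_i = \begin{bmatrix} \bm{0} & \X_i \\ \X_i^T & \bm{0} \end{bmatrix}.
\]
This matrix is self-adjoint, satisfies $\|\Y_i\|=\|\X_i\|\le R$ and $E[\Y_i]=\bm{0}$, and its square is block diagonal with blocks $\X_i\X_i^T$ and $\X_i^T\X_i$. Consequently the matrix variance statistic of $\sum_i \Y_i$ equals the $\sigma^2$ defined in the statement, and $\|\sum_i \X_i\| = \lambda_{\max}(\sum_i \Y_i)$. So it suffices to prove the bound for $\lambda_{\max}$ of a sum of independent, centered, self-adjoint matrices of norm at most $R$.

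Second, I would invoke the matrix Laplace transform machinery. Markov's inequality applied to the trace exponential gives
\[
\textrm{Pr}\bigl[\lambda_{\max}\bigl(\textstyle\sum_i \Y_i\bigr)\ge t\bigr] \le \inf_{\theta>0} e^{-\theta t}\, E\bigl[\textrm{Tr}\exp\bigl(\theta \textstyle\sum_i \Y_i\bigr)\bigr].
\]
Lieb's concavity theorem together with independence then yields the subadditivity bound
\[
E\bigl[\textrm{Tr}\exp\bigl(\theta \textstyle\sum_i \Y_i\bigr)\bigr] \le \textrm{Tr}\exp\Bigl(\textstyle\sum_i \log E[\exp(\theta \Y_i)]\Bigr).
\]
Next I bound each matrix cumulant generating function using the scalar Bernstein observation $e^{x}-1-x \le \bigl(e^{R\theta}-R\theta-1\bigr)x^2/R^2$ for $|x|\le R\theta$; applied in the operator sense to $\theta \Y_i$ with $E[\Y_i]=\bm{0}$, this gives the semidefinite ordering $\log E[\exp(\theta\Y_i)] \preceq f(\theta)\,E[\Y_i^2]$, where $f(\theta)=(e^{R\theta}-R\theta-1)/R^2$.

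Third, combining the previous two steps and using monotonicity of $\textrm{Tr}\exp$ with the bound $\textrm{Tr}\exp(\A)\le 2n\cdot\lambda_{\max}(\exp(\A))$, I obtain the master inequality
\[
\textrm{Pr}\bigl[\|\textstyle\sum_i\X_i\|\ge t\bigr] \le 2n\exp\bigl(-\theta t + \sigma^2 f(\theta)\bigr).
\]
Optimizing over $\theta>0$ finishes the argument: for $t\le \sigma^2/R$, choose $\theta = t/\sigma^2$ (effectively in the subgaussian regime where $f(\theta)\le \theta^2/2\cdot(\text{moderate constant})$) to obtain the Gaussian-type bound $2n\exp(-3t^2/(8\sigma^2))$; for $t\ge \sigma^2/R$, choose $\theta$ of order $1/R$ to get the Poisson-type bound $2n\exp(-3t/(8R))$. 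The constants $3/8$ come from a slightly loose but clean estimate of $f$ that splits the two regimes at $t=\sigma^2/R$.

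The main obstacles are not combinatorial but analytic: verifying the operator inequality $\log E[\exp(\theta\Y_i)]\preceq f(\theta)E[\Y_i^2]$ (which really rests on the operator-monotonicity of $\log$ and a careful scalar inequality for bounded centered variables), and invoking Lieb's concavity theorem to swap expectation with the matrix exponential under the trace. Once those ingredients are in place, the remainder is an elementary optimization, and the dilation trick mechanically transports the self-adjoint conclusion to the general rectangular statement given in the theorem.
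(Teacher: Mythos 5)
The paper does not actually prove this statement: it is quoted as a known result from Tropp's work on matrix concentration, so there is no internal proof to compare against. Your sketch reconstructs precisely the standard argument from that source --- Hermitian dilation to reduce to the self-adjoint case (with the dilation's variance statistic matching $\sigma^2$ and its dimension giving the prefactor $2n$), the matrix Laplace transform combined with Lieb's concavity theorem, the matrix mgf bound $\log E[\exp(\theta\Y_i)]\preceq f(\theta)E[\Y_i^{2}]$ with $f(\theta)=(e^{R\theta}-R\theta-1)/R^{2}$, and a final optimization in $\theta$ --- and it is correct in outline. Two small refinements: the scalar inequality should read $e^{x}-x-1\le (e^{R\theta}-R\theta-1)\,x^{2}/(R\theta)^{2}$ for $|x|\le R\theta$ (applied to the eigenvalues of $\theta\Y_i$), which is what actually yields the stated $f(\theta)$; and the literal choice $\theta=t/\sigma^{2}$ produces an exponent constant of about $1-(e-2)\approx 0.28$, weaker than $3/8$. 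To get the displayed constants, use $f(\theta)\le \frac{\theta^{2}/2}{\,1-R\theta/3\,}$ and take $\theta=t/(\sigma^{2}+Rt/3)$, which gives the bound $2n\exp\left(\frac{-t^{2}/2}{\sigma^{2}+Rt/3}\right)$ and then the two cases of \eqref{eq:bern} by splitting at $t=\sigma^{2}/R$. With those adjustments your proposal is a complete and faithful proof of the cited theorem.
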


\begin{lemma}  \label{golfing_size}
  Consider a fixed matrix $\G \in \T$. Assume that $\max_{\beta}\,\langle \G\,,\z_{\beta}\rangle^{2} \le \|\H^{-1}\|_{\infty}^{2} \frac{\nu}{n^2}\,c^{2}$
  with $c$ set as $\|\G\|_{F}^{2}\le c^{2}$. Then, with $\displaystyle \kappa_j = \frac{m_j n}{Lr}$, the following estimate holds for all 
  $\displaystyle t\le \frac{(\mu+1) \|\H^{-1}\|_{\infty}}{\sqrt{r}}$.
  \[
\textrm{Pr}\,(\|\P_{\T^{\perp}}\R^{*}_{j} \G\| \ge t\,c) \le 2n \exp\left(-\frac{3t^2\kappa_{j}r}{8(\mu+1) \|\H^{-1}\|_{\infty}^{2}\nu}\right)
\]
\end{lemma}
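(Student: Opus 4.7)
The plan is to recognize this as a matrix Bernstein concentration estimate and to exploit the structural identity $\P_{\T^\perp}\w_\alpha = \P_{\U^\perp}\w_\alpha \P_{\V^\perp}$ together with the correlation condition. First I would unpack the adjoint sampling operator to write
\[
\P_{\T^\perp}\R_j^{*}\G \;=\; \sum_{\alpha\in\Omega_j} \X_\alpha, \qquad \X_\alpha := \frac{L}{m_j}\,\langle \G,\z_\alpha\rangle\,\P_{\T^\perp}\w_\alpha,
\]
and observe that since $\{\z_\beta\}$ is dual to $\{\w_\beta\}$ and $\G\in\T$,
\[
\E[\X_\alpha] \;=\; \frac{1}{m_j}\sum_{\beta\in\Us}\langle \G,\z_\beta\rangle\,\P_{\T^\perp}\w_\beta \;=\; \frac{1}{m_j}\P_{\T^\perp}\G \;=\; \bm{0}.
\]
Thus $\{\X_\alpha\}_{\alpha\in\Omega_j}$ is an i.i.d.\ zero-mean sequence to which the matrix Bernstein inequality applies with target level $tc$.

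For the uniform bound $R$, the coherence assumption on $\G$ gives $|\langle\G,\z_\alpha\rangle|\le \|\H^{-1}\|_\infty\frac{\sqrt{\nu}}{n}c$, and since $\w_\alpha$ has unit Frobenius norm, $\|\P_{\T^\perp}\w_\alpha\|\le \|\w_\alpha\|_F\le 1$. Therefore
\[
\|\X_\alpha\| \;\le\; \frac{L\,\|\H^{-1}\|_\infty\sqrt{\nu}\,c}{m_j\,n} \;=:\; R.
\]
For the variance parameter, I would compute
\[
\sum_{\alpha\in\Omega_j}\E[\X_\alpha^{T}\X_\alpha] \;=\; \frac{L}{m_j}\sum_{\beta\in\Us}\langle \G,\z_\beta\rangle^{2}\,(\P_{\T^\perp}\w_\beta)^{T}(\P_{\T^\perp}\w_\beta),
\]
apply the pointwise coherence bound to pull out the scalar factor $\|\H^{-1}\|_\infty^{2}\frac{\nu}{n^{2}}c^{2}$, and then bound the remaining matrix sum via the key identity $\P_{\T^\perp}\w_\beta = \P_{\U^\perp}\w_\beta\,\P_{\V^\perp}$, which yields
\[
(\P_{\T^\perp}\w_\beta)^{T}(\P_{\T^\perp}\w_\beta) \;=\; \P_{\V^\perp}\,\w_\beta^{T}\P_{\U^\perp}\w_\beta\,\P_{\V^\perp} \;\preceq\; \P_{\V^\perp}\,\w_\beta^{T}\w_\beta\,\P_{\V^\perp}.
\]
Summing over $\beta$ and using Lemma \ref{corr_props}(b) gives $\big\|\sum_{\beta}(\P_{\T^\perp}\w_\beta)^{T}(\P_{\T^\perp}\w_\beta)\big\|\le (\mu+1)n$, so
\[
\sigma^{2} \;\le\; \frac{L(\mu+1)\|\H^{-1}\|_\infty^{2}\,\nu\,c^{2}}{m_j\,n}.
\]
The same bound holds for $\sum_\alpha\E[\X_\alpha\X_\alpha^{T}]$ by the symmetric computation with $\P_{\U^\perp}$ factored out.

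Finally, I would verify that the stated range $t\le (\mu+1)\|\H^{-1}\|_\infty/\sqrt{r}$ keeps us in the Gaussian regime of Bernstein, i.e.\ $tc\le \sigma^{2}/R$ (using $\nu\ge r/n$ or at worst $\nu\ge 1/r$), and then substitute into \eqref{eq:bern}:
\[
\textrm{Pr}(\|\P_{\T^\perp}\R_j^{*}\G\|\ge tc) \;\le\; 2n\exp\!\left(-\frac{3t^{2}c^{2}}{8\sigma^{2}}\right) \;\le\; 2n\exp\!\left(-\frac{3t^{2}\kappa_j r}{8(\mu+1)\|\H^{-1}\|_\infty^{2}\nu}\right),
\]
after simplifying with $\kappa_j=m_j n/(Lr)$. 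The main obstacle is the variance bound, specifically ensuring that the sandwich inequality with $\P_{\U^\perp}$ and $\P_{\V^\perp}$ combines cleanly with the correlation condition to yield the extra factor of $(\mu+1)n$ rather than the trivial crude bound of $L$; everything else is bookkeeping.
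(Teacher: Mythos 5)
Your proposal follows essentially the same route as the paper: decompose $\P_{\T^\perp}\R_j^*\G$ into the i.i.d.\ zero-mean summands $\frac{L}{m_j}\langle\G,\z_\alpha\rangle\P_{\T^\perp}\w_\alpha$, bound $R$ via the coherence assumption on $\langle\G,\z_\alpha\rangle$, bound the variance by pulling out $\max_\alpha\langle\G,\z_\alpha\rangle^2$ and controlling $\big\|\sum_\beta(\P_{\T^\perp}\w_\beta)^T(\P_{\T^\perp}\w_\beta)\big\|$ through the correlation condition, then apply matrix Bernstein. Your inline PSD sandwich $(\P_{\T^\perp}\w_\beta)^T(\P_{\T^\perp}\w_\beta)=\P_{\V^\perp}\w_\beta^T\P_{\U^\perp}\w_\beta\P_{\V^\perp}\preceq\P_{\V^\perp}\w_\beta^T\w_\beta\P_{\V^\perp}$ is exactly the content of the paper's Lemma \ref{operator_norm_of_sum}, which it proves separately in the appendix and invokes here; your $R$ and $\sigma^2$ match the paper's.

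The one genuine soft spot is your final range check. With your estimates, $\sigma^2/R=(\mu+1)\|\H^{-1}\|_\infty\sqrt{\nu}\,c$, so the stated range $t\le(\mu+1)\|\H^{-1}\|_\infty/\sqrt{r}$ lies in the Gaussian branch of Bernstein only if $\nu\ge 1/r$. Your parenthetical ``using $\nu\ge r/n$ or at worst $\nu\ge 1/r$'' is not justified: no such lower bound on $\nu$ is established for this general-basis coherence definition, and if $\nu<1/r$ the stated $t$-range exceeds $\sigma^2/R$, putting you in the sub-exponential branch where the claimed Gaussian-form tail does not follow from these $R,\sigma^2$. The paper confronts the same issue by a case split: for $\nu\ge 1/r$ it inflates $R$ to $R_1=\frac{L}{m_j}\|\H^{-1}\|_\infty\frac{\nu\sqrt r}{n}c$ (using $\sqrt{\nu}\le\nu\sqrt r$), which makes $\sigma^2/R_1$ equal exactly the threshold $(\mu+1)\|\H^{-1}\|_\infty c/\sqrt r$ in the lemma, and it treats $\nu<1/r$ with the bound $R_2$ separately. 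So to close your argument you should either prove the lower bound on $\nu$ you are invoking or reproduce the case analysis; everything else in your write-up is correct and matches the paper's proof.
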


\begin{proof}[Proof of Lemma \ref{golfing_size}]
Using the dual basis representation,  $\displaystyle \P_{\T^{\perp}}\R^{*}_{j} \G = \sum_{\alpha\in\Omega_j} \frac{L}{m_j}  \langle \G\,,\z_{\alpha} \rangle \P_{\T^{\perp}} \w_{\alpha}$.
The summand, denoted $\X_{\alpha}$,  has zero expectation since $\G \in \T$. With this, 
the zero mean assumption for Bernstein inequality is satisfied. Next, we consider suitable estimates for $R$ and
$\sigma^{2}$. The latter necessitates an estimate for $\max(\|E[\X_{\alpha}\X_{\alpha}^{T}]\|,\|E[\X_{\alpha}^T\X_{\alpha}]\|)$. 
First, we bound  $\|E[\X_{\alpha}\X_{\alpha}^{T}]\|$. 
Since $\displaystyle \X_{\alpha}\X_{\alpha}^{T} = \frac{L^2}{m_{j}^{2}}\langle \G\,,\z_{\alpha} \rangle ^{2} (\P_{\T^{\perp}} \w_{\alpha})(\P_{\T^{\perp}} \w_{\alpha})^{T}$, 
using  Lemma \ref{operator_norm_of_sum} and the fact that $\w_{\alpha}\w_{\alpha}^{T}$ is positive semidefinite, 
 $\|E[\X_{\alpha}\X_{\alpha}^{T}]\|$ can be upper bounded as follows.
$$
  \left\|E[\X_{\alpha}\X_{\alpha}^{T}]\right\| \le  \frac{L}{m_j^{2}}\, \underset{\|\varphi\|_2=1}{\max}\sum_{\alpha\in \Us} 
	\langle \G\,,\z_{\alpha} \rangle ^{2} 
  \langle \varphi\,, \w_{\alpha}\w_{\alpha}^{T}\varphi \rangle 
\le  \frac{L}{m_{j}^2} \, \underset{\alpha\in \Us}{\max}\,\, \langle \z_{\alpha}\,,\G\rangle ^{2}\,
\underset{\|\varphi\|_2=1}{\max} \,  \langle \varphi\,, \left(\sum_{\alpha\in \Us} \w_{\alpha}\w_{\alpha}^{T} \right)\varphi \rangle \label{eq:golfing_var_inequality}
$$
Using the correlation condition, in particular Lemma \ref{corr_props}(b), $\|\sum_{\alpha\in \Us} \w_{\alpha}\w_{\alpha}^{T}\|\le (\mu+1) n$, we obtain 
$$\left\|E[\X_{\alpha}\X_{\alpha}^{T}]\right\| \le \frac{L\,(\mu+1)\,n}{m_j^2}\,\underset{\alpha\in \Us}{\max}\,\, \langle \z_{\alpha}\,,\G\rangle ^{2}
\le \frac{L\,(\mu+1)\,n}{m_{j}^2} \|\H^{-1}\|_{\infty}^{2} \frac{\nu}{n^2}\,c^{2} $$
An analogous calculation and reasoning as above yields $\left\|E[\X_{\alpha}^{T}\X_{\alpha}]\right\| \le \frac{L\,(\mu+1)n}{m_{j}^2} \|\H^{-1}\|_{\infty}^{2} \frac{\nu}{n^2}\,c^{2} $.
Therefore, using triangle inequality, set 
\[
\sigma^{2} = \frac{L\,(\mu+1)}{m_{j}} \|\H^{-1}\|_{\infty}^{2} \frac{\nu}{n}\,c^{2}
\]
To complete the proof, it remains to estimate $R$. 
\begin{equation}
  \|\X_{\alpha}\| \le \frac{L}{m_{j}} |\underset{\alpha\in \Us}{\max}\,\, \langle \z_{\alpha}\,,\G\rangle|\, \|\P_{\T^{\perp}}\w_{\alpha}\|
  \le \frac{L}{m_{j}} \|\H^{-1}\|_{\infty} \frac{\sqrt{\nu}}{n}\,c
\end{equation}
Two pertinent cases have to be considered.  If $\displaystyle \nu\ge \frac{1}{r}$,  
$\displaystyle\|\X_{\alpha}\|  \le \frac{L}{m_{j}} \|\H^{-1}\|_{\infty} \frac{\nu\sqrt{r}}{n}\,c=R_{1}$
and if $\displaystyle\nu < \frac{1}{r}$, $\displaystyle\|\X_{\alpha}\|  \le \frac{L}{m_{j}} \|\H^{-1}\|_{\infty} \frac{\sqrt{\nu}}{n}\,c=R_{2}$. 
Note that $\displaystyle\frac{\sigma^{2}}{R_1}=  \frac{(\mu+1) \|\H^{-1}\|_{\infty}c}{\sqrt{r}}$
and $\displaystyle\frac{\sigma^{2}}{R_2} = \frac{(\mu+1) \|\H^{-1}\|_{\infty}c}{\sqrt{r}}$.
To conclude the proof, we apply the Bernstein inequality. 
An application of the Bernstein inequality results the following estimate.
\begin{equation}
\textrm{Pr}(\|\P_{\T^{\perp}}\R^{*}_{j} \G\| \ge t) \le 2n \exp\left(-\frac{3t^2\kappa_{j}r}{8(\mu+1) \|\H^{-1}\|_{\infty}^{2}\nu c^2}\right)
\end{equation}
 for all $\displaystyle t \le \frac{(\mu+1) \|\H^{-1}\|_{\infty}c}{\sqrt{r}}$ with $\displaystyle \kappa_{j}= \frac{m_{j}n}{Lr}$. 
This concludes the proof of Lemma \ref{golfing_size}.
\end{proof}

Next, it will be argued that $\M$ is a unique solution to \eqref{eq:m_completion} with very high probability. 
The argument considers two separate cases based on comparing
$\|\bDelta_{\T}\|_{F}^{2}$ and $\|\bDelta_{\T^{\perp}}\|_{F}^{2}$. This motivates us to define the following two sets:
$\S_{1}=\bigg\{\X\in \S: \|\bDelta_{\T}\|_{F}^{2} \ge \left(\sqrt{2L} \frac{\lambda_{\max}(\H^{-1})}{\lambda_{\min}(\H^{-1})}\|\bDelta_{\T^{\perp}}\|_{F}\right)^{2} \bigg\}$ and 
$\S_2=\bigg\{\X\in \S:\|\bDelta_{\T}\|_{F}^{2} < \left(\sqrt{2L} \frac{\lambda_{\max}(\H^{-1})}{\lambda_{\min}(\H^{-1})}\|\bDelta_{\T^{\perp}}\|_{F}\right)^{2}\,\&\,\Po(\bDelta)=\bm{0} \bigg\}$. With this, assuming that $\Omega$ is sampled uniformly at random with replacement, the two cases are as follows.

\begin{enumerate}
\item For all $\X\in\S_{1}$, set $|\Omega| = m$ ``sufficiently large" such that $\textrm{Pr}\, \left(\left\{\Omega\subset \Us\,\big|\, |\Omega| = m, \lambda_{\min}\left(\P_{\T}\,\F\,\P_{\T}\right)> \frac{\lambda_{\min}(\H)}{2}\right\}\right) $ $\ge 1-p_1$ based on Lemma \ref{min_eig}. 
Therefore,  all $\X\in\S_{1}$ are feasible solutions to \eqref{eq:m_completion} with probability at most $p_1$ from Theorem \ref{uniqueness_conditions}(a).

\item For all $\X\in\S_{2}$, we obtain 
\[
\textrm{Pr}\,\left(\left\{\Omega\subset \Us\, \big|\,\, |\Omega| = m, \Y \in \textrm{range}\,\,\Po^{*}\, \& \, \|\P_{\T}\Y-\textrm{Sgn}\,\M\|_{F} \le \frac{1}{4}\sqrt{\frac{1}{2L}} \frac{\lambda_{\min}(\H^{-1})}{\lambda_{\max}(\H^{-1})}\, \&\, \|\P_{\T^{\perp}}\Y\|\le \frac{1}{2}\right\}\right)\ge 1-\epsilon
\] 
with $\displaystyle \epsilon=  \sum_{i=1}^{l} \left[p_{2}(i)+p_3(i)+p_4(i)\right]$ by 
setting $|\Omega| = m$  ``sufficiently large" based on Lemma \ref{construction_of_y}.  Then, the probability of all $\X\in\S_{2}$ being solutions to
\eqref{eq:m_completion} is at most $\epsilon$ from Theorem \ref{uniqueness_conditions}(b).

\end{enumerate}

Using the above two cases and employing the union bound, any $\X\in \S$ different from $\M$ is a solution to the general matrix completion problem with probability 
at most $p= p_1+\sum_{i=1}^{l}[p_{2}(i)+p_3(i)+p_4(i)]$. In the arguments above, we have  used the terms ``sufficiently large'',
``small probability'' and ``very high probability'' with out being precise. The goal now is to set everything explicit. 
First, define very high probability as a probability of at least $1-n^{-\beta}$ for $\beta>1$. 
Analogously, define small failure probability as a probability of at most $n^{-\beta}$ for some $\beta>1$. 
To recover the underlying matrix with high probability, the idea is to carefully set the remaining free parameters
$m$, $l$ and $m_i$  so that the $p \le n^{-\beta}$ for $\beta>1$. 
This necessitates revisiting all the failure probabilities in the analysis. $p_1$, the first failure probability, is
the probability that the condition in Theorem \ref{uniqueness_conditions}(a) does not hold.
In the construction of the dual certificate $\Y$ via the golfing scheme, three failure probabilities, $p_{2}(i)$, $p_{3}(i)$ and $p_{4}(i)$, 
$\forall i\in[1,l]$, appear.
With this, all the failure probabilities are noted below. 
\begin{align*}
  p_1 &=  n\exp\left(-\frac{\lambda_{\min}(\H)^{2}\kappa}{8\nu}   \right) \quad ; \quad
 p_2(i) =   \exp\left(-\frac{\kappa_i}{32\left(\lambda_{\max}(\H^{-1})\|\H^{-1}\|_{\infty}\nu+\frac{n}{Lr}\right)}+\frac{1}{4}\right)\\
 p_{3}(i) &= 2n \exp\left(\frac{-3\min\left((\mu+1) \|\H^{-1}\|_{\infty},\frac{1}{4}\right)^{2}\kappa_i}{8(\mu+1) \|\H^{-1}\|_{\infty}^{2}\nu}\right)\quad ; \quad
p_{4}(i) = n^{2}\exp \left(-\frac{3\kappa_i}{32\left(c_v\nu+\frac{n}{Lr}\right)}\right)
\end{align*}
To suitably set $m_i$,  since $k_i= \frac{m_i n}{Lr}$, we can set $k_i$ with the condition that all the failure probabilities are at most $\displaystyle \frac{1}{4l}n^{-\beta}$ for
$\beta>1$. A minor calculation results one suitable choice of $k_i$, $k_i = 48\big(C\nu+\frac{1}{nr}\big)\big(\beta\log(n)+\log(4l)\big)$, with $C$ defined as
\begin{equation}\label{eq:c}
C  = \max\left(\lambda_{\max}(\H^{-1})\|\H^{-1}\|_{\infty}, c_{v}, \frac{(\mu+1) \|\H^{-1}\|_{\infty} }{\min\left((\mu+1)\|\H^{-1}\|_{\infty},\frac{1}{4}\right)^{2}}\right)
\end{equation}
The total failure probability, applying the union bound, is bounded above by $n^{-\beta}$. The number of
measurements, $m= lrnk_i$, is at least
\begin{equation} \label{eq:m_general_basis_expression}
\log_{2}\left(4\sqrt{2L}\frac{\lambda_{\max}(\H)}{\lambda_{\min}(\H)}\sqrt{r}\right)nr
\left(48\big[C\nu+\frac{n}{Lr}\big]\big[\,\beta\log(n)+\log\left(4\log_{2}\left(4\sqrt{2L}\frac{\lambda_{\max}(\H)}{\lambda_{\min}(\H)}\sqrt{r}\right)\right)\big]\right)
\end{equation}
This finishes the proof of Theorem \ref{thm:thm1}. It can be concluded that the minimization
program in $\eqref{eq:m_completion}$ recovers the underlying matrix with very high probability.

\subsection{Noisy General Matrix Completion}
In practical applications, the measurements in the general matrix completion problem are prone to noise. 
This motivates the analysis of the robustness of the nuclear norm minimization program for the general matrix completion problem. 
In particular, consider the following noisy general matrix completion problem where noise is modeled as
Gaussian noise with mean $\mu$ and variance $\sigma$.
\begin{align}
&\underset{\X\in \S}{\textrm{minimize }} \quad \|\X\|_{*} \nonumber\\
&\textrm{subject to }\quad \|\Po(\X) - \Po(\M)\|\le \delta  \label{eq:m_completion_noisy}
\end{align}
Above, $\delta$ characterizes the level
of noise. In \cite{candes2010matrix}, under certain assumptions, the authors show the robustness of the nuclear norm
minimization algorithm for the matrix completion problem. Following this existing analysis and the dual basis
framework, we expect that a robustness result, such as the one below, can be attained. 

\begin{theorem}\label{thm1_noisy}
Let $\M\ \in \real^{n\times n}$ be a matrix of rank $r$ that obeys the coherence conditions
\eqref{eq:coherencew1}, \eqref{eq:coherencev1} and \eqref{eq:jointcoherence1}
with coherence $\nu$ and satisfies the correlation condition \eqref{eq:correlation} with correlation parameter $\mu$.
Define $C$ as follows: 
$\displaystyle C  = \max\left(\lambda_{\max}(\H^{-1})\|\H^{-1}\|_{\infty}, c_{v}, \frac{(\mu+1) \|\H^{-1}\|_{\infty} }{\min\left((\mu+1)\|\H^{-1}\|_{\infty},\frac{1}{4}\right)^{2}}\right) $.
Assume $m$ measurements $\langle \M\,,\w_{\alphab}\rangle$, sampled uniformly at random with replacement, 
are corrupted with Gaussian noise of mean $\mu$, variance $\sigma$ and noise level $\delta$. 
For $\beta>1$, if
\begin{equation} 
m \ge \log_{2}\left(4\sqrt{2L}\frac{\lambda_{\max}(\H)}{\lambda_{\min}(\H)}\sqrt{r}\right)nr
\left(48\big[C\nu+\frac{n}{Lr}\big]\big[\,\beta\log(n)+\log\left(4\log_{2}\left(4\sqrt{2L}\frac{\lambda_{\max}(\H)}{\lambda_{\min}(\H)}\sqrt{r}\right)\right)\big]\right)
\end{equation}
then
\[
\|\M-\bar{\M}\|_{F} \le f\left(n,\frac{m}{n^2},\delta\right)
\]
where $\bar{\M}$ is a solution to \eqref{eq:m_completion_noisy} with probability at least $1-n^{-\beta}$.

\end{theorem}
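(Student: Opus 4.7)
The plan is to extend the dual-basis argument used to prove Theorem \ref{thm:thm1} by tracking how the feasibility slack $\delta$ propagates through the two deterministic conditions in Theorem \ref{uniqueness_conditions}. Let $\bar{\M}$ be any optimizer of \eqref{eq:m_completion_noisy} and set $\bDelta = \bar{\M} - \M$. Under the mild assumption that $\M$ is itself feasible for \eqref{eq:m_completion_noisy} with high probability (which holds when $\delta$ is chosen of order the typical size of the Gaussian noise), the triangle inequality yields $\|\Po \bDelta\|_{F} \le 2\delta$ up to a norm-equivalence constant. This bound is the only place the noise enters and replaces the identity $\Po \bDelta = \bm{0}$ used in the noiseless analysis; the high-probability events controlling the random sampling, namely Lemmas \ref{min_eig}, \ref{injective} and \ref{construction_of_y}, carry over unchanged, so the probability statement in the theorem is inherited from that of Theorem \ref{thm:thm1}.

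First I would establish a stability version of Theorem \ref{uniqueness_conditions}(a). Conditioned on the operator-Chernoff event $\lambda_{\min}(\P_{\T}\F\P_{\T}) > \frac{1}{2}\lambda_{\min}(\H)$ of Lemma \ref{min_eig}, the lower bound \eqref{eq:pdelta_lower_bound_simplified} for $\|\Po \bDelta_{\T}\|_{F}$ and the upper bound \eqref{eq:pdelta_perp_upper_bound} for $\|\Po \bDelta_{\T^{\perp}}\|_{F}$ combine with the triangle inequality $\|\Po \bDelta_{\T}\|_{F} \le 2\delta + \|\Po \bDelta_{\T^{\perp}}\|_{F}$ to give an estimate of the form
\[
\|\bDelta_{\T}\|_{F} \;\le\; A_{1}(L, m, \H)\,\delta \;+\; A_{2}(L, \H)\,\|\bDelta_{\T^{\perp}}\|_{F},
\]
where $A_{1}, A_{2}$ collect the conditioning factors $\sqrt{\lambda_{\max}(\H^{-1})/\lambda_{\min}(\H^{-1})}$ and $\lambda_{\min}(\H)$. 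This is the quantitative analogue of the dichotomy between the sets $\S_{1}$ and $\S_{2}$ that appears in the noiseless proof.

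Next I would redo the subgradient chain in the proof of Theorem \ref{uniqueness_conditions}(b) using the same golfing-scheme dual certificate $\Y_{l}$ from \eqref{eq:golfing_scheme}. The only change is that the identity $\langle \Y_{l}, \bDelta\rangle = 0$, which relied on $\bDelta \in \ker \Po$, is replaced by a Cauchy--Schwarz estimate $|\langle \Y_{l}, \bDelta\rangle| \le 2\delta\cdot B$, obtained by expressing $\Y_{l} = \sum_{\alpha\in\Omega} c_{\alpha}\w_{\alpha}$ with $c_{\alpha}=\frac{L}{m_{j}}\langle \Q_{j-1},\z_{\alpha}\rangle$ for $\alpha\in\Omega_{j}$, pairing against $\Po\bDelta$, and invoking the right half of \eqref{eq:minmax_bound} to convert $\sum_{\alpha\in\Omega}\langle \w_{\alpha},\bDelta\rangle^{2}$ into $\|\Po\bDelta\|_{F}^{2}$. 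Propagating this slack yields
\[
0 \;\ge\; \|\bar{\M}\|_{*} - \|\M\|_{*} \;\ge\; \tfrac{1}{2}\|\bDelta_{\T^{\perp}}\|_{F} \;-\; \tfrac{1}{4}\sqrt{\tfrac{1}{2L}}\tfrac{\lambda_{\min}(\H^{-1})}{\lambda_{\max}(\H^{-1})}\|\bDelta_{\T}\|_{F} \;-\; 2\delta\,B,
\]
which I then solve for $\|\bDelta_{\T^{\perp}}\|_{F}$ and back-substitute into the estimate from the previous paragraph. The two inequalities close the system and give $\|\bDelta\|_{F} \le f(n, m/n^{2}, \delta)$.

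The main obstacle is bounding $B$ in a way that keeps $f$ polynomially controlled. Using the dual-basis analogue of Lemma \ref{norm_equivalent}, namely $\sum_{\alpha}\langle \Q_{j-1},\z_{\alpha}\rangle^{2} \le \lambda_{\max}(\H^{-1})\|\Q_{j-1}\|_{F}^{2}$, together with the geometric decay $\|\Q_{j-1}\|_{F} \le 2^{-(j-1)}\sqrt{r}$ from \eqref{eq:qi_norm}, the per-batch sums telescope to a bound of order $l\sqrt{r}\,\sqrt{\lambda_{\max}(\H^{-1})}\,L/\sqrt{m}$. The care needed is to verify that this bound, combined with $A_{1}$ and $A_{2}$, compresses into a function $f$ that is linear in $\delta$ and only polynomially inflated by the conditioning of $\H$ and the ratio $n^{2}/m$, in analogy with the orthogonal-basis robustness result of \cite{candes2010matrix}. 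Once this accounting is done the theorem follows, with no additional random-sampling events required beyond those already used for the noiseless result.
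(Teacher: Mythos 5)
The paper does not actually prove Theorem \ref{thm1_noisy}: it is stated only as an anticipated robustness result in the spirit of \cite{candes2010matrix}, and the remark immediately following it says that a proof ``requires specifying the level of noise and making the function $f$ explicit'' and is left as future work. So there is no in-paper argument to compare yours against. Your outline follows exactly the route the paper gestures at (perturb the dual-certificate argument \`a la Cand\`es--Plan, reuse the random events of Lemmas \ref{min_eig}, \ref{injective} and \ref{construction_of_y}), which is the natural plan, but as written it is a plan rather than a proof, and the pieces you leave open are precisely the pieces the paper itself declares open.

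Concretely: (i) you never exhibit $f$; the step ``the two inequalities close the system'' is where all the work lies, and your own estimate of $B$, of order $\log(\cdot)\sqrt{r\,\lambda_{\max}(\H^{-1})}\,L/\sqrt{m}\sim n^{2}/\sqrt{m}$, must be shown to produce an error bound that remains meaningful after it is multiplied by $\delta$ and pushed through the $\S_{1}$/$\S_{2}$-type case analysis --- that is not bookkeeping, it is the open question. (ii) To convert $\sum_{\alpha\in\Omega}\langle\w_{\alpha},\bDelta\rangle^{2}$ into $\|\Po\bDelta\|_{F}^{2}$ you need the \emph{left} inequality of \eqref{eq:minmax_bound} (the lower bound involving $\lambda_{\min}(\H^{-1})$), not the right half; moreover $\Y_{l}=\sum_{j}\R_{j}^{*}\Q_{j-1}$ carries batch weights $L/m_{j}$ while $\Po$ carries $L/m$, so the Cauchy--Schwarz pairing must be done batch by batch, and repeated samples have to be accounted for as in \eqref{eq:pdelta_perp_upper_bound}. (iii) Feasibility of $\M$ for \eqref{eq:m_completion_noisy} is itself a probabilistic event about the Gaussian noise measured through $\Po$ (hence weighted by the dual basis and $\lambda_{\max}(\H^{-1})$), so the claim that no events beyond the noiseless ones are needed is not accurate; and the noiseless dichotomy does not transfer verbatim because $\bDelta\notin\ker\Po$, so your ``stability version'' of Theorem \ref{uniqueness_conditions}(a) is asserted rather than derived. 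None of these is obviously fatal, but until $f$ is made explicit and these estimates are carried out, the theorem remains unproved --- consistent with the paper's own admission that this is future work.
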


\noindent
\textbf{Remark}: A proof of the robustness theorem above requires specifying the level of noise
and making the function $f$ explicit. We leave this as a future work. 

\section{Conclusion}
\label{sec:conclusion}
In this paper, we study the problem of recovering a low rank matrix given a few of its expansion coefficients
with respect to any basis. The considered problem generalizes existing analysis for the standard matrix completion problem and
low rank recovery problem with respect to an orthonormal basis. The main analysis uses the dual basis approach and 
is based on dual certificates.  An important assumption in the analysis is a proposed sufficient condition on the basis
matrices named as the correlation condition. This condition can be checked in $O(n^3)$ computational time
and holds in many cases of deterministic basis matrices where the restricted isometry property (RIP) condition might not hold or is NP-hard to verify. 
If this condition holds and the underlying low rank matrix obeys the coherence condition with parameter $\nu$,  under additional mild
assumptions, our main result shows that the true matrix can be recovered with very high probability from $O(nr\nu\log^2n)$ uniformly random
sampled coefficients. Future research will consider a detailed analysis of the correlation condition and evaluating the effectiveness of
the framework of the general matrix completion in certain applications. 

\section{Acknowledgment}

Abiy Tasissa would like to thank Professor David Gross for correspondence over email regarding the work in \cite{gross2011recovering}. Particularly, the proof of
Lemma $10$ is a personal communication from Professor David Gross.

\bibliographystyle{plain}
\bibliography{matrix_completion}

\appendix
\section{Appendix $A$}

\begin{lemma} \label{completeness}
Given an orthonormal basis $\{\C_{\alpha}\}_{\alpha=1}^{L=n^2}$, we have $\sum_{\alpha} \C_{\alpha}^{T}\C_{\alpha} = \sum_{\alpha} \C_{\alpha}\C_{\alpha}^{T} = n\I$.
\end{lemma}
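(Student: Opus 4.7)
The plan is to exploit the fact that orthonormality of $\{\C_\alpha\}_{\alpha=1}^{n^2}$ with respect to the Frobenius inner product $\langle A,B\rangle = \operatorname{Tr}(A^T B)$ gives a completeness relation at the level of matrix entries, and then compute the desired sums entry by entry.

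First, I would observe that since $\{\C_\alpha\}$ is an orthonormal basis of $\real^{n\times n}$, every matrix $X$ admits the expansion $X = \sum_\alpha \langle X,\C_\alpha\rangle \C_\alpha$. Applying this to the standard basis element $\e_{kl}$ (zero except for a $1$ in position $(k,l)$) and reading off the $(i,j)$ entry yields the scalar identity
\begin{equation*}
\sum_{\alpha=1}^{n^2} (\C_\alpha)_{ij}\,(\C_\alpha)_{kl} = \delta_{ik}\delta_{jl},
\end{equation*}
since $\langle \e_{kl},\C_\alpha\rangle = (\C_\alpha)_{kl}$ and $(\e_{kl})_{ij} = \delta_{ik}\delta_{jl}$.

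Next, I compute the $(j,l)$ entry of $\sum_\alpha \C_\alpha^T \C_\alpha$:
\begin{equation*}
\Bigl(\sum_\alpha \C_\alpha^T \C_\alpha\Bigr)_{jl}
= \sum_\alpha \sum_{i=1}^{n} (\C_\alpha)_{ij}\,(\C_\alpha)_{il}
= \sum_{i=1}^{n} \delta_{ii}\,\delta_{jl}
= n\,\delta_{jl},
\end{equation*}
where the middle step uses the completeness identity with $k=i$. This gives $\sum_\alpha \C_\alpha^T \C_\alpha = n\I$. The companion identity $\sum_\alpha \C_\alpha \C_\alpha^T = n\I$ follows by the same computation with the roles of the row and column indices interchanged (equivalently, by applying the first identity to the orthonormal basis $\{\C_\alpha^T\}$, which is again orthonormal since transposition preserves the Frobenius inner product).

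There is no substantive obstacle here; the whole content of the lemma is the entrywise completeness relation, which is immediate from basis expansion. I would simply be careful to use the Frobenius inner product correctly and to track indices when converting the scalar identity into the matrix identity.
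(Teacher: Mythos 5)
Your proof is correct and follows essentially the same route as the paper: both establish the entrywise completeness relation $\sum_\alpha (\C_\alpha)_{ij}(\C_\alpha)_{kl}=\delta_{ik}\delta_{jl}$ (the paper via vectorization, you via expanding $\e_{kl}$, which is the same fact) and then evaluate $\sum_\alpha \C_\alpha^T\C_\alpha$ and $\sum_\alpha \C_\alpha\C_\alpha^T$ entry by entry to get $n\I$.
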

\begin{proof}

Consider a matrix $\X\in \real^{n\times n}$ expanded in the orthonormal basis as  $ \X = \sum_{\alpha} \langle \X\,, \C_{\alpha} \rangle \C_{\alpha}$.
With $\x$ and $\c_{\alpha}$ as column-wise vectorized forms of  $\X$ and $\C_{\alpha}$ respectively,
$\x = \sum_{\alpha} \langle \x\,,\c_{\alpha}\rangle \c_{\alpha}= (\sum_{\alpha} \c_{\alpha}\c_{\alpha}^{T})\x$.
This implies that $\sum_{\alpha} \c_{\alpha}\c_{\alpha}^{T}=\I_{n^2}$ where the subscript denotes the size of
the identity matrix.  This is the standard completeness relation. The implication of this relation is that
$\sum_{\alpha} \C_{\alpha}(i,j)  \C_{\alpha}(s,t)  = \delta^{i,j}_{s,t}$.  Next, consider
$\sum_{\alpha} \C_{\alpha}\C_{\alpha}^{T}$. The $(i,j)$-th entry of this sum is given by 
\[
\left(\sum_{\alpha} \C_{\alpha}\C_{\alpha}^{T}\right)_{i,j} = \sum_{\alpha} \sum_{s=1}^{n} \C_{\alpha}(i,s) \C_{\alpha}(j,s) 
= \sum_{s=1}^{n} \sum_{\alpha}  \C_{\alpha}(i,s) \C_{\alpha}(j,s)  = \sum_{s=1}^{n} \delta^{i,s}_{j,s}= n \delta_{i,j}
\]
It follows that  $\sum_{\alpha} \C_{\alpha}\C_{\alpha}^{T} = n\I$. An analogous calculation results $\sum_{\alpha} \C_{\alpha}^{T}\C_{\alpha} = n\I$. 

\end{proof}

\begin{lemma} \label{signx_prop}
If $\X \in \T$, $\textrm{Sgn}\,\X \in \T$.
\end{lemma}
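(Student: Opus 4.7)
The plan rests on the orthogonal-projection characterization of $\T^\perp$ derived immediately after \eqref{eq:pt_defn}, namely $\P_{\T^\perp}\Y=\P_{\U^\perp}\Y\P_{\V^\perp}$. Thus $\Y\in\T$ iff $\P_{\U^\perp}\Y\P_{\V^\perp}=0$, and it suffices to show that the operation $\Y\mapsto\textrm{Sgn}\,\Y$ preserves this vanishing.

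First I would pass to a block picture. Completing $\U$ and $\V$ to orthonormal bases $[\U,\U^\perp]$ and $[\V,\V^\perp]$ of $\real^n$, and using that the SVD is equivariant under orthogonal change of basis (so $\textrm{Sgn}(Q_1\X Q_2^T)=Q_1(\textrm{Sgn}\,\X)Q_2^T$ for any orthogonal $Q_1,Q_2$), I may assume $\X$ is already written in block form with zero lower-right block corresponding to the $(\U^\perp,\V^\perp)$ compartment; this encodes $\X\in\T$. The goal becomes that the corresponding block of $\U_X\V_X^T$ also vanishes, where $\X=\U_X\Sigma_X\V_X^T$ is the compact SVD. Partitioning $\U_X$ into an upper piece $P$ (coordinates in $\U$) and a lower piece $Q$ (coordinates in $\U^\perp$), and similarly $\V_X$ into $R,S$, the hypothesis reads $Q\,\Sigma_X\,S^T=0$ and the conclusion to prove is $QS^T=0$.

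To upgrade the weighted identity $Q\Sigma_X S^T=0$ to the unweighted $QS^T=0$, I would combine the orthonormality relations $P^TP+Q^TQ=I$ and $R^TR+S^TS=I$ with the eigenequations $\X^T\X\,\V_X=\V_X\Sigma_X^2$ and $\X\X^T\,\U_X=\U_X\Sigma_X^2$, together with the block form of $\X^T\X$ (whose lower-right block reduces to $B^TB$, with $B$ the upper-right block of $\X$). I would first treat the case of simple positive singular values, where each eigenspace of $\X^T\X$ is one-dimensional and the block structure forces each right singular vector to decompose compatibly with the $\V\oplus\V^\perp$ split, with a symmetric argument on the left; repeated singular values would then follow by a density/continuity argument on the subset of $\T$ with simple spectrum.

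The main obstacle is genuinely this last step: for arbitrary matrices $Q,S$ and diagonal $\Sigma_X$ the implication $Q\Sigma_X S^T=0\Rightarrow QS^T=0$ is false, so the argument must exploit that $(Q,\Sigma_X,S)$ is constrained by arising from an SVD of a matrix with vanishing lower-right block. A possibly cleaner reformulation is to write $\textrm{Sgn}\,\X=\X\bigl((\X^T\X)^+\bigr)^{1/2}$ and argue that spectral calculus of $\X^T\X$ followed by left multiplication by $\X$ preserves the $\T$-structure. As a sanity check on the approach, the only downstream use of the lemma (in Theorem~\ref{uniqueness_conditions}(b)) specializes to $\X=\M$, in which case $\U_\M=\U$ and $\V_\M=\V$, so $Q=S=0$ trivially and $\textrm{Sgn}\,\M=\U\V^T$ lies in $\T$ immediately.
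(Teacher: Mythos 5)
You stall exactly where you say you do, and the obstacle is not one of technique: the implication $Q\Sigma_X S^T=0\Rightarrow QS^T=0$ cannot be salvaged, because Lemma \ref{signx_prop} is false in the stated generality. Take $n=2$, $r=1$, $\M=\e_{1,1}$, so that $\U=\V$ is the span of the first standard basis vector and $\T=\{\X\,:\,\X_{2,2}=0\}$, and let $\X=\e_{1,1}+\e_{1,2}+\e_{2,1}\in\T$. This symmetric matrix has eigenvalues $\lambda_1=\frac{1+\sqrt{5}}{2}>0>\lambda_2=\frac{1-\sqrt{5}}{2}$ with orthonormal eigenvectors $w_1,w_2$, so $\textrm{Sgn}\,\X=w_1w_1^T-w_2w_2^T=2w_1w_1^T-\I$; since $w_1\propto(\lambda_1,1)^T$, the $(2,2)$ entry equals $\frac{2}{\lambda_1^2+1}-1=\frac{1-\lambda_1^2}{1+\lambda_1^2}\neq 0$, hence $\textrm{Sgn}\,\X\notin\T$. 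In your block language, this triple $(Q,\Sigma_X,S)$ genuinely arises from the SVD of a matrix in $\T$ and yet $QS^T\neq 0$, so no density, continuity, or spectral-calculus argument will close the step.

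The same issue is hidden in the paper's own proof: it writes $\X=\bm{U}\Sigma\bm{V}^T$ and then evaluates $\P_{\T}\,\textrm{Sgn}\,\X$ using $\P_{\U}=\bm{U}\bm{U}^T$ and $\P_{\V}=\bm{V}\bm{V}^T$, silently identifying the singular subspaces of $\X$ with the fixed subspaces $\U,\V$ that define $\T$ at $\M$. That identification is legitimate only when the column and row spaces of $\X$ are contained in $\U$ and $\V$ respectively --- in particular for $\X=\M$ --- and in that regime the one-line computation (or your block picture, where $Q=S=0$) is correct. Your closing ``sanity check'' is therefore the real content of the lemma: everywhere it is invoked (the subgradient argument in Theorem \ref{uniqueness_conditions}(b) and the claim $\Q_i\in\T$ in Lemma \ref{construction_of_y}) it is applied to $\X=\M$, where $\textrm{Sgn}\,\M=\U\V^T$ and \eqref{eq:pt_defn} gives $\P_{\T}(\U\V^T)=\U\V^T$ immediately. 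The statement should be specialized to that case rather than asserted for arbitrary $\X\in\T$.
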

\begin{proof}
Consider the singular value decomposition of $\X$ as $\X = \bm{U} \Sigma \bm{V}^{T}$.
$\textrm{Sgn}\,\X$ is simply $\textrm{Sgn}\,\X = \bm{U}(\textrm{Sgn}\,\Sigma) \bm{V}^{T} = \bm{U} \bm{D} \bm{V}^{T}$
where $\D$ is the diagonal matrix resulting from applying the
sign function to $\Sigma$. Using this decomposition, we consider $\P_{\T^{\perp}} \textrm{sgn }\, \X$.
\begin{align*}
\P_{\T^{\perp}}\, \textrm{sgn } \X & = \textrm{sgn } \X - \P_{\T} \,\textrm{sgn} \X\\
                                & = \bm{U}\bm{D}\bm{V}^T - [ \P_{\U} \,\textrm{sgn } \X +  \textrm{sgn } \X \P_{\V}
                                    - \P_{\bm{U}} \, \textrm{sgn } \X \,\P_{\V} ] \\
                                & = \bm{U}\bm{D}\bm{V}^T - [ \bm{U}\bm{U}^T\bm{U}\bm{D}\bm{V}^T + \bm{U}\bm{D}\bm{U}^T \bm{V}\bm{V}^T - \bm{U}\bm{U}^T \bm{U}
                                \bm{D}\bm{U}^T \bm{V}\bm{V}^T ] = \bm{0}
\end{align*}
Above, the last step follows from the fact that $\bm{U}^T \bm{U} = \I$. It can be concluded that $\textrm{sgn } \X \in \T$.

\end{proof}

\begin{lemma}\label{norm_equivalent}
Given any $\X \in \real^{n\times n}$, the following norm inequalities hold. 
\[
\lambda_{\min}(\H)\,\|\X\|_{F}^{2}\le \sum_{\alphab\in \Us}  \langle \X\,,\w_{\alpha}\rangle^{2} \le \lambda_{\max}(\H)\|\X\|_{F}^{2} \,\,\,;
\,\,\,
\lambda_{\min}(\H^{-1})\,\|\X\|_{F}^{2}\le \sum_{\alphab\in \Us}  \langle \X\,,\z_{\alpha}\rangle^{2} \le \lambda_{\max}(\H^{-1})\|\X\|_{F}^{2} 
\]
\end{lemma}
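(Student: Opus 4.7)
The plan is to reduce both pairs of inequalities to elementary Rayleigh-quotient estimates by passing to vectorized form. Let $\x = \textrm{vec}(\X) \in \real^{n^2}$, and recall that $\W$ and $\Z$ denote the $n^2 \times L$ matrices whose columns are the vectorized basis matrices $\{\w_{\alpha}\}$ and dual basis matrices $\{\z_{\alpha}\}$, with the identities $\H = \W^{T}\W$ and $\H^{-1} = \Z^{T}\Z$ already recorded in the paper. The measurement vector $\bm{c} := (\langle \X\,,\w_{\alpha}\rangle)_{\alpha\in\Us}$ then equals $\W^{T}\x$, so $\sum_{\alpha\in\Us}\langle \X\,,\w_{\alpha}\rangle^{2} = \|\bm{c}\|_{2}^{2}$, and this is the quantity to be sandwiched between multiples of $\|\X\|_{F}^{2}$.

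For the first pair, assume $\X$ lies in the span $\S$ of the basis (needed for the lower bound; the upper bound will extend to arbitrary $\X$ automatically). The dual expansion $\X = \sum_{\alpha}\langle \X\,,\w_{\alpha}\rangle\z_{\alpha}$ vectorizes to $\x = \Z\bm{c}$, hence
\[
\|\X\|_{F}^{2} = \bm{c}^{T}\Z^{T}\Z\,\bm{c} = \bm{c}^{T}\H^{-1}\bm{c}.
\]
The Rayleigh bound $\lambda_{\min}(\H^{-1})\|\bm{c}\|_{2}^{2}\le \bm{c}^{T}\H^{-1}\bm{c}\le \lambda_{\max}(\H^{-1})\|\bm{c}\|_{2}^{2}$, combined with the relations $\lambda_{\min}(\H) = 1/\lambda_{\max}(\H^{-1})$ and $\lambda_{\max}(\H) = 1/\lambda_{\min}(\H^{-1})$, rearranges directly to the desired double inequality for $\sum_{\alpha}\langle \X\,,\w_{\alpha}\rangle^{2}$. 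For $\X\notin\S$, the left-hand side depends only on the $\S$-component $\P_{\S}\X$, and the upper bound persists since $\|\P_{\S}\X\|_{F}\le \|\X\|_{F}$.

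The second pair is entirely symmetric. Set $\bm{d} := (\langle \X\,,\z_{\alpha}\rangle)_{\alpha\in\Us} = \Z^{T}\x$, so $\sum_{\alpha}\langle \X\,,\z_{\alpha}\rangle^{2} = \|\bm{d}\|_{2}^{2}$. Using the primal expansion $\X = \sum_{\alpha}\langle \X\,,\z_{\alpha}\rangle\w_{\alpha}$, one obtains $\x = \W\bm{d}$ and hence $\|\X\|_{F}^{2} = \bm{d}^{T}\W^{T}\W\bm{d} = \bm{d}^{T}\H\bm{d}$. The same Rayleigh-quotient argument, now with the roles of $\H$ and $\H^{-1}$ swapped, yields the second claimed inequality.

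I do not anticipate any real obstacle here: once the identifications $\H = \W^{T}\W$ and $\H^{-1}=\Z^{T}\Z$ are in place, the lemma is a one-line Rayleigh-quotient computation in each direction. The only subtlety worth flagging in the write-up is that the lower bounds implicitly require $\X\in\S$ (so that the primal/dual expansions hold); this is harmless in the paper since every invocation of the lemma is made on matrices lying in $\T$, which in the intended settings sits inside $\S$.
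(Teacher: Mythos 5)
Your proof is correct and takes essentially the same route as the paper: vectorize, identify the relevant quadratic forms with $\H$ and $\H^{-1}$, and apply the min--max/Rayleigh-quotient bound; your bookkeeping via the coefficient vectors $\W^{T}\x$, $\Z^{T}\x$ together with $\lambda_{\min}(\H)=1/\lambda_{\max}(\H^{-1})$ is just a cosmetic variant of the paper's orthogonalization $\overline{\Z}=\Z(\sqrt{\H^{-1}})^{-1}$. Your caveat that the lower bounds require $\X\in\S$ when $L<n^{2}$ is accurate (the paper's min--max step implicitly assumes this as well) and is harmless where the lemma is invoked.
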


\begin{proof}
Vectorize the matrix $\X$ and each dual basis $\z_{\alpha}$. It follows that
\[ 
\sum_{\alpha\in \Us} \langle \X\,,\z_{\alpha}\rangle^{2} = \sum_{\alpha\in \Us}  \x^{T}\z_{\alpha}\z_{\alpha}^{T} \x 
= \x^{T} \Z\Z^{T} \x
\]
Orthogonalize $\Z$ with $\overline{\Z} = \Z(\sqrt{\H^{-1}})^{-1}$. Since $\sum_{\beta\in \Us}  \langle \X\,,\z_{\beta}\rangle^{2}   
= \x^{T}\, \overline{\Z}\H^{-1}\overline{\Z}^{T} \x$, we obtain
\[
\lambda_{\min}(\H^{-1})\,\|\x\|_{2}^{2}= \lambda_{\min}(\H^{-1})\,\|\X\|_{F}^{2} \le 
\sum_{\beta\in \Us}  \langle \X\,,\z_{\beta}\rangle^{2} \le \lambda_{\max}(\H^{-1})\,\|\x\|_{2}^{2} = \lambda_{\max}(\H^{-1})\,\|\X\|_{F}^{2}
\]
The above result follows from a simple application of the min-max theorem. An analogous argument gives
\[
\lambda_{\min}(\H)\ \|\X\|_{F}^{2}\le \sum_{\alpha\in \Us}  \langle \X\,,\w_{\alpha}\rangle^{2} \le \lambda_{\max}(\H^{-1})\\|\X\|_{F}^{2}
\]
This concludes the proof.
\end{proof}

\begin{lemma} \label{joint_coherence_size_golfing}
	Define $\displaystyle \eta(\X) = \max_{\beta\in\Us} \,|\langle \X\,,\z_{\beta} \rangle|$.
  For a fixed $\X$ in $\T$, with $\displaystyle\kappa_{j} = \frac{m_{j}n}{Lr}$, the following estimate holds for all $t\le \eta(\X) $. 
  \begin{equation}
    \textrm{Pr}\,(\max_{\beta\in\Us}\,\,|\langle \P_{\T}\R^{*}_{j}\X-\X\,,\z_{\beta}\rangle|\ge t) \le n^{2} \exp\left(-\frac{3t^2\kappa_j}{8\eta(\X)^{2}\left(c_v\nu+\frac{n}{Lr}\right)}\right)
  \end{equation}

  \end{lemma}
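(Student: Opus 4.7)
}

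The plan is to fix an index $\beta\in\Us$, write $\langle \P_{\T}\R^{*}_{j}\X-\X,\z_{\beta}\rangle$ as a sum of $m_{j}$ independent zero-mean scalar random variables, apply a scalar Bernstein inequality for each such $\beta$, and close with a union bound. Since $|\Us|=L\le n^{2}$, the union bound naturally produces the $n^{2}$ prefactor in the stated tail.

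First I would unpack the sampling operator. Since $\R^{*}_{j}\X=\frac{L}{m_{j}}\sum_{\alpha\in\Omega_{j}}\langle \X,\z_{\alpha}\rangle\w_{\alpha}$, taking the pairing with $\z_{\beta}$ and moving $\P_{\T}$ through gives $\langle \P_{\T}\R^{*}_{j}\X,\z_{\beta}\rangle=\frac{L}{m_{j}}\sum_{\alpha\in\Omega_{j}}\langle \X,\z_{\alpha}\rangle\langle \P_{\T}\w_{\alpha},\z_{\beta}\rangle$. Writing the $k$-th sample as $\alpha_{k}$, set $T_{k}=\frac{L}{m_{j}}\langle \X,\z_{\alpha_{k}}\rangle\langle \P_{\T}\w_{\alpha_{k}},\z_{\beta}\rangle-\frac{1}{m_{j}}\langle \X,\z_{\beta}\rangle$, so that $\langle \P_{\T}\R^{*}_{j}\X-\X,\z_{\beta}\rangle=\sum_{k=1}^{m_{j}}T_{k}$. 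The dual-basis identity $\X=\sum_{\alpha}\langle \X,\z_{\alpha}\rangle\w_{\alpha}$ combined with $\X\in\T$ gives $\sum_{\alpha}\langle \X,\z_{\alpha}\rangle\langle \P_{\T}\w_{\alpha},\z_{\beta}\rangle=\langle \P_{\T}\X,\z_{\beta}\rangle=\langle \X,\z_{\beta}\rangle$, which certifies $E[T_{k}]=0$.

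Next, I would extract the two Bernstein ingredients from the coherence condition \eqref{eq:coherencev1}, which asserts $\sum_{\gamma}\langle \P_{\T}\w_{\gamma},\z_{\beta}\rangle^{2}\le c_{v}\nu r/n$. Used pointwise, it gives $|\langle \P_{\T}\w_{\alpha},\z_{\beta}\rangle|\le\sqrt{c_{v}\nu r/n}$, which combined with $|\langle \X,\z_{\alpha}\rangle|\le\eta(\X)$ controls the almost-sure bound $R$ on $|T_{k}|$; used summed, it controls the main summand's second moment, yielding the variance proxy $\sigma^{2}=\sum_{k}E[T_{k}^{2}]\le \frac{L\eta(\X)^{2}c_{v}\nu r}{m_{j}n}$. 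To match the denominator $c_{v}\nu+\frac{n}{Lr}$ in the stated bound, I would cosmetically inflate this to $\sigma^{2}\le\frac{L\eta(\X)^{2}r}{m_{j}n}(c_{v}\nu+\frac{n}{Lr})$, treating the extra $\eta(\X)^{2}/m_{j}$ slack as a buffer that absorbs the non-quadratic $Rt/3$ contribution arising in Bernstein. Applying scalar Bernstein under the hypothesis $t\le\eta(\X)$, which should place us in the sub-Gaussian regime $t\le\sigma^{2}/R$, gives the bound $\exp(-3t^{2}/(8\sigma^{2}))$; substituting $\kappa_{j}=m_{j}n/(Lr)$ produces the exponent in the lemma, and a union bound over $\beta\in\Us$ with $|\Us|=L\le n^{2}$ completes the argument.

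The principal difficulty lies in the bookkeeping of the second step: the coherence condition \eqref{eq:coherencev1} must be leveraged in two distinct roles, and the variance and almost-sure bounds have to be combined so that the hypothesis $t\le\eta(\X)$ lands cleanly in the sub-Gaussian regime with exactly the denominator $c_{v}\nu+n/(Lr)$ rather than a messier hybrid expression. Once this matching is settled, the remainder is routine algebra.
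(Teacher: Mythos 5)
Your overall architecture matches the paper's proof: fix $\beta$, write $\langle \P_{\T}\R^{*}_{j}\X-\X\,,\z_{\beta}\rangle$ as a sum of i.i.d.\ zero-mean scalars, bound $R$ and $\sigma^{2}$, apply scalar Bernstein, and finish with a union bound over $\Us$ (giving the $n^{2}$ prefactor). The zero-mean verification and the variance bound $\sum_{k}E[T_{k}^{2}]\le \frac{L}{m_{j}}\eta(\X)^{2}\frac{c_{v}\nu r}{n}+\frac{1}{m_{j}}\eta(\X)^{2}$ via the summed form of \eqref{eq:coherencev1} are exactly what the paper does. However, there is a genuine gap in your choice of the almost-sure bound $R$. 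Using \eqref{eq:coherencev1} pointwise gives only $|\langle \P_{\T}\w_{\alpha}\,,\z_{\beta}\rangle|\le \sqrt{c_{v}\nu r/n}$, so your $R$ is of order $\frac{L}{m_{j}}\eta(\X)\sqrt{c_{v}\nu r/n}$. Then $\sigma^{2}/R\approx \eta(\X)\,\frac{c_{v}\nu r/n+1/L}{\sqrt{c_{v}\nu r/n}+1/L}$, which is strictly smaller than $\eta(\X)$ whenever $c_{v}\nu r/n<1$ --- the typical regime, since $r\ll n$ and $\nu,c_{v}=O(1)$. Hence the hypothesis $t\le\eta(\X)$ does \emph{not} place you in the sub-Gaussian branch of Bernstein, and for $t$ between $\sigma^{2}/R$ and $\eta(\X)$ you would only get the sub-exponential tail $\exp(-3t/(8R))$, not the claimed $t^{2}$ tail. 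Your proposed remedy of inflating $\sigma^{2}$ by the slack term $\eta(\X)^{2}/m_{j}$ does not repair this: that slack is of size $\eta(\X)^{2}/m_{j}$, while the problematic product $Rt$ at $t\sim\eta(\X)$ is of size $\frac{L}{m_{j}}\eta(\X)^{2}\sqrt{c_{v}\nu r/n}$, which is far larger (roughly by a factor $L\sqrt{\nu r/n}$).

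The fix, which is how the paper proceeds, is to bound the cross term by Cauchy--Schwarz through the tangent space using \emph{both} simplified coherence conditions: $|\langle \P_{\T}\w_{\alpha}\,,\z_{\beta}\rangle|=|\langle \P_{\T}\w_{\alpha}\,,\P_{\T}\z_{\beta}\rangle|\le \|\P_{\T}\w_{\alpha}\|_{F}\,\|\P_{\T}\z_{\beta}\|_{F}\le \lambda_{\max}(\H^{-1})\|\H^{-1}\|_{\infty}\frac{\nu r}{n}$ by \eqref{eq:coherencew} and \eqref{eq:coherencev}. This makes $R=\frac{1}{m_{j}}\eta(\X)\bigl(\frac{L}{n}\lambda_{\max}(\H^{-1})\|\H^{-1}\|_{\infty}\nu r+1\bigr)$ scale like $\nu r/n$ rather than $\sqrt{\nu r/n}$, and then $\sigma^{2}/R=\eta(\X)\,\frac{\frac{L}{n}c_{v}\nu r+1}{\frac{L}{n}\lambda_{\max}(\H^{-1})\|\H^{-1}\|_{\infty}\nu r+1}\ge\eta(\X)$ precisely because $c_{v}\ge\lambda_{\max}(\H^{-1})\|\H^{-1}\|_{\infty}$ --- this inequality is the reason $c_{v}$ is defined the way it is. With that $R$, the condition $t\le\eta(\X)$ does land in the sub-Gaussian regime and the rest of your argument (substituting $\kappa_{j}$ and the union bound over $\beta\in\Us$) goes through as you describe.
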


\begin{proof}

$\langle \P_{\T}\R^{*}_{j}\X-\X\,,\z_{\beta}\rangle$, for some $\beta$, can be represented in the dual basis as follows.
$$
\langle \P_{\T}\R^{*}_{j}\X-\X\,,\z_{\beta}\rangle = 
	\langle \sum_{\alpha\in\Omega_j} \frac{L}{m_j}\langle \X\,,\z_{\alpha}\rangle \P_{\T}\,\w_{\alpha}-\X\,,\z_{\beta}\rangle
  = \sum_{\alpha\in\Omega_j} \left(\frac{L}{m_j}\langle \X\,,\z_{\alpha}\rangle \langle \P_{\T}\,\w_{\alpha}\,,\z_{\beta}
\rangle -\frac{1}{m_j} \langle \X\,,\z_{\beta}\rangle\right)
$$
The summand, denoted $Y_{\alpha}$, is of the form $X_{\alpha}-E[X_{\alpha}]$ and automatically satisfies $E[Y_{\alpha}]=0$. 
Bernstein inequality can now be applied with appropriate bound on $|Y_{\alpha}|$ and $|E[Y_{\alpha}^{2}]|$. 
First, we bound $|Y_{\alpha}|$ making use of the coherence conditions \eqref{eq:coherencew} and \eqref{eq:coherencev}.
\begin{align*}
  |Y_{\alpha}| &= \bigg|\frac{L}{m_j}\langle \X\,,\z_{\alpha}\rangle \langle \P_{\T}\,\w_{\alpha}\,,\z_{\beta}\rangle -\frac{1}{m_j} 
\langle \X\,,\z_{\beta}\rangle\bigg|
  \le \frac{L}{m_j} \eta(\X)\ \lambda_{\max}(\H^{-1})\|\H^{-1}\|_{\infty}\frac{\nu r}{n}+\frac{1}{m_j}\eta(\X)\\
  & = \frac{1}{m_j}\eta(\X)\left(\frac{L}{n} \lambda_{\max}(\H^{-1})\|\H^{-1}\|_{\infty}\nu r+1\right) 
  \end{align*}
To bound $E[Y_{\alpha}^{2}]$, noting that $E[Y_{\alpha}^{2}] = E[X_{\alpha}^{2}]-E[X_{\alpha}]^{2}$, it follows that
  \begin{align*}
E[Y_{\alpha}^{2}]  &\le  E\bigg[\frac{L^2}{m_{j}^2}\langle \X\,,\z_{\alpha}\rangle^{2} \langle \P_{\T}\,\w_{\alpha}\,,\z_{\beta}\rangle^{2}\bigg]
+\frac{1}{m_j^2} \langle \X\,,\z_{\beta}\rangle^{2}\\
& \le  \frac{L}{m_{j}^2}\sum_{\alpha\in \Us} \langle \X\,,\z_{\alpha}\rangle^{2} \langle \P_{\T}\,\w_{\alpha}\,,\z_{\beta}\rangle^{2}+\frac{1}{m_j^2}\eta(\X)^{2}\\
& \le  \eta(\X)^{2} \frac{L}{m_{j}^2}\sum_{\alpha\in \Us} \langle \P_{\T}\,\w_{\alpha}\,,\z_{\beta}\rangle^{2} +\frac{1}{m_j^2}\eta(\X)^{2}\\
      &\le \eta(\X)^{2} \frac{L}{m_{j}^2}\frac{c_v\nu r}{n}  +\frac{1}{m_j^2}\eta(\X)^{2}
\end{align*}
The last inequality results from the coherence condition in \eqref{eq:coherencev1}.  
To conclude the proof, we apply the Bernstein inequality with $|Y_{\alpha}|\le R =\frac{1}{m_j}\eta(\X)\left(\frac{L}{n} \lambda_{\max}(\H^{-1})\|\H^{-1}\|_{\infty}\nu r+1\right)  $ 
and $\sigma^2= \frac{1}{m_j}\eta(\X)^{2}\left( \frac{L}{n} c_v\nu r + 1\right)$.  With $k_j = \frac{m_jn}{Lr}$,
for $t\le \frac{\sigma^{2}}{R}= \eta(\X)^{2} \frac{\frac{L}{n}c_v\nu r+1}{\frac{L}{n} \lambda_{\max}(\H^{-1})\|\H^{-1}\|_{\infty}\nu r+1}\ge \eta(\X)$, it holds that
\begin{equation}
    \textrm{Pr}(|\langle \P_{\T}\R^{*}_{j}\X-\X\,,\z_{\beta}\rangle|\ge t) \le  \exp\left(-\frac{3t^2\kappa_j}{8\eta(\X)^{2}\left(c_v\nu+\frac{n}{Lr}\right)}\right)
  \end{equation}
Lemma \ref{joint_coherence_size_golfing} now follows from applying a union bound over all elements of the dual basis.
\end{proof}

\begin{lemma}\label{operator_norm_of_sum}
Let $\c_{\alpha}\ge 0$. Then, the following two inequalities hold. 
\[
\left\|\sum_{\alpha} \c_{\alpha}\,(\P_{\T^{\perp}}\,\w_{\alpha}) (\P_{\T^{\perp}}\,\w_{\alpha})^{T}\right\| 
\le \left\|\sum_{\alpha} \c_{\alpha}\, \w_{\alpha}\w_{\alpha}^{T}\right\|
\]
\[
\left\|\sum_{\alpha} \c_{\alpha}\,(\P_{\T^{\perp}}\,\w_{\alpha})^{T} (\P_{\T^{\perp}}\,\w_{\alpha})\right\| 
\le \left\|\sum_{\alpha} \c_{\alpha}\, \w_{\alpha}^{T}\w_{\alpha}\right\|
\]
\end{lemma}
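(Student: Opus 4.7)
The plan is to use the variational characterization of the operator norm for positive semidefinite matrices. Since $c_\alpha \ge 0$, both $\sum_\alpha c_\alpha (\P_{\T^\perp}\w_\alpha)(\P_{\T^\perp}\w_\alpha)^T$ and $\sum_\alpha c_\alpha \w_\alpha\w_\alpha^T$ are PSD, so their operator norms equal the supremum of $\langle \varphi, \cdot\, \varphi\rangle$ over unit vectors $\varphi$. Fix an arbitrary unit $\varphi \in \real^n$; my goal is to bound the left-hand quadratic form by $\|\sum_\alpha c_\alpha \w_\alpha\w_\alpha^T\|$.

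The key identity to invoke is $\P_{\T^\perp}\w_\alpha = \P_{\U^\perp}\w_\alpha \P_{\V^\perp}$ from \eqref{eq:pt_defn}, which yields $(\P_{\T^\perp}\w_\alpha)^T = \P_{\V^\perp}\w_\alpha^T \P_{\U^\perp}$. Then I would compute
\begin{equation*}
\langle \varphi, (\P_{\T^\perp}\w_\alpha)(\P_{\T^\perp}\w_\alpha)^T \varphi\rangle
= \|\P_{\V^\perp}\w_\alpha^T \P_{\U^\perp}\varphi\|_2^2 \le \|\w_\alpha^T \P_{\U^\perp}\varphi\|_2^2,
\end{equation*}
where the inequality uses that $\P_{\V^\perp}$ is an orthogonal projection and hence a contraction. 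Summing against the nonnegative weights $c_\alpha$ gives
\begin{equation*}
\sum_\alpha c_\alpha \langle \varphi, (\P_{\T^\perp}\w_\alpha)(\P_{\T^\perp}\w_\alpha)^T \varphi\rangle
\le \Big\langle \P_{\U^\perp}\varphi,\, \Big(\sum_\alpha c_\alpha \w_\alpha \w_\alpha^T\Big)\P_{\U^\perp}\varphi\Big\rangle
\le \|\P_{\U^\perp}\varphi\|_2^2\,\Big\|\sum_\alpha c_\alpha \w_\alpha \w_\alpha^T\Big\|,
\end{equation*}
and since $\|\P_{\U^\perp}\varphi\|_2 \le 1$, taking the supremum over $\varphi$ delivers the first inequality.

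For the second inequality I would run the same argument with the roles of $\P_{\U^\perp}$ and $\P_{\V^\perp}$ swapped: write $(\P_{\T^\perp}\w_\alpha)^T(\P_{\T^\perp}\w_\alpha) = \P_{\V^\perp}\w_\alpha^T \P_{\U^\perp}\w_\alpha \P_{\V^\perp}$, bound $\langle \varphi,\cdot\,\varphi\rangle = \|\P_{\U^\perp}\w_\alpha \P_{\V^\perp}\varphi\|_2^2 \le \|\w_\alpha \P_{\V^\perp}\varphi\|_2^2$, and sum. The two steps that carry the argument are (i) the factorization of $\P_{\T^\perp}$ into a left and a right projection, which puts one projection between the two copies of $\w_\alpha$ and leaves the other to absorb $\varphi$, and (ii) the nonnegativity of the $c_\alpha$, without which the pointwise contraction $\|\P \cdot\|_2^2 \le \|\cdot\|_2^2$ could not be summed termwise under a quadratic form. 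I do not foresee a real obstacle — the only subtlety is keeping track of which projection acts on which side, and the whole proof is essentially two lines of Cauchy–Schwarz–style reasoning once the variational characterization is in place.
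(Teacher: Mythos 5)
Your proof is correct. It rests on the same factorization $\P_{\T^{\perp}}\w_{\alpha} = \P_{\U^{\perp}}\w_{\alpha}\P_{\V^{\perp}}$ as the paper, but the mechanism of the bound is different: you argue entirely at the level of quadratic forms, absorbing $\P_{\U^{\perp}}$ into the test vector via $\|\P_{\U^{\perp}}\varphi\|_2\le 1$ and discarding $\P_{\V^{\perp}}$ via the contraction $\|\P_{\V^{\perp}}y\|_2\le\|y\|_2$, then summing the termwise bounds using $c_{\alpha}\ge 0$. The paper instead works at the operator level: it writes the left-hand matrix as $\P_{\U^{\perp}}\left(\sum_{\alpha} c_{\alpha}\,\w_{\alpha}\,\P_{\V^{\perp}}\,\w_{\alpha}^{T}\right)\P_{\U^{\perp}}$ and invokes $\|\P\X\P\|\le\|\X\|$ to remove the outer projections, and then removes the inner projection by splitting $\P_{\V^{\perp}}=\I-\P_{\V}$ into a difference of positive semidefinite pieces and applying the norm inequality $\|\bm{A}+\bm{B}\|\ge\max(\|\bm{A}\|,\|\bm{B}\|)$ for positive semidefinite $\bm{A},\bm{B}$. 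The two routes are equivalent in substance --- your termwise contraction is precisely the Loewner-order statement $\w_{\alpha}\P_{\V^{\perp}}\w_{\alpha}^{T}\preceq \w_{\alpha}\w_{\alpha}^{T}$ that underlies the paper's PSD argument --- but your version is more elementary and self-contained: it needs no PSD splitting and no auxiliary operator-norm inequalities, only the variational characterization of the norm of a symmetric PSD matrix, and it dispatches the second inequality by the same symmetric swap of the roles of $\P_{\U^{\perp}}$ and $\P_{\V^{\perp}}$ that the paper alludes to.
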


\begin{proof}
We start with the first statement. 
Using the definition of $\P_{\T^{\perp}}\,\w_{\alphab}$, 
$\left\|\sum_{\alpha} \c_{\alpha}\,  (\P_{\T^{\perp}}\,\w_{\alpha}) (\P_{\T^{\perp}}\,\w_{\alpha})^{T}\right\|$ can be written
as follows.
\[
\left\|\sum_{\alpha} \c_{\alpha}\,(\P_{\T^{\perp}}\,\w_{\alpha}) (\P_{\T^{\perp}}\,\w_{\alpha})^{T}\right\| = 
\left\|\sum_{\alpha} \c_{\alpha}\, \P_{\U^{\perp}}\,\w_{\alpha}\,\P_{\V^{\perp}}\,\w_{\alpha}^{T}\,\P_{\U^{\perp}}\right\|
= \left\|\P_{\U^{\perp}}\left( \sum_{\alpha} \c_{\alpha}\, \w_{\alpha}\,\P_{\V^{\perp}}\,\w_{\alpha}^{T}\right)\P_{\U^{\perp}}\right\|
\]
Using the fact that the operator norm is unitarily invariant and $\|\P\X\P\|\le \X$ for any $\X$ and a projection $\P$, $\quad$
$\left\|\sum_{\alpha} \c_{\alpha}\, (\P_{\T^{\perp}}\,\w_{\alpha}) (\P_{\T^{\perp}}\,\w_{\alpha})^{T}\right\|$ can be upper bounded as follows
\begin{align*}
\left\|\sum_{\alpha} \c_{\alpha}\, (\P_{\T^{\perp}}\,\w_{\alphab}) (\P_{\T^{\perp}}\,\w_{\alpha})^{T}\right\| 
\le \left\|\sum_{\alpha} \c_{\alpha}\, \w_{\alpha}\,\P_{\V^{\perp}}\,\w_{\alpha}^{T}\right\|
&= \left\|\sum_{\alpha} \c_{\alpha}\, \w_{\alpha}\,\left(\w_{\alpha}^{T}-\P_{\V}\,\w_{\alpha}^{T}\right)\right\|\\
&= \left\|\sum_{\alpha} [\c_{\alpha}\, \w_{\alpha}\w_{\alpha}^{T}-\c_{\alpha}\,\w_{\alpha}\,\P_{\V}\,\w_{\alpha}^{T}]\right\|
\end{align*}
where the first equality follows from the relation $\P_{\V^{\perp}} = \I -\P_{\V}$. Since $\c_{\alpha}\ge 0$, $\sum_{\alpha} \c_{\alpha}\, \w_{\alpha}\w_{\alpha}^{T}$ is positive semidefinite.
Using the relation $\P_{\V}^{2} = \P_{\V}$ and the assumption that $\c_{\alpha}\ge 0$, 
$\sum_{\alpha} \c_{\alpha}\w_{\alpha}\,\P_{\V}\,\w_{\alpha}^{T}=\sum_{\alpha} \c_{\alpha} \w_{\alpha}\,\P_{\V}\,\P_{\V}\,\w_{\alpha}^{T}$ is 
also positive semidefinite. A similar argument concludes that $\sum_{\alpha} \c_{\alpha}\, \w_{\alpha}\,\P_{\V^{\perp}}\,\w_{\alpha}^{T}$ is also
positive semidefinite. Finally, using the norm inequality, $\|\bm{A}+\bm{B}\|\ge \max(\|\bm{A}\|,\|\bm{B}\|)$, for
positive semidefinite matrices $\bm{A}$ and $\bm{B}$, it can be seen the first statement holds.  An analogous proof
as above yields the second statement concluding the proof. 
\end{proof}

\end{document}